\begin{document}
\title{Counting hypertriangles through hypergraph orientations}

\author{Daniel Paul-Pena}
\affiliation{%
  \institution{University of California, Santa Cruz}
  \country{United States}
}
\email{dpaulpen@ucsc.edu}

\author{Vaishali Surianarayanan}
\affiliation{%
  \institution{University of California, Santa Cruz}
    \country{United States}
}
\email{vaishalisurianarayanan@gmail.com}

\author{Deeparnab Chakrabarty}
\affiliation{%
  \institution{Dartmouth}
    \country{United States}
}
\email{deeparnab@dartmouth.edu}

\author{C. Seshadhri}
\affiliation{%
  \institution{University of California, Santa Cruz}
    \country{United States}
}
\email{sesh@ucsc.edu}

\newcommand{\Daniel}[1]{{\color{red} DPP: #1}}
\newcommand{\Vaishali}[1]{{\color{blue}Vaishali: #1}}
\newcommand{\DeepC}[1]{{\color{brown}DeepC: #1}}
\newcommand{\Sesh}[1]{{\color{magenta}Sesh: #1}}

\newcommand{\ct}[1]{c(#1)}

\newcommand{\degr}[1]{d(#1)}
\newcommand{\edeg}[1]{d'(#1)}
\newcommand{\Nparents}[1]{d_a'(#1)}
\newcommand{\Nchildren}[1]{d_d'(#1)}
\newcommand{\Nintersect}[1]{d_i'(#1)}
\newcommand{\degen}{\alpha}
\newcommand{\edegen}{\kappa}
\newcommand{\tdegen}{\overline{\kappa}}
\newcommand{\trimmed}[2]{#1\langle #2 \rangle}
\newcommand{\degeneracy}{T-degeneracy}
\newcommand{\hyperDegen}{hyperedge degeneracy}
\newcommand{\HyperDegen}{Hyperedge degeneracy}
\newcommand{\ws}{WS^*}
\newcommand{\multihyper}{multi-trimmed subhypergraph}
\newcommand{\Multihyper}{Multi-trimmed subhypergraph}
\newcommand{\dah}{DAH}

\newcommand{\din}[1]{d^-_{#1}}
\newcommand{\dout}[1]{d^+_{#1}}
\newcommand{\Nout}[1]{N^+_{#1}}
\newcommand{\parents}[1]{A(#1)}
\newcommand{\children}[1]{D(#1)}
\newcommand{\sink}[1]{\ell(#1)}
\newcommand{\source}[1]{f(#1)}
\newcommand{\localIndegrees}{li}
\newcommand{\compCounts}{\gamma}
\newcommand{\cT}{\mathcal{T}}
\newcommand{\rank}{r}
\newcommand{\type}{\tau}
\newcommand{\inputsize}{h}

\newcommand{\todo}[1]{{\bf undefined comment} #1}

\newcommand{\Hom}[2]{\mathrm{Hom}_{#2}(#1)}
\newcommand{\Homr}[2]{\mathrm{rHom}_{#2}(#1)}
\newcommand{\Sub}[2]{\mathrm{Sub}_{#2}(#1)}
\newcommand{\IndSub}[2]{\mathrm{IndSub}_{#2}(#1)}

\newcommand{\ColHom}[2]{\mathrm{\text{Col-}Hom}_{#2}(#1)}

\newtheorem{claim}[theorem]{Claim}


\newcommand{\cA}{{\cal A}}
\newcommand{\cB}{\mathcal{B}}
\newcommand{\cC}{{\cal C}}
\newcommand{\cD}{\mathcal{D}}
\newcommand{\cE}{{\cal E}}
\newcommand{\cF}{\mathcal{F}}
\newcommand{\cG}{\mathcal{G}}
\newcommand{\cH}{{\cal H}}
\newcommand{\cI}{{\cal I}}
\newcommand{\cJ}{{\cal J}}
\newcommand{\cL}{{\cal L}}
\newcommand{\cM}{{\cal M}}
\newcommand{\cO}{\mathcal{O}}
\newcommand{\cP}{\mathcal{P}}
\newcommand{\cQ}{\mathcal{Q}}
\newcommand{\cR}{{\cal R}}
\newcommand{\cS}{\mathcal{S}}
\newcommand{\cU}{{\cal U}}
\newcommand{\cV}{{\cal V}}
\newcommand{\cW}{{\cal W}}
\newcommand{\cX}{{\cal X}}

\newcommand{\R}{\mathbb R}
\newcommand{\F}{\mathbb F}
\newcommand{\Z}{{\mathbb Z}}
\newcommand{\eps}{\varepsilon}
\newcommand{\lam}{\lambda}
\newcommand{\sgn}{\mathrm{sgn}}
\newcommand{\poly}{\mathrm{poly}}
\newcommand{\polylog}{\mathrm{polylog}}
\newcommand{\littlesum}{\mathop{{\textstyle \sum}}}
\newcommand{\half}{1/2}
\newcommand{\la}{\langle}
\newcommand{\ra}{\rangle}
\newcommand{\wh}{\widehat}
\newcommand{\wt}{\widetilde}
\newcommand{\calE}{{\cal E}}
\newcommand{\calL}{{\cal L}}
\newcommand{\calF}{{\cal F}}
\newcommand{\calW}{{\cal W}}
\newcommand{\calH}{{\cal H}}
\newcommand{\calN}{{\cal N}}
\newcommand{\calO}{{\cal O}}
\newcommand{\calP}{{\cal P}}
\newcommand{\calV}{{\cal V}}
\newcommand{\calS}{{\cal S}}
\newcommand{\calT}{{\cal T}}
\newcommand{\calD}{{\cal D}}
\newcommand{\calC}{{\cal C}}
\newcommand{\calX}{{\cal X}}
\newcommand{\calY}{{\cal Y}}
\newcommand{\calZ}{{\cal Z}}
\newcommand{\calA}{{\cal A}}
\newcommand{\calB}{{\cal B}}
\newcommand{\calG}{{\cal G}}
\newcommand{\calI}{{\cal I}}
\newcommand{\calJ}{{\cal J}}
\newcommand{\calR}{{\cal R}}
\newcommand{\calK}{{\cal K}}
\newcommand{\calU}{{\cal U}}
\newcommand{\barx}{\overline{x}}
\newcommand{\bary}{\overline{y}}

\newcommand{\bone}{{\bf 1}}
\newcommand{\btwo}{{\bf 2}}
\newcommand{\bthree}{{\bf 3}}
\newcommand{\ba}{\boldsymbol{a}}
\newcommand{\bb}{\boldsymbol{b}}
\newcommand{\bp}{\boldsymbol{p}}
\newcommand{\bt}{\boldsymbol{t}}
\newcommand{\bu}{\boldsymbol{u}}
\newcommand{\bv}{\boldsymbol{v}}
\newcommand{\bw}{\boldsymbol{w}}
\newcommand{\bx}{\boldsymbol{x}}
\newcommand{\by}{\boldsymbol{y}}
\newcommand{\bz}{\boldsymbol{z}}
\newcommand{\br}{\boldsymbol{r}}
\newcommand{\bh}{\boldsymbol{h}}

\newcommand{\bA}{\boldsymbol{A}}
\newcommand{\bC}{\boldsymbol{C}}
\newcommand{\bD}{\boldsymbol{D}}
\newcommand{\bG}{\boldsymbol{G}}
\newcommand{\bH}{\boldsymbol{H}}
\newcommand{\bR}{\boldsymbol{R}}
\newcommand{\bS}{\boldsymbol{S}}
\newcommand{\bT}{\boldsymbol{T}}
\newcommand{\bP}{\boldsymbol{P}}
\newcommand{\bX}{\boldsymbol{X}}
\newcommand{\bY}{\boldsymbol{Y}}
\newcommand{\bZ}{\boldsymbol{Z}}
\newcommand{\dc}{d}
\newcommand{\ds}{d^{(s)}}
\newcommand{\Gs}{G^{(s)}}
\newcommand{\Gc}{G}
\newcommand{\NN}{\mathbb{N}}
\newcommand{\RR}{\mathbb{R}}
\newcommand{\QQ}{\mathbb{Q}}
\newcommand{\vs}{v^{(s)}}
\newcommand{\vc}{v}
\newcommand{\abs}[1]{\left\lvert #1 \right\rvert}
\newcommand{\norm}[1]{\left\lVert #1 \right\rVert}
\newcommand{\ceil}[1]{\lceil#1\rceil}
\newcommand{\Exp}{\EX}
\newcommand{\floor}[1]{\lfloor#1\rfloor}
\newcommand{\cei}[1]{\lceil#1\rceil}

\newcommand{\EX}{\hbox{\bf E}}
\newcommand{\var}{\hbox{\bf var}}
\newcommand{\prob}{{\rm Prob}}

\newcommand{\gset}{Y}
\newcommand{\gcol}{{\cal Y}}

\newcommand{\eqdef}{:=}

\newcommand{\otilde}{\widetilde{O}}
\newcommand{\bigO}[1]{{O\left( #1 \right)}}

\newcommand{\wgt}{\mathrm{wt}}

\newcommand{\hypermb}{{\tt Hyper-MatulaBeck}}
\newcommand{\edgedeg}{{\tt Compute-Hyperedge-Degrees}}
\newcommand{\ancestordescendant}{{\tt Compute-Ancestor-Descendant}}
\newcommand{\trianglebasedalg}{{\tt Compute-Triangle-Based-Patterns}}
\newcommand{\containedalg}{{\tt Compute-$2$-$3$}}
\newcommand{\starsalg}{{\tt Compute-Stars}}
\newcommand{\extstarsalg}{{\tt Compute-Extended-Stars}}
\newcommand{\openalg}{{\tt Compute-$\compCounts$}}
\newcommand{\computelid}{{\tt Compute-$\localIndegrees$}}


\newcommand{\Sec}[1]{\S \ref{sec:#1}} 
\newcommand{\Eqn}[1]{\hyperref[eq:#1]{(\ref*{eq:#1})}} 
\newcommand{\Fig}[1]{{Fig.\,\ref{fig:#1}}} 
\newcommand{\Tab}[1]{\hyperref[tab:#1]{Tab.\,\ref*{tab:#1}}} 
\newcommand{\Table}[1]{\hyperref[tab:#1]{Table\,\ref*{tab:#1}}} 
\newcommand{\Thm}[1]{\hyperref[thm:#1]{Theorem\,\ref*{thm:#1}}} 
\newcommand{\Fact}[1]{\hyperref[fact:#1]{Fact\,\ref*{fact:#1}}} 
\newcommand{\Lem}[1]{\hyperref[lem:#1]{Lemma\,\ref*{lem:#1}}} 
\newcommand{\Prop}[1]{\hyperref[prop:#1]{Prop.~\ref*{prop:#1}}} 
\newcommand{\Cor}[1]{\hyperref[cor:#1]{Corollary~\ref*{cor:#1}}} 
\newcommand{\Conj}[1]{\hyperref[conj:#1]{Conjecture~\ref*{conj:#1}}} 
\newcommand{\Def}[1]{\hyperref[def:#1]{Definition~\ref*{def:#1}}} 
\newcommand{\Alg}[1]{\hyperref[alg:#1]{Alg.~\ref*{alg:#1}}} 
\newcommand{\Clm}[1]{\hyperref[clm:#1]{Claim~\ref*{clm:#1}}} 
\newcommand{\Obs}[1]{\hyperref[obs:#1]{Observation~\ref*{obs:#1}}} 
\newcommand{\Rem}[1]{\hyperref[rem:#1]{Remark~\ref*{rem:#1}}} 
\newcommand{\Con}[1]{\hyperref[con:#1]{Construction~\ref*{con:#1}}} 
\newcommand{\Step}[1]{\hyperref[step:#1]{Step~\ref*{step:#1}}} 
\newcommand{\Assumption}[1]{\hyperref[assm:#1]{Assumption\,\ref*{assm:#1}}} 

\newcommand{\baseline}{\textsf{\textit{MoCHy-E}}}
\newcommand{\baselineadv}{\textit{Exact-adv}}
\newcommand{\ditch}{\textsf{DITCH}}

\begin{abstract}
	Counting the number of small patterns  is a central task in network analysis. While this problem is well studied for graphs, many real-world datasets are  naturally modeled as hypergraphs, motivating the need for efficient hypergraph motif counting algorithms. In particular, we study the problem of counting hypertriangles - collections of three pairwise-intersecting hyperedges. These hypergraph patterns have a rich structure with multiple distinct intersection patterns unlike graph triangles.
	
	Inspired by classical graph algorithms based on orientations and degeneracy, we develop a theoretical framework that generalizes these concepts to hypergraphs and yields provable algorithms for hypertriangle counting. We implement these ideas in DITCH (Degeneracy Inspired Triangle Counter for Hypergraphs) and show experimentally that it is 10-100x faster and more memory efficient than existing state-of-the-art methods.
\end{abstract}

\maketitle

\textbf{Code Availability:}
\\
The source code have been made available at \url{https://github.com/daniel-paul/DITCH}.

\section{Introduction}
Counting the occurrences of patterns (also known as graphlets~\cite{AhNe+15} and motifs~\cite{Milo2002, Milo2004}) in graphs is a fundamental primitive in network analysis with numerous applications ranging from social science~\cite{HL76,Fr88,Gr73} to biology~\cite{PrzuljCJ04,MiNgHaPr10,MilenkovicP08}.
Consequently, this problem has come under a lot of attention both in theory and practice~\cite{Lo67, DiSeTh02, FlGr04, DaJo04, Lo12, AhNeRo+15, CuDeMa17, PiSeVi17, SeTi19, CuNe25, DoMaWe25} (see survey~\cite{SeTi19}), with even the simplest non-trivial 
version of triangle counting having its own line of research~\cite{BaKuSi02,JoGh05,ScWa05,ScWa05b,BuFrLe+06,Co09,SuVa11,JhSePi13,PaTaTi+13,KoPiPl+14,McVoVu16,BeCh17,TuTu19}.

Recently, there has been a focus on modeling group-wise interactions in complex systems as {\em hypergraphs}~\cite{VeBeKl20,VeBeKl22,AnCoPo+23,CaDeFa+23,LeBuEl+25} rather than graphs.
Indeed, many real-world graphs are constructed from hypergraphs, and applications benefit from analyzing them directly as hypergraphs~\cite{HwTaTi+08,YuTaWa12,BeAbSc+18,HuQiSh+10,AmVeBe20,VeBeKl22,LeBuEl+25}. 
This naturally raises the question of designing fast algorithms for counting {\em hypergraph patterns/motifs}.
Especially for databases, the fundamental problem of Conjunctive Query evaluation is basically a hypergraph motif mining/counting problem~\cite{DeKo21,AhMiOl+23}. 

\begin{figure}[ht!]
	\centering
	\begin{subfigure}[t]{.5\textwidth}
		\centering
		\includegraphics[width=0.97\linewidth]{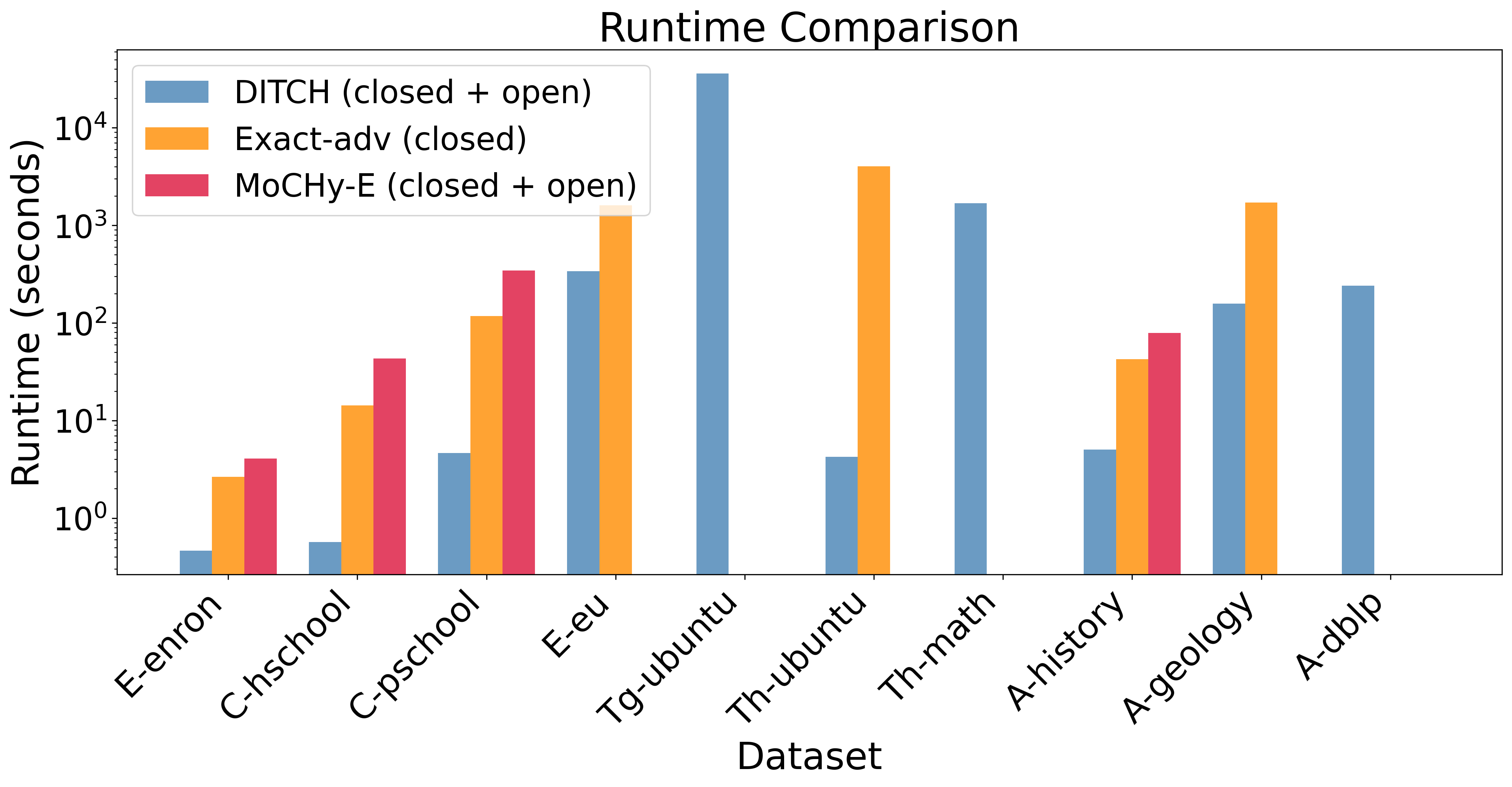}
		\caption{Running time comparison}
		\label{fig:runtime}
	\end{subfigure}
	\hspace{0.03\textwidth}
	\begin{subfigure}[t]{.5\textwidth}
		\centering
		\includegraphics[width=0.97\linewidth]{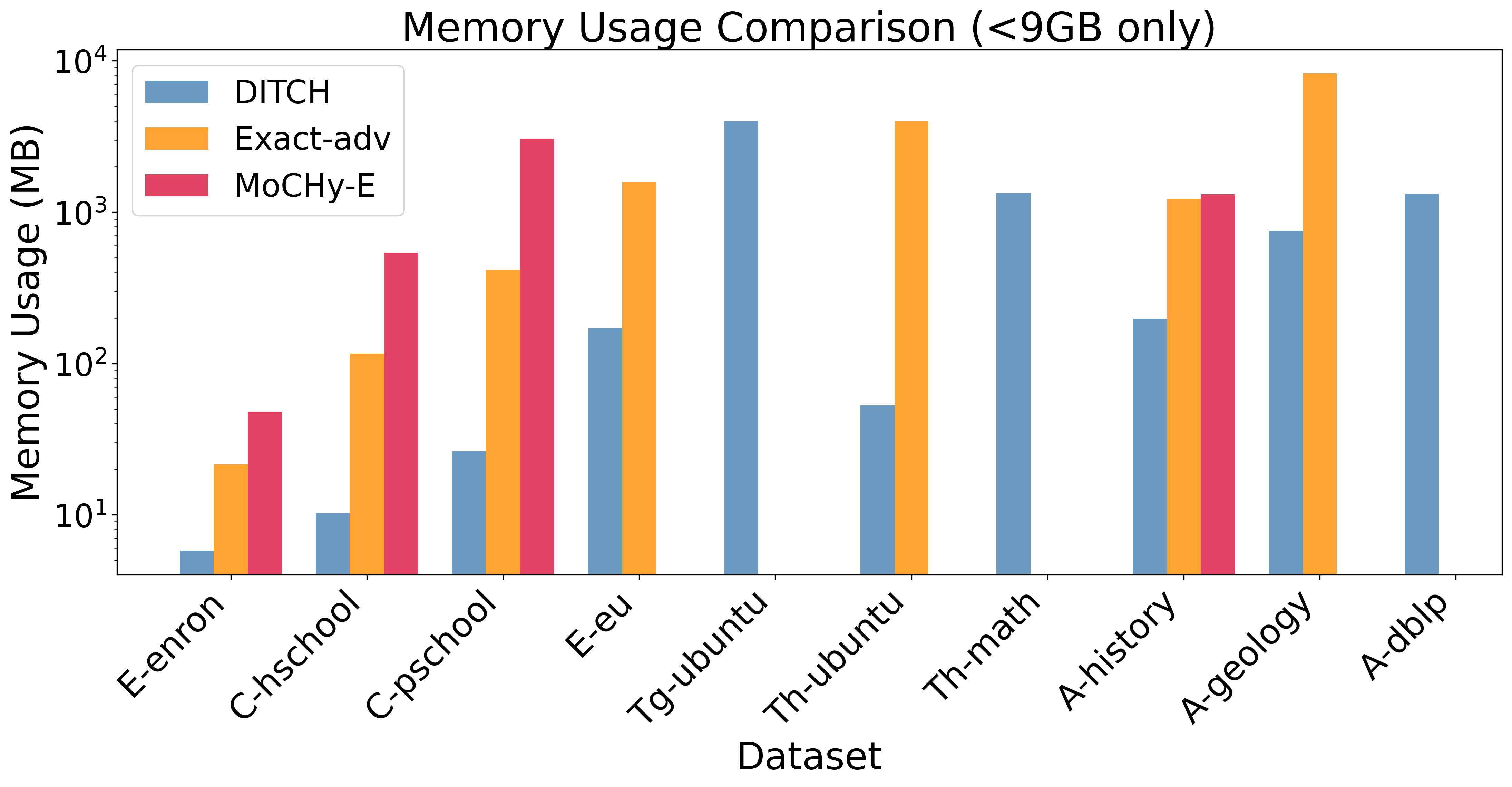}
		\caption{Memory comparison}
		\label{fig:memory}
	\end{subfigure}%
    \caption{Runtime and memory comparisons: we compare our algorithm \ditch{} with the existing best
    \baseline~\cite{LeJiSh20} and state-of-the-art \baselineadv~\cite{YiWaZh+24} algorithms. We test on
    a collection of ten datasets, and set a memory limit of 9GB. Any bar that is not present
    means that the algorithm ran out of memory. Across all datasets, \ditch{} is 10-100x faster and more
    memory efficient than previous methods.}
	\label{fig:comparison}
\end{figure}

In this paper, we focus on the problem of counting {\em hypertriangles} as defined by collection of three hyperedges which intersect with each other. Although in a graph there is only one possible definition of a triangle,
in a hypergraph, depending on how these three hyperedges intersect, there are multiple possible configurations. This was introduced in the seminal work of~\cite{LeJiSh20} which described 26 different patterns (see \Fig{types}) and used their frequencies
relative to those in randomized hypergraphs to ascertain domains in which certain hypergraphs arise in. 

A follow-up paper described a faster algorithm to count these patterns~\cite{YiWaZh+24}. 
Inspired by algorithms from the 1980s based on graph orientations and degeneracy~\cite{ChNi85}, our main theoretical contribution is a generalization of these concepts to hypergraphs.
In particular, we build a theoretical framework that relates these orientations to hypergraph degeneracy and
give provable algorithms for hypertriangle counting. These theoretical ideas are implemented
as the \emph{\underline{D}egeneracy \underline{I}nspired \underline{T}riangle \underline{C}ounter for \underline{H}ypergraphs} (DITCH) procedure. We empirically
show that DITCH is {\bf \em 10-100$\times$ times faster and memory efficient} (see \Fig{comparison}) compared to aforementioned state-of-the-art
results.

\subsection{Problem Statement}
A hypergraph $G = (V(G), E(G))$ represents a set system, where each hyperedge $e \in E(G)$ is a subset of $V(G)$.
Each hyperedge $e \in E$ is a subset of vertices and $|e|$ denotes the number of vertices in $e$. No hyperedge is repeated. 
We use $n = |V|$ for the number of vertices, $m=|E|$ for the number of hyperedges and $\inputsize = n+\sum_e |e|$ for the input size. 
The \emph{rank} of a hypergraph, denoted $r$, is the maximum size of a hyperedge.
We study the problem of counting all \emph{hypertriangles}, as defined by~\cite{LeJiSh20}. A hypertriangle is a motif defined
by three hyperedges that satisfy specific intersection patterns. Refer to \Fig{types} for an illustration.
Given three hyperedges $e_1, e_2, e_3$, we can define seven intersection subsets or ``regions'' as in a Venn diagram. An example of a region would be vertices that are only present in $e_1$ but not in $e_2$ or $e_3$; in set-notation, this is $e_1 \setminus (e_2 \cup e_3)$. 
If a region is non-empty,
it is colored; otherwise it is white. In \Fig{types}, regions
in exactly one set are colored green, those in two sets exactly is blue, and the intersection
of all three hyperedges is colored red. 

\begin{figure}[htbp]
	\centering
	\includegraphics[width=0.97\linewidth]{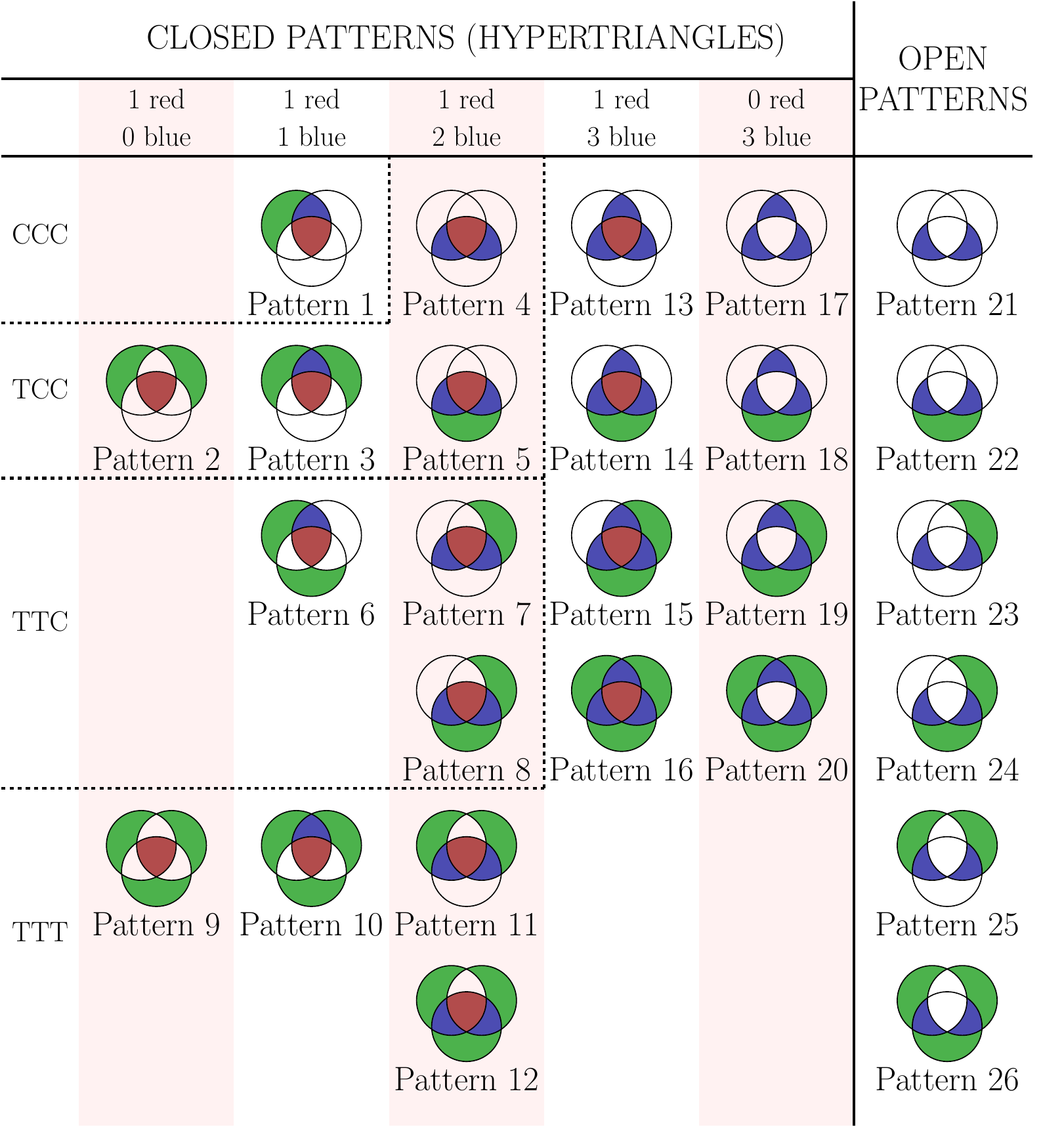}
    \caption{Hypertriangles, as defined by~\cite{LeJiSh20}. The three circles represent the distinct hyperedges,
    and regions are colored if they are non-empty. Regions in exactly one set are colored green,
    two sets are colored blue, and three sets are colored red. The 20 closed patterns are technically the hypertriangles,
    whereas the 6 open patterns are akin to paths.}
	\label{fig:types}
\end{figure}

Based on the specific non-empty sets, there are 20 different hypertriangles (technically,
there are 6 more possibilities, but those are ``open'' patterns more akin to paths). The aim is to compute
an {\bf exact count} of all these motifs.

\subsection{Our Contributions} \label{sec:ideas}

In a seminal paper~\cite{ChNi85}, Chiba and Nishizeki described an algorithm for triangle counting, 
that introduced the technique of  \emph{graph orientations}, which directs the edges of the graph to reduce the search space. 
There is a deep connection between this algorithm and to the notion of \emph{graph degeneracy} and core decompositions~\cite{Se23}. 
This technique, a great success story of the intersection of theory and practice, inspired an entire theory of graph sparsity~\cite{NeMe08a,NeMe08b}, and is still the best practical triangle counting algorithm and the main method used for fastest exact motif counting algorithms beyond triangles~\cite{ScWa05,PiSeVi17,OrBr17}.
In this paper, we generalize this approach to hypergraphs, leading to the following contributions.

\begin{asparaitem}
	\item \textbf{Hypergraph orientations and degeneracy:} 
	Our main conceptual contribution is a new notion of {\em hypergraph orientations} and
	{\em hypergraph degeneracy}. We give efficient algorithms for computing this degeneracy value
	and orientations that have low outdegrees. In practice, these outdegrees are significantly
	smaller than the original vertex degrees which leads to faster algorithms.
	
	\item \textbf{Hypertriangle counting, in theory:} Based on the above definitions, we design
	a host of theoretical algorithms that count all hypertriangles efficiently. The key
	insight is that one only needs to inspect the ``outward'' hyperedges from a vertex,
	and this vastly reduces the search space. In particular, we design algorithms
	whose runtimes are comparable to $\sum_{v\in V} d_v(\dout{v})^2$ where $d_v$ is the normal degree
	while $\dout{v}$ is the ``out-degree''. This should be compared with the essentially $\sum_{v\in V} d^3_v$ runtimes
	of previous algorithms. The plot in \Fig{deg-dis} shows the distribution of the out-degrees and degrees explaining why this is a {\em significant} improvement.
	Our bounds also imply $O(m)$ time algorithms on hypergraphs of bounded rank and bounded degeneracy matching
	similar bounds for triangle counting algorithms in graphs~\cite{ChNi85}.

	\item \textbf{Hypertriangle counting, in practice:} We implement our theoretical algorithms
	as part of our DITCH package on a simple commodity machine, and run them on a large variety of public hypergraph datasets.
	We do comparisons with past algorithms~\cite{LeJiSh20,YiWaZh+24}.
	As noted in~\Fig{comparison}, DITCH is at least \emph{10 times faster} (typically running in minutes) and uses {\em 10 times less memory} than the previous algorithms. 
	For many of the datasets, previous algorithms run out of memory (and revert to approximation algorithms) while DITCH stays
	under 1GB for 8 out of the 10 datasets, and under 4GB overall.
	These results demonstrate the power of hypergraph orientation techniques in making motif counting feasible in practice.

\end{asparaitem}

\subsection{Challenges and Main Insights} \label{sec:ideas}

Let us begin with the classic graph orientation idea, originally from Chiba-Nishizeki~\cite{ChNi85}. A naive 
algorithm (called wedge enumeration) is to enumerate all triangles in a graph is to go vertex by vertex, pick two neighbors of a vertex and check if they form an edge,
incrementing our count if so. The main issue is that high-degree vertices leads to a blow-up in running time.

The key idea is to create a graph \emph{orientation} that is acyclic by directing
the edges of the graph. Now, one only has to enumerate \emph{outwedges} that are formed by two
outneighbors of a vertex. Clearly, one should consider orientations that
minimize outdegrees. The most
common method is the ``degeneracy'' orientation, obtained by the minimum degree removal process.
This orientation provably minimizes the maximum outdegree over all acyclic orientations,
and can be found in linear time. The overall algorithm is extremely practical and is
currently the best known exact triangle counting algorithm~\cite{ScWa05,PiSeVi17}. The algorithm provably runs in $O(m\kappa)$
time, where $\kappa$ is the graph degeneracy. This is known to be best theoretical algorithm
possible, assuming complexity theoretic conjectures~\cite{KoPePo16}.

To generalize these ideas for hypergraphs, we need to address a few challenges.
\begin{asparaitem}
\item[-] {\it Hypergraph orientations:} There is no obvious notion of ``directing'' hyperedges. Recent theoretical
work~\cite{PaSe26} suggests that the right view is to take a permutation of vertices. The difficulty is in coming
up with the right notion of ``outdegree'' that leads to effective hypertriangle counting. Moreover,
we need a fast algorithm that can compute the corresponding orientation.

\item[-] {\it The complexity of hypertriangles:} In the graph setting, a connected pattern of 3 edges
can either be a triangle, a star, or a path. The latter two can be counted \emph{without}
enumeration by some combinatorial formulas. For hypergraphs, as \Fig{types} shows, there are many
possible cases of hypertriangles.  Even the basic wedge enumeration procedure for triangles is quite complex
for hypertriangles, as is shown in~\cite{YiWaZh+24}. 

\item[-] {\it Edge-centric view as opposed to vertex-centric view.} 
Moreover, for graphs, we can entirely adopt a vertex-centric view,
which leads to simple formulas. For example, the number of 3-stars is simply $\sum_v {d_v \choose 3}$ where $d_v$ is the degree of vertex $v$.
In hypergraphs, the intersection areas (the red/blue regions) can
have different number of vertices for different hypertriangles. So one cannot devise easy formulas,
and naively one is forced to potentially enumerate over all $\sum_v {d_v \choose 3} = \Theta(\sum_v d^3_v)$ edge triples to classify
the hypertriangle. Essentially, all previous work has a running time of at least this amount. Given the heavy-tailed
degree distribution, the third moment is quite large.
\end{asparaitem}

\noindent
{\bf Our insights and approach:} We define a new notion of hypergraph degeneracy, inspired
by~\cite{PaSe26}, but specifically tailored for motif counting. 
Consider an ordering of the vertices. We define the outdegree $\dout{v}$ to be the number of
edges $e$ containing $v$ where $v$ is \emph{not} the last in $e$. This seems like a contrived definition,
but as we prove, this is the correct notion for counting all hypertriangles. We give a variant of the classic
Matula-Beck~\cite{MaBe83} degeneracy algorithm to compute an orientation the minimizes (over orientations) the maximum $\dout{v}$ of the hypergraph.
We also relate this maximum outdegree to a notion of hypergraph degeneracy. This is explained in detail in~\Sec{degen}.

\begin{figure}
\centering
\includegraphics[width=0.95\linewidth]{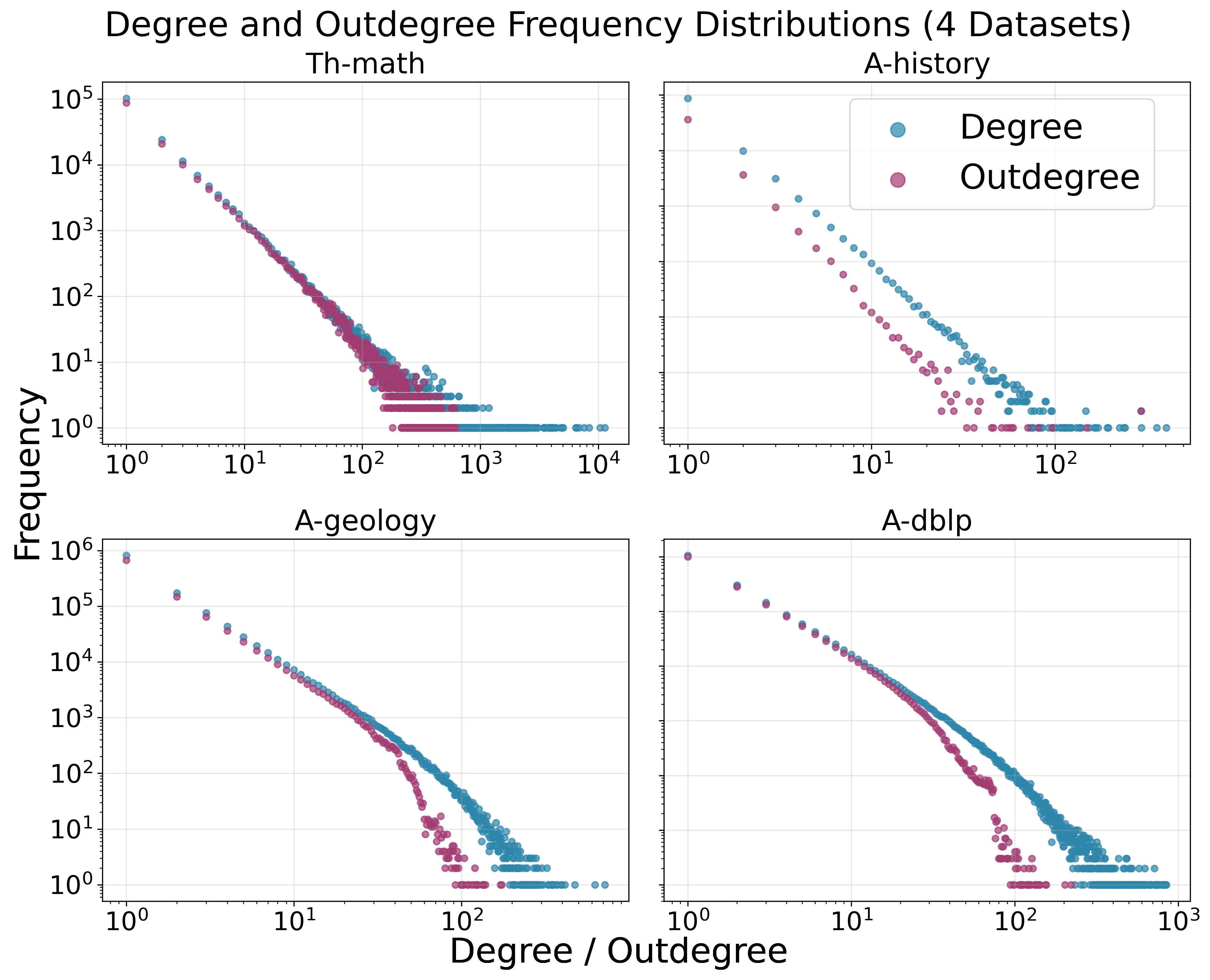}
    \caption{Degree and Outdegree Distributions: For four datasets, we plot the difference between
    the standard degree distribution and outdegree distribution of the oriented hypergraph. Each point on the $x$-axis
    is a degree, and the $y$-axis is the frequency of that degree, plotting in log-log. Observe how the outdegree frequencies
    are significantly smaller for higher degrees, which cuts down the search space for hypertriangles. For the co-author datasets
    marked ``A'', the reduction of the tail is dramatic.}
\label{fig:deg-dis}
\end{figure}

The most compelling effect of this orientation is the \emph{taming of the tail}. In \Fig{deg-dis},
we plot the degree and outdegree distributions for various datasets. We see how the outdegree tail is significantly
thinner than the heavy tail of the degree distribution. The maximum value is also smaller, typically 3-5
times in all datasets (\Tab{summary-datasets}). Since the running times depend on the cube of the degree,
such a reduction leads to a ten-fold reduction in running time.

Let us now discuss a number of algorithmic ideas to get {\em all} the hypertriangle counts. 
We first discuss counting the more interesting closed patterns (the patterns 21-26 are open patterns).

{\it Idea 1, hyperedge degree computation:} For any hyperedge $e$, the degree is the number of (other) hyperedges that intersection with $e$.
This quantity is useful for computing various motif counts. For graphs, the degree of $e = (u,v)$
is just $(d_u + d_v - 2)$. So all edge degrees can be computed in linear time.
For hypergraphs, such a formula does not exist, and a direct algorithm would require explicitly inspecting
all intersecting hyperedges. 
A useful (and simple) observation is that orientations help in computing \emph{hyperedge degrees}. This is a fundamental computation
that goes beyond motif counting. Details can be found in~\Sec{degrees}.

{\it Idea 2, outwedge enumeration:} This is just a direct use of the triangle counting idea of Chiba-Nishizeki~\cite{ChNi85}. 
Consider patterns 13 to 20 in \Fig{types}. Observe that they all have three blue regions; meaning,
for every pair of edges, there is a vertex in the intersection that is not in the third edge. This
is exactly like a triangle in standard graphs. So we can use enumerate pairs of outneighbors of all vertices.
To check whether this ``hyperwedge'' participates in a triangle requires another outneighborhood enumeration
(for another vertex). Note there can be many possible hyperedges that form a triangle with a fixed hyperwedge,
which creates complications. We also observe that with some tweaks,
this idea can also enumerate all hypertriangles with two blue regions (third column in \Fig{types}).
Details in~\Sec{triangle}.

{\it Idea 3, containment:} When there are fewer than two blue regions, we do not have the ``triangle like''
structure discussed above. But we observe for patterns 1, 2, 3, and 6, one of the hyperedges is fully contained in another. 
We discover another useful property of the hypergraph orientations; in $O(m\edegen)$ time, we can enumerate
all pairs of hyperedges where one is contained in the other. From these pairs, patterns 1, 2, and 3 can be counted directly. Details in~\Sec{contain}.

{\it Idea 4, computing linear combinations:} At this point, only patterns 6, 9, and 10 are left. 
Instead of dealing with them directly, we compute three (independent) linear combinations
of many pattern counts. From the previous counts and these combinations, the remaining
counts can be computed. A slight twist on the containment-based algorithms for patterns 1-3 can be applied to get a linear
combination involving pattern 6.

Patterns 9 and 10 is where it gets interesting. We can think of pattern 9 as the equivalent of a ``star''. Unfortunately,
the number of stars is \emph{not} bounded by $O(m\cdot \poly(\edegen))$, so a direct enumeration is not feasible.
Simple formulas like $\sum_v {d_v \choose 3}$ fail because an individual pattern is overcounted by the number
of vertices in the red region. Fixing this overcount directly would require an enumeration, which previous algorithms
do. Our main idea is the following. By spending $\sum_v (\dout{v})^3$ time, we can enumerate all triples of outedges
shared a vertex, but that misses some instances of Pattern 9. These ``misses'' can be captured by a combinatorial
formula, but that includes instances of other patterns. With a careful case analysis, we prove that this formula
(together with other enumeration) can count the \emph{sum} of counts from patterns 1 to 16 (anything with a red region).

This is not enough, since we do not know the two counts of patterns 9 and 10. A more intricate version
of the above argument counts a weighted sum of patterns 1 to 16, where the weight is exactly the number
of blue regions. This argument is an involved analysis that inspects the intersection regions of outwedges,
and adds a count depending the structure and indegrees of vertices in this region.
Overall, we compute independent linear combinations involving patterns 9 and 10, and so their counts can be inferred. 
Details of these ideas can be found in~\Sec{star}.

{\it Open patterns:} The patterns 21 to 26 technically are not hypertriangles but are, rather, ``open patterns'' and a generalization of 3-paths. 
We use the orientation ideas to develop algorithms for counting these, but we do not discuss these in this extended abstract due to space reasons.
Our experiments (and reports of runtimes and memories) include counting these as well.

\subsection{Related work}

{\bf Applications of Hypergraphs:}
Hypergraphs represent group interactions and are widely used in various domains such as 
social networks~\cite{ShSuCh24}, 
bioinformatics~\cite{HwTaTi+08}, 
recommendation systems~\cite{WaChRo22}, 
image retrieval~\cite{HuQiSh+10}, 
and more~\cite{YuTaWa12,BeAbSc+18,AmVeBe20,VeBeKl22,LeBuEl+25}. 
Thus, considerable efforts have been made to identify fundamental properties of real-world hypergraphs (see related work section in \cite{icdm/KookKS20}). For instance, Kook et al.~\cite{icdm/KookKS20} observed that real-world hypergraphs exhibit heavy-tailed degree, hyperedge size, and intersection size, as well as temporal patterns like diminishing overlaps, densification, and shrinking diameter. We contribute by defining the notion of hypergraph degeneracy and orientations with low outdegrees, and show that real-world hypergraphs have orientations with significantly smaller outdegrees as compared to vertex degrees.

{\bf Hypergraph motif counting:}
Triangles are the most fundamental motifs, and in hypergraphs they have been studied as vertex-based and hyperedge-based variants. Vertex-based hypergraph triangles were first introduced as inner and outer triangles by Zhang et al.~\cite{tkde/ZhangZWYK23}: inner triangles are induced by three vertices contained in a single hyperedge, whereas outer triangles consist of three vertices that are pairwise co-contained in three distinct hyperedges. In the same work,~\cite{tkde/ZhangZWYK23} also proposed reservoir-based sampling schemes for counting such vertex-based triangles. Recently, Meng et al.~\cite{MeYuLi+25} extended this taxonomy by introducing hybrid triangles, where three vertices lie in one hyperedge and at least two of them also lie in another hyperedge. They use this richer classification to design streaming algorithms for both vertex- and edge-based triangles that adapt the effective sample size to the available memory.

Hyperedge-based triangles (also called hyper-triangles in past work) consist of three mutually intersecting hyperedges and were independently proposed by Zhang et al.~\cite{tkde/ZhangZWYK23} and by Lee et al.~\cite{LeJiSh20}. Lee et al. were the first to systematically classify such patterns by the emptiness or non-emptiness of the seven regions in the three-way intersection; this is the notion studied in this work. They also developed an algorithm, \baseline, for counting all the patterns according to the classification. In a more recent journal version~\cite{vldb/LeeYKKS24}, they generalize this classification beyond the binary setting by introducing \emph{$3$-h motifs}, where each region is constrained to have zero, at most~$\theta$, or more than~$\theta$ vertices for a fixed threshold~$\theta$. Yin et al.~\cite{YiWaZh+24} subsequently developed faster algorithms (\baselineadv) for computing hyper-triangles that are closed patterns (see Figure~\ref{fig:types}) by enumerating hyperwedges. Later, Niu et al.~\cite{bigdataconf/NiuAAS24} found top-$k$ hyper-triangles by optimizing the sizes of intersection regions. We use \baseline\ and \baselineadv\ as baselines in our experiments.

Past works use the distribution of hyper-triangle patterns and hypergraph clustering coefficient to distinguish domains: the distribution of hyper-triangles is stable within a domain but can differ across domains. Hyper-triangle statistics also serve as good features for hyperedge prediction~\cite{LeJiSh20,YiWaZh+24} and for deriving informative insights~\cite{bigdataconf/NiuAAS24}. More recently, hyper-triangle counts have been used for hypermotif prediction and hypergraph motif representation learning~\cite{kdd/AntelmiCVPPS25}. Other higher-order motifs, such as simplices/cliques, together with more details on hypergraph motif counting, are surveyed in recent work~\cite{motif-counting-survey-25}.

{\bf Degeneracy and Subgraph Counting.} The first paper to connect degeneracy with subgraph counting was the seminal work of Chiba-Nishizeki regarding clique counting~\cite{ChNi85}.
Many subsequent works have used the degeneracy to find fast algorithms for subgraph counting, both in theory and practice~\cite{Ep94,AhNeRo+15,JhSePi15,PiSeVi17,OrBr17,JaSe17,PaSe20} (See \cite{Se23}, \cite{peerj-cs/CalmazB24}, and \cite{motif-counting-survey-25} for surveys of recent results).

{\bf Hypergraph Counting in Theory.} The problem of counting homomorphisms in hypergraph has been studied mostly from the point of view of parameterized complexity, for hypergraphs with bounded \cite{DaJo04} and unbounded rank \cite{GrMa14}. More recently, Bressan et al. \cite{BrBrDe+25} studied parameterized algorithms for subhypergraph counting  while Paul-Pena and Seshadhri~\cite{PaSe26} studied the problem of finding linear time algorithm in degenerate hypergraphs.

\section{Preliminaries} \label{sec:prelims}

\begin{table}[htbp]
    \centering
    \begin{tabular}{ll}
        \hline
        Symbol & Meaning \\
        \hline
        $n$ & Number of vertices \\
        $m$ & Number of hyperedges \\
        $\inputsize$ & Input size, $n + \sum_{e} |e|$\\
        $\rank(G),\rank$ & rank of $G$ \\
        $|e|$ & arity of $e$ \\
        $d_{v}$ & Degree of the vertex $v$\\
        $\edeg{e}$   & Hyperedge degree of $e$ \\
        $\parents{e}$ & Set of ancestors of $e$ \\
        $\children{e}$ & Set of descendants of $e$ \\
        $\Nparents{e}$ & Number of ancestors of $e$ \\
        $\Nchildren{e}$ & Number of descendants of $e$ \\
        $\Nintersect{e}$ & Number of hyperedges intersecting $e$ \\
        $\edegen(G)$   & \HyperDegen of $G$\\
        $\din{v}$   & in-degree of $v$\\\
        $\dout{v}$   & out-degree of $v$\\
        $\Nout{v}$   & out-neighborhood of $v$\\
        $\source{e_1\cap e_2}$ & The first vertex of $e_1 \cap e_2$ \\
        $\sink{e}$ & The last vertex of $e$\\
        $\type(e_1,e_2,e_3)$ & The pattern type of triplet $e_1,e_2,e_3$\\
        \hline
    \end{tabular}
    \caption{Summary of notation used throughout the paper.}
    \label{tab:notation}
\end{table}

We set up the terminology and notation for the paper. Refer to \Tab{notation} for a summary. 
We start with some hypergraph basics. In all our notation, we assume that the input hypergraph $G$
is fixed, so we do not include $G$ in the notation. We assume that $G$ is unweighted, so no edge
is repeated.
The {\it arity} $|e|$ of a hyperedge $e$ is the number of vertices that it contains, and for our theoretical results we assume this is at least $2$ for every hyperedge. 
Thus, the rank of $G$ is the maximum arity over all edges. We now define degrees.

\begin{asparaitem}
	\item $d_{v}$: The \emph{degree} of vertex $v$ is the number of hyperedges containing $v$.
	\item $\edeg{e}$: The \emph{hyperedge-degree} $\edeg{e}$ of hyperedge $e$ is the number of other hyperedges $f$ that \emph{intersect} $e$, that is, $\edeg{e} = |\{f \in E(G): e\cap f \neq \emptyset \ \textrm{and} f \neq e\}|$. 
\end{asparaitem}

Since hyperedges form a set system, we introduce notation for various intersection properties.
We use $e, f$ to denote hyperedges.

\begin{asparaitem}
	\item If $e \subseteq f$, then $e$ is a \emph{descendant} of $f$. Also, $f$ is an \emph{ancestor} of $e$.
	\item We use $\children{e}$ to denote the family of descendants of $e$, and $\parents{e}$ to denote the family of ancestors.
The sizes of these families are denote $\Nchildren{e}$ and $\Nparents{e}$ respectively.
	\item $\Nintersect{e}$: This is an important notation, that denotes the number of hyperedges intersecting $e$
that are neither ancestors nor descendants. So $\Nintersect{e} \eqdef \edeg{e} - \Nparents{e} - \Nchildren{e}$.
\end{asparaitem}

\smallskip

We note that the quantities $\Nchildren{e}, \Nparents{e}$, and $\Nintersect{e}$ are not obvious to compute.
Indeed, we will show in \Sec{degrees} how the correct hypergraph orientations help in computing them quickly.

\noindent
{\bf On the hypertriangle patterns:} The hypertriangle patterns are listed in \Fig{types}. 
We divide these patterns into closed ($1$ to $20$) and open ($21$ to $26$). In the closed patterns, the three hyperedges intersect pairwise, forming a hypertriangle; while in the open patterns, there is a hyperedge intersecting with the other two, which do not intersect with each other.

\begin{asparaitem}
        \item We use $\ct{i}$ to denote the count of the $i$th pattern.
\end{asparaitem}
For any three distinct edges $e_1,e_2,e_3$, let $\type(e_1,e_2,e_3)$ denote the pattern type induced by the three edges $e_1$, $e_2$, and $e_3$. 
Given three distinct edges $e_1,e_2,e_3$, the pattern type $\type(e_1,e_2,e_3)$ can be computed in $O(\rank)$ time.
We can just do a linear search through all the vertices in $e_1, e_2$, and $e_3$, and determine which
of the $7$ regions it belongs to.

\section{Hypergraph Orientations and Degeneracy} \label{sec:degen}

We begin with the notion of hypergraph orientations using the concept of {\em directed acyclic hypergraphs} (DAH) from~\cite{PaSe26}.
A \dah{} $\vec{G} = (G,\pi)$ is a pair where $G$ is a hypergraph and $\pi$ is an ordering of the vertices in $V(G)$. Each hyperedge $e$ can
be thought of as a sorted list $e_\pi$ of vertices that follow the ordering in $\pi$.
Typically, the ordering will be fixed, so we will not carry $\pi$ in the notation.
We will consider various subsets of vertices formed by intersections and unions of edges. Let $g \subseteq V$.
We give some crucial definitions on directed degrees.

\begin{asparaitem}
	\item The \emph{source} of $g$, denoted $\source{g}$ is the first vertex in $g$ according to the ordering $\pi$. 
	\item The \emph{sink} of $g$, denoted $\sink{g}$ is the last vertex in $g$ (according to $\pi$).
	\item The \emph{indegree} of vertex $v$ is the number of hyperedges for which $v$ is the sink. Formally, $\din{v} = |\{e \in E(G): \sink{e} = v\}|$.
	\item The \emph{outlist} of vertex $v$ is the list of hyperedges containing $v$ where $v$ is \emph{not} the sink.
	So $\Nout{v} = \{e \in E(G): v \in e, v\neq\sink{e}\}$. We denote the \emph{outdegree} of vertex $v$ as $\dout{v} = |\Nout{v}|$.
\end{asparaitem}

Given an ordering $\pi$, our algorithms have runtime as a function of these $\dout{v}$'s. It is therefore natural to consider the order
$\pi$ which minimizes the maximum $\dout{v}$. We next show how this is related to a {\em degeneracy} notion of the hypergraph, and 
how a generalization of the Matula-Beck algorithm~\cite{MaBe83} can compute this order in near linear time. 

\noindent
{\bf Degeneracy.} 
Given a hypergraph $G$ and a set of vertices $V' \subseteq V(G)$, the \multihyper{} of $G$ induced by $V'$, $\trimmed{G}{V'}$, is the multi-hypergraph $G'$ with vertex set $V'$ and hyperedge {\bf multiset} $E' = \{e \cap V' ~:~ e \in E(G), |e \cap V'| \geq 2\}$, where we keep multiple copies of $e\cap V'$ 
if they are the same~\cite{PaSe26}. We stress that $G$ is a simple hypergraph, but $\trimmed{G}{V'}$ is not.
In a \multihyper, the degree of a vertex is the number of edges it participates in.
For convenience, we use $\delta(H)$ to denote the minimum degree of a vertex in the hypergraph $H$.

\begin{definition} [\HyperDegen]
	The \HyperDegen{} of $G$, $\edegen(G)$, is the maximum minimum degree across all \multihyper{s} of $G$.
	\[
	\edegen(G) = \max_{V' \subseteq V(G)} \delta(\trimmed{G}{V'})
	\]
\end{definition}
The following theorem shows the connection with hypergraph orderings which minimize the maximum out-degree and fast algorithms to compute it.
\begin{theorem}\label{thm:dual}
	$\edegen(G) = \min_\pi \max_v \dout{v}$ and there is a $O(\inputsize)$ time algorithm to compute both the quantity
	and the ordering $\pi$ that minimizes the RHS.
\end{theorem}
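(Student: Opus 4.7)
}

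The plan is to prove both inequalities in $\edegen(G) = \min_\pi \max_v \dout{v}$, and then exhibit a near-linear-time algorithm that simultaneously witnesses the upper bound and computes $\edegen(G)$.

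First I would prove $\edegen(G) \le \max_v \dout{v}$ for every ordering $\pi$. Fix an arbitrary $\pi$ and let $V^\star$ be a subset achieving the maximum in the definition of $\edegen(G)$, so $\delta(\trimmed{G}{V^\star}) = \edegen(G)$. Let $v$ be the first vertex of $V^\star$ in the order $\pi$. In $\trimmed{G}{V^\star}$ the degree of $v$ counts the edges $e \in E(G)$ with $v \in e$ and $|e \cap V^\star| \ge 2$. For each such $e$ there is another vertex $u \in e \cap V^\star$ with $u$ appearing strictly later than $v$ in $\pi$, which forces $v \neq \sink{e}$. Therefore each of these edges contributes to $\dout{v}$, giving $\dout{v} \ge \delta(\trimmed{G}{V^\star}) = \edegen(G)$.

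Next I would prove the matching upper bound by constructing a good $\pi$ via a hypergraph analog of Matula--Beck. Maintain a current vertex set $V'$ starting at $V(G)$, repeatedly pick a vertex $v$ minimizing its degree in $\trimmed{G}{V'}$, append $v$ to $\pi$, and remove it from $V'$. I claim that the resulting ordering satisfies $\dout{v_i} \le \edegen(G)$ for every $v_i$. At the moment $v_i$ is removed, its degree in $\trimmed{G}{V_i}$ (where $V_i = V \setminus \{v_1,\dots,v_{i-1}\}$) equals $\delta(\trimmed{G}{V_i}) \le \edegen(G)$ by the definition of \hyperDegen. Moreover, this degree counts exactly those $e \ni v_i$ with $|e \cap V_i| \ge 2$, i.e.\ those $e$ containing $v_i$ and at least one vertex in $V_i \setminus \{v_i\} = \{v_{i+1},\dots,v_n\}$. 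Since $v_{i+1},\dots,v_n$ are precisely the vertices appearing after $v_i$ in $\pi$, this count equals exactly $\dout{v_i}$ under $\pi$, proving the claim.

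Finally I would give the implementation in $O(\inputsize)$. Store each hyperedge as a doubly linked list, maintain for each vertex $v$ a counter $d(v)$ equal to its current degree in $\trimmed{G}{V'}$, and keep vertices in a bucket structure indexed by $d(v)$ (as in the classical Matula--Beck implementation). For each edge $e$ also keep a counter $s(e)=|e\cap V'|$. When $v$ is removed, walk through the edges containing $v$; for each such edge $e$ decrement $s(e)$ and, if $s(e)$ drops from $2$ to $1$, locate the unique remaining vertex $u \in e \cap V'$ in $O(1)$ via the linked list and decrement $d(u)$, updating its bucket. The total work across the whole algorithm is $O\bigl(\sum_{e} |e|\bigr) = O(\inputsize)$ since each vertex--edge incidence is touched a constant number of times.

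The main obstacle is the $O(\inputsize)$ runtime rather than the combinatorial identity: we must make sure that updating $s(e)$ and transitioning an edge from the ``alive'' to the ``dying'' regime gives $O(1)$ amortized work per incidence. The subtle point is the $s(e)=2 \to 1$ transition, since at that moment we must identify the unique surviving vertex of $e$ in $V'$ without scanning $e$; keeping $e$ as a doubly linked list of its currently surviving vertices (with back-pointers from each vertex-in-edge occurrence) resolves this and also automatically handles the correct treatment of multi-edges in $\trimmed{G}{V'}$, since our algorithm never collapses parallel copies---it just tracks $s(e)$ per original hyperedge, which is exactly what the multi-hypergraph definition requires.
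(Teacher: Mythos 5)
Your proposal is correct and follows essentially the same route as the paper: the lower bound via the first vertex (in $\pi$) of the witness set $V^\star$, the upper bound via min-degree peeling with the key identity that $\dout{v_i}$ equals the degree of $v_i$ in $\trimmed{G}{\{v_i,\dots,v_n\}}$ at removal time, and a Matula--Beck bucket structure for the $O(\inputsize)$ implementation. Your data-structure discussion of the $s(e)=2\to 1$ transition is a slightly more explicit account of what the paper's Algorithm~\ref{alg:degeneracy} does when it deletes edges of arity below $2$, but the argument is the same.
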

First we show that any ordering $\pi$ upper bounds $\edegen(G)$, thus proving $\edegen(G)$ is {\em at most} the RHS in the above theorem.
\begin{lemma}\label{lem:ub}
	Fix any ordering $\pi$ of the vertices and let $\vec{G} = (G,\pi)$. 
	Let $x\in V$ be the vertex with the largest outdegree $\dout{x}$ in $\vec{G}$. Then, 
	$\dout{x} \geq \edegen(G)$.
\end{lemma}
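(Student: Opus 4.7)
The plan is to prove the lemma by exhibiting a specific vertex whose out-degree in $\vec{G}$ is already at least $\edegen(G)$; since $x$ has the maximum out-degree, the bound $\dout{x} \geq \edegen(G)$ follows immediately. The candidate vertex is the $\pi$-earliest vertex of the subset $V'$ that realizes the degeneracy.

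More concretely, first I would unfold the definitions: let $V' \subseteq V(G)$ be a subset achieving $\delta(\trimmed{G}{V'}) = \edegen(G)$, and let $v^*$ be the vertex of $V'$ that comes first under the ordering $\pi$. By definition, every vertex of $\trimmed{G}{V'}$ has degree at least $\edegen(G)$, so in particular $v^*$ participates in at least $\edegen(G)$ trimmed hyperedges of the multiset $E' = \{e \cap V' : e \in E(G),\, |e \cap V'| \geq 2\}$. Since $G$ is a simple hypergraph, each of these multiset copies arises from a distinct original hyperedge $e \in E(G)$; hence there are at least $\edegen(G)$ distinct edges $e$ with $v^* \in e$ and $|e \cap V'| \geq 2$.

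Next I would argue that each such $e$ lies in $\Nout{v^*}$. Because $|e \cap V'| \geq 2$ and $v^* \in e \cap V'$, there is another vertex $u \in e \cap V'$ with $u \neq v^*$. By our choice of $v^*$ as the $\pi$-earliest vertex of $V'$, the vertex $u$ occurs strictly after $v^*$ in $\pi$, so $v^*$ cannot be the sink of $e$ in the ordering $\pi$. Thus $e \in \Nout{v^*}$, and summing over the $\geq \edegen(G)$ such hyperedges gives $\dout{v^*} \geq \edegen(G)$. Since $x$ is the vertex maximizing $\dout{\cdot}$, we get $\dout{x} \geq \dout{v^*} \geq \edegen(G)$, which completes the proof.

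The only subtle point—and the one I would be most careful about—is the multiset aspect of $\trimmed{G}{V'}$. Because different original hyperedges can trim down to the same subset of $V'$, one must count copies correctly: the degree of $v^*$ in the multi-trimmed subhypergraph is the number of original hyperedges $e \in E(G)$ containing $v^*$ with $|e \cap V'| \geq 2$ (not the number of distinct trimmed sets). This is exactly what is needed to conclude that there are $\edegen(G)$ distinct hyperedges of $G$ witnessing the out-degree bound; conflating distinct edges with distinct trimmed sets would weaken the count. Beyond this, the proof is a short, direct combinatorial argument with no technical machinery required.
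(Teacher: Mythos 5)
Your proof is correct and follows essentially the same argument as the paper: both identify the $\pi$-earliest vertex of the subset $V'$ realizing $\edegen(G)$ and show each of its trimmed hyperedges corresponds to an out-edge because some other vertex of $V'$ comes later in $\pi$. Your explicit handling of the multiset counting is a nice touch, but the substance is identical.
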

\begin{proof}
	Let $V'\subseteq V(G)$ be the subset such that $\edegen(G)$ equals $\delta(\trimmed{G}{V'})$. Let
	$z = s(V')$ be the first vertex in $\pi$ present in $V'$; by definition, $\edegen(G)$ is at most the degree of $z$ in $\trimmed{G}{V'}$.
	Now, consider any edge $e\in E$ containing $z$ such that $e\cap V'$ is present in $\trimmed{G}{V'}$. Such an edge $e$ also must contain some other $y\in V'$ since $|e\cap V'| \geq 2$,
	and by our choice of $z$, this $y$ comes after $z$ in $\pi$. In particular, in $\vec{G}$, $e$ is an out-neighbor of $z$ since $z$ is not the last vertex.
	Thus, $\dout{z}$ is at least the degree of $z$ in $\trimmed{G}{V'}$. Thus, $\dout{z}\geq \edegen(G)$, and the claim follows.
\end{proof}

The next lemma shows that there is a vertex ordering where the maximum outdegree is exactly $\edegen(G)$, thus proving~\Thm{dual}.
Furthermore, this ordering can be computed in linear time. The algorithm is a generalization of a classic procedure of Matula and Beck~\cite{MaBe83} for graphs, and proceeds by peeling of vertex with minimum degree.

\begin{algorithm}[ht!]
	\caption{\hypermb$(G)$\\Input: Hypergraph $G$\\ Output: Degeneracy ordering $\pi$}\label{alg:degeneracy}
	\begin{algorithmic}[1]
		\State Initialize $\pi$ as an empty list
		\While{$G$ is not empty}
		\State Let $v$ be the minimum degree vertex in $G$
		\State Append $v$ to $\pi$
		\State Remove $v$ from $G$
		\For{$e \in E(G): e\supset v$}
		$e  = e\setminus v$
		\If{$|e|<2$}
		\For{$v' \in e$}
		\State $d_{v'} \gets d_{v'}-1$
		\EndFor
		\State Remove $e$ from $G$
		\EndIf
		\EndFor
		\EndWhile
		\State Return $\pi$
	\end{algorithmic}
\end{algorithm}

\begin{lemma} \label{lem:mb}
	For any input hypergraph $G$, Algorithm \ref{alg:degeneracy} computes an ordering $\pi$ of $G$ such that the resulting DAH $\vec{G} =(G,\pi)$, has maximum outdegree $\edegen(G)$. Moreover, the algorithm runs in time $O(\inputsize)$.
\end{lemma}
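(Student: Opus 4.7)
My plan is to establish two invariants of the algorithm and then combine them with \Lem{ub}. First, I would track the state of the hypergraph across iterations. Let $V'_i$ denote the set of unprocessed vertices at the start of iteration $i$ (including the vertex $\pi(i)$ about to be chosen). The key structural claim is that the current hypergraph maintained by the algorithm is exactly $\trimmed{G}{V'_i}$: the shrinking step $e \gets e \setminus v$ precisely implements the operation $e \mapsto e \cap V'_{i+1}$, and the removal condition $|e| < 2$ matches the requirement $|e \cap V'| \geq 2$ in the definition of the \multihyper{}. A straightforward induction on $i$ formalizes this.

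Next, I would relate the degree-in-the-current-hypergraph to the outdegree in the final DAH. Fix any original edge $e_0 \in E(G)$ and list its vertices in $\pi$-order as $v_1 < v_2 < \cdots < v_k$; the sink is $v_k$. Processing $v_j$ for $j < k$ encounters $e_0$ (in its shrunken form) in the current hypergraph, so $e_0$ contributes $1$ to the degree of $v_j$ when $v_j$ is chosen. When $v_{k-1}$ is processed, the shrunken edge drops to size $1$ and is removed, so $v_k$ never sees $e_0$ when it is chosen. Hence, for every vertex $v$, the degree of $v$ in $\trimmed{G}{V'_i}$ at the moment $v$ is picked equals exactly $\dout{v}$ in the final DAH $(G,\pi)$.

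Combining the two claims with the greedy choice, $\dout{\pi(i)} = \delta(\trimmed{G}{V'_i}) \leq \edegen(G)$ by definition of \hyperDegen. Thus $\max_v \dout{v} \leq \edegen(G)$. \Lem{ub} supplies the reverse inequality $\max_v \dout{v} \geq \edegen(G)$, so we conclude $\max_v \dout{v} = \edegen(G)$. Together with \Lem{ub}, this also proves the minimax identity of \Thm{dual}, since no ordering can do better.

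For the runtime, I would use a bucket-queue data structure indexed by current degree (exactly as in the Matula--Beck algorithm for graphs). Maintaining a pointer to the smallest non-empty bucket lets us extract a minimum-degree vertex in $O(1)$ amortized time. The total work done over the execution is $O(n)$ for vertex operations plus, for each original hyperedge $e_0$, $O(|e_0|)$ work for the shrinkings and the possible final removal (since each of its vertices is processed at most once). Summing gives $O(n + \sum_e |e|) = O(\inputsize)$. The most delicate point is the bookkeeping to keep all degrees up to date under the shrinking-and-removing operations within this budget; the argument is routine but must be spelled out carefully, since a vertex's degree decreases whenever an incident edge shrinks below size $2$, and these decrements must be charged to the relevant edge's arity to stay within $O(\inputsize)$ total.
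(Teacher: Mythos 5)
Your proposal is correct and follows essentially the same route as the paper's proof: an induction showing the algorithm's current state is exactly the \multihyper{} $\trimmed{G}{\{v_i,\ldots,v_n\}}$, the observation that a vertex's degree at removal time equals its outdegree in the final DAH (an edge survives at $v$'s removal iff $v$ is not its sink), the greedy minimum-degree choice giving the upper bound combined with \Lem{ub} for equality, and a Matula--Beck-style bucket structure for the $O(\inputsize)$ runtime. No substantive differences.
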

\begin{proof}
	
	We first show that Algorithm \ref{alg:degeneracy} returns an ordering with maximum outdegree $\leq \edegen(G)$.
	With Lemma~\ref{lem:ub} this would show equality. After this, 
	we will argue that the runtime of the algorithm is $O(n+mr)$ when using the right data structures.
	
	Let $\pi = (v_1,...,v_n)$ be the ordering of vertices output by Algorithm \ref{alg:degeneracy}. Let $G_i$ be the hypergraph before the removal of the $i$-th vertex. We first show by induction that $G_i$ is equal to the \Multihyper{} of $G$ induced by $(v_i,...,v_n)$.
	
	The base case is trivial as $G_0 = G = G\langle V(G)\rangle$. For the inductive step, assume $G_i = G\langle \{v_i,...,v_n\} \rangle$, we show that $G_{i+1} = G\langle \{v_{i+1},...,v_n\} \rangle$:
	\begin{asparaitem}
		\item  For the vertex set, we have $V(G_{i+1}) $ $= V(G_i)\setminus\{v_i\}$ \\$= V( G\langle \{v_{i},...,v_n\}\rangle) \setminus\{v_{i}\} $ $=  V( G\langle \{v_{i+1},...,v_n\}\rangle)$.
		\item For the hyperedge set, first note that every hyperedge in $G\langle \{v_{i+1},...,v_n\}\rangle$ corresponds to some edge in $G\langle \{v_{i},...,v_n\}\rangle$ (and therefore in $G_i$. Let $e$ be a hyperedge in $V(G_i)$, and $e' = e\setminus\{v_i\}$, there are two possible cases:
		\begin{asparaitem}
			\item $|e'|=1$: In this case $e'$ will be removed from $E(G_{i+1})$ and will not be present in $G\langle \{v_{i+1},...,v_n\}\rangle$, as hyperedges of arity less than $2$ are removed in both cases.
			\item $|e'|>1$: In this case we will have that $e'$ is in both $E(G_{i+1})$ and $G\langle \{v_{i+1},...,v_n\}\rangle$.
		\end{asparaitem}
	\end{asparaitem}
	Now, we show that the outdegree of $v_i$ in $\vec{G}=(G,\pi)$ is equal to its degree in $G\langle \{v_i,...,v_n\}\rangle$. Note that a hyperedge $e$ contributes to $\dout{v_i}$ only if $v_i \in e$ and $v_i \neq \sink{e}$, this means that there is a vertex $v' \in e$ with a later ordering than $v_i$ in $\pi$. This implies that, at the removal time of $v_i$, $|e \cap\{v_i,...,v_n\}|\geq 2$, and therefore the hyperedge is present in $G\langle \{v_i,...,v_n\}\rangle$. Conversely, every hyperedge in $G\langle \{v_i,...,v_n\}\rangle$ must have $|e \cap\{v_i,...,v_n\}|\geq 2$, which implies that there will be a later vertex than $v$ in $e$ according to $\pi$.
	
	Putting everything together, we get
	\begin{equation*}
		\Delta^+(\vec{G}) =\max_{v\in V(G)} \dout{v} = \max_{i=1}^n(d_{v_i} \text{ in }G\langle \{v_i,...,v_n\}\rangle) \leq \edegen(G)
	\end{equation*}
	where the final inequality follows since by choice $v_i$ is the minimum degree vertex in $\trimmed{G}{\{v_i, \ldots, v_n\}}$ and 
	since $\edegen(G)$ is the maximum over all subsets of $V$.

	Now, we analyze the runtime of the algorithm. One can construct a degree structure similar to \cite{MaBe83} in time $O(h)$, that allows to find the vertex with lowest degree in time $O(d_{v_i})$. For the removal we only need to look at every hyperedge adjacent to $v_i$. When deleting a hyperedge we will need to iterate over its vertices and update their degrees, which will take constant time per vertex. Note that every hyperedge and every vertex will remove a single time. Therefore, the complexity of the algorithm will be:
	
	\begin{equation*}
		O(h) + \sum_{v \in V(G)} O(d_{v}) + \sum_{e\in E(G)} O(|e|) = O(h)\qedhere
	\end{equation*}
	
\end{proof}

As shown in \Fig{deg-dis}, the distribution of outdegrees obtained in \Lem{mb} has a significantly thinner tail than the 
the original outdegree distribution. The maximum outdegree $\edegen$ is also smaller than the maximum degree, and as seen in~\Tab{summary-datasets}, 
can be an order of magnitude smaller. More significantly, as we see, the runtimes of our algorithms are of the form $\sum_{v\in V} (\dout{v})^2$ or 
$\sum_{v\in V} (\dout{v})^3$ which are much more favorable to $\sum_{v\in V} d_v^3$ which is essentially the runtimes of the wedge-based counting methods from previous works.

\subsection{Fast hyperedge-degree computation} \label{sec:degrees}

We now show how the degeneracy orientations lead to an $O(mr^2\edegen)$ time algorithm to compute all the hyperedge-degrees $\edeg{e}$, which, recall, 
is the number of hyperedges that share a vertex with $e$.
It is non-trivial to compute the values of $\edeg{e}$, since we cannot use simple combinatorial formulas based on
vertex degrees.
Enumerating the pairs of hyperedges that share a vertex can be potentially an expensive procedure, and for a single hyperedge $e$ there are as many as $\sum_{v \in e} d_v$ such pairs. 
A direct implementation would have running time $O(\sum_v d^2_v)$. 
We show how to improve on that running time without explicit enumerating such pairs.

The basic idea is simple. Consider the neighboring edges $f$ of $e$. 
Suppose the last vertex of $f$, $\sink{f}$, is not in $e$. To count such neighboring edges, we go over vertices $v\in e$ and enumerate the {\em out-neighbors} of $v$.
(We take care not to double count.)
Suppose $f$ has its endpoint in $e$. Such edges can be counted by summing the in-degrees of $v\in e$ in $O(r)$ time (note that a hyperedge can only have one endpoint).

\begin{lemma} \label{lem:deg}
	Algorithm \ref{alg:degrees} computes the hyperedge-degrees of all the hyperedges in $G$ in time $O(\rank\sum_v d_v d^+_v) = O(\inputsize \rank \edegen)$.
\end{lemma}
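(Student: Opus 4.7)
The plan is to verify the correctness and runtime of Algorithm~\ref{alg:degrees} by a case split on the neighbors of each hyperedge. Fix an edge $e$; I want to show that the algorithm contributes exactly $1$ to $\edeg{e}$ for every $f \neq e$ with $f \cap e \neq \emptyset$. I would partition such neighbors $f$ into two classes: (i) $\sink{f} \in e$, and (ii) $\sink{f} \notin e$.

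Class (i) is handled by summing $\din{v}$ over $v \in e$. By definition each $f$ in class (i) contributes exactly $1$ to $\din{\sink{f}}$, and since $\sink{f}$ is a unique vertex of $e$, the sum picks up $f$ once. Class (ii) is handled by the outlist sweep: for each $v \in e$, scan $f \in \Nout{v}$. The subtle issue here is that a single $f$ in class (ii) may share many vertices with $e$, creating a risk of overcounting. I would break ties by designating the canonical witness $w := \source{f \cap e}$; since $\sink{f} \notin e$ but $w \in e$, we have $w \neq \sink{f}$, so $f$ genuinely appears in $\Nout{w}$. The algorithm then counts $f$ only at the pair $(w, f)$, ensuring exactly one contribution per class-(ii) neighbor. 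Since classes (i) and (ii) are disjoint and together cover all intersecting $f \neq e$, the total computed is $\edeg{e}$.

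For the runtime, the indegree sum costs $O(|e|) = O(\rank)$ per edge and $O(m\rank)$ overall. The outlist sweep visits each triple $(e, v, f)$ with $v \in e$ and $f \in \Nout{v}$; verifying the two conditions "$v = \source{f \cap e}$" and "$\sink{f} \notin e$" takes $O(\rank)$ time by a scan of $f$. Reindexing,
\[
\sum_e \sum_{v \in e} \dout{v} = \sum_v d_v\, \dout{v},
\]
so the total outlist cost is $O(\rank \sum_v d_v\, \dout{v})$. Applying $\dout{v} \leq \edegen$ from \Thm{dual} together with $\sum_v d_v = \sum_e |e| \leq \inputsize$ then yields the claimed $O(\inputsize\, \rank\, \edegen)$ bound.

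The main obstacle is the deduplication in class (ii): naively adding $1$ per $(v,f)$ incidence would overcount $f$ by a factor of $|f \cap e|$. The designated source $\source{f \cap e}$ is the key move---it is well-defined, computable in $O(\rank)$ time, and complements the sink-based indegree accounting so that classes (i) and (ii) partition the neighbors perfectly. Everything else is routine bookkeeping.
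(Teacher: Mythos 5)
Your proposal is correct and follows essentially the same route as the paper's proof: the same split of intersecting neighbors $f$ into those with $\sink{f}\in e$ (captured by summing in-degrees over $v\in e$) versus $\sink{f}\notin e$ (captured once via the canonical witness $\source{f\cap e}$ in the outlist sweep), and the same reindexing $\sum_e\sum_{v\in e}\dout{v}=\sum_v d_v\dout{v}$ for the runtime. Your added remarks---that $\source{f\cap e}\neq\sink{f}$ guarantees $f$ appears in the right outlist, and that the two classes partition the neighbors---are just slightly more explicit versions of observations already implicit in the paper's argument.
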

\begin{proof}
	For every hyperedge $e$, the algorithm looks at each of the vertices on it, for each such vertex $v$, we further look at the out-neighbors $e'$ of $v$. 
	The number of pairs seen for $e$ is at most $\sum_{v \in e} d^+_v$. The total number of pairs seen is $\sum_e \sum_{v \in e} d^+_v$. Let $\bone_{v \in e}$ be the indicator
	for vertex $v$ being in $e$. We can write this sum as $\sum_e \sum_v \bone_{v \in e} d^+_v = \sum_v d^+_v \sum_e \bone_{v \in e} = \sum_v d_v d^+_v$.
	The final check on line $7$ can be performed in $O(r)$ time, by doing a linear search to find the common source of $e$ and $e'$, therefore the total complexity will be 
	$O(\rank \sum_v d_v d^+_v)$.
	
	We now prove correctness. Fix $e$, and let $e'$ be a different hyperedge. If $e\cap e' = \emptyset$, then $e'$ can not be either in-neighbor or out-neighbor of a vertex in $e$, therefore it can not contribute to the counts of $\edeg{e}$.
	
	Let $|e\cap e'|>0$. We distinguish two cases:
	
	\begin{asparaitem}
		\item $\sink{e'} \in e$: In this case $e'$ will contribute to $\din{v}$ for some $v$ in $e$, and therefore it will be counted in line $5$, note that $\sink{e'}\in e$, and therefore it can not be counted in line $8$.
		\item $\sink{e'} \notin e$: Let $v = \source{e\cap e'}$, only when the algorithm reaches line $7$ with $e$,$v$ and $e'$ will it count it in line $8$.\qedhere
	\end{asparaitem}
	
\end{proof}

\begin{algorithm}
	\caption{\edgedeg$(\vec{G})$\\Input: DAH $\vec{G} = (G,\pi)$\\ Output: Array of Hyperedge degrees $\edeg{\cdot}$}\label{alg:degrees}
	\begin{algorithmic}[1]
		\State Initialize $\edeg{\cdot}$ as an array containing $0$ for every hyperedge
		\For{$e \in E(G)$}
		\State $\edeg{e} \gets 0$
		\For{$v \in e$}
		\State $\edeg{e} \gets \edeg{e} + \din{v}$   \Comment{Case $(1)$}
		\For{$e' \in \Nout{v}\setminus\{e\}$}
		\If{$v = \source{e\cap e'}$ \textbf{and} $\sink{e'} \not\in e$}
		\State $\edeg{e} \gets \edeg{e} + 1$ \Comment{Case $(2)$}
		\EndIf
		\EndFor
		\EndFor
		\EndFor
		\State Return $\edeg{\cdot}$
	\end{algorithmic}
\end{algorithm}

Using these ideas, it is easy to enumerate all ancestor-descendant pairs. Recall that $e$ is a descendant of $f$ and $f$ is an ancestor of $e$ if $e\subset f$, 
and $\Nparents{e}$ is the number of ancestors of $e$. We now show an algorithm to 
construct all these pairs.

\begin{algorithm}
	\caption{\ancestordescendant$(\vec{G})$\\Input: DAH $\vec{G} = (G,\pi)$\\ Output: Arrays containing the numbers of ancestors and descendants of each hyperedge, $\Nparents{\cdot},\Nchildren{\cdot}$}\label{alg:parents}
	\begin{algorithmic}[1]
		\State Initialize $\Nparents{\cdot}, \Nchildren{\cdot}$ as arrays containing $0$ for each hyperedge
		\For{$e_1 \in E(G)$}
		\For{$v \in e_1$}
		\For {$e_2 \in \Nout{v}$}
		\If{$v = \source{e_1\cap e_2}$ \textbf{and} $e_2 \supset e_1$}
		\State $\Nparents{e_1} \gets \Nparents{e_1} + 1$
		\State $\Nchildren{e_2} \gets \Nchildren{e_2} + 1$
		\EndIf
		\EndFor
		\EndFor
		\EndFor
	\end{algorithmic}
\end{algorithm}

\noindent
\emph{Remark.}
	This algorithm can also be used to explicitly construct the sets of ancestors and descendants of each hyperedge in the same time, by replacing lines $6$ and $7$. However these sets will take super-linear space and thus are computed
	on the fly when needed in \Sec{contain}.

\noindent
\begin{lemma} \label{lem:anc-des}
	\Alg{parents} computes the number of ancestors and descendants of each hyperedge in time $O(r\sum_{v}d_v\dout{v}) = O(\inputsize \rank \edegen)$.
\end{lemma}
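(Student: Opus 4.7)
The plan is to mirror the argument used for \Lem{deg}, since \Alg{parents} has the same outer structure as \Alg{degrees}, only with a different predicate on line 5. The proof therefore splits into a correctness claim and a running-time claim.

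For correctness, I would fix any pair $(e_1,e_2)$ with $e_1 \subsetneq e_2$ and show that the algorithm increments $\Nparents{e_1}$ and $\Nchildren{e_2}$ exactly once. Since $e_1 \subseteq e_2$ we have $e_1 \cap e_2 = e_1$, so $\source{e_1 \cap e_2} = \source{e_1}$, which is the unique vertex in $e_1$ that can satisfy the predicate on line 5. Letting $v = \source{e_1}$, it remains to verify that $e_2 \in \Nout{v}$: clearly $v \in e_2$, and since $|e_1| \geq 2$ there is some other vertex of $e_1$ (and hence of $e_2$) that lies strictly after $v$ in $\pi$, so $v \neq \sink{e_2}$. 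Conversely, if $e_2 \not\supset e_1$, the condition $e_2 \supset e_1$ on line 5 rejects the pair, so no spurious contributions occur. This gives correctness.

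For the running time, I would count iterations of the innermost body. The total iteration count over the triple-nested loop is
\begin{equation*}
\sum_{e_1 \in E(G)} \sum_{v \in e_1} \dout{v} \;=\; \sum_{v \in V(G)} d_v \, \dout{v},
\end{equation*}
by swapping the order of summation and using $\sum_{e \ni v} 1 = d_v$. Each iteration executes the test on line 5, which consists of checking whether $v = \source{e_1 \cap e_2}$ and whether $e_2 \supset e_1$; both tests can be done in $O(\rank)$ time by a linear scan over the vertex lists of $e_1$ and $e_2$ (assuming each hyperedge is stored as a sorted list of vertices under $\pi$). Hence the total time is $O(\rank \sum_v d_v \dout{v})$.

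Finally, I would convert this to the form $O(\inputsize \rank \edegen)$: by \Thm{dual} the chosen ordering satisfies $\dout{v} \leq \edegen$ for every $v$, and $\sum_v d_v = \sum_e |e| \leq \inputsize$, so $\sum_v d_v \dout{v} \leq \edegen \cdot \inputsize$. The only mildly delicate point is the $O(\rank)$ cost of the line-5 predicate check, but this is routine once the hyperedges are stored in $\pi$-sorted order (which is already assumed in the DAH representation), so I do not anticipate a genuine obstacle.
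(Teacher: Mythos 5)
Your proposal is correct and follows essentially the same route as the paper: the paper likewise reduces the running-time bound to the argument of \Lem{deg} (counting $\sum_{e}\sum_{v\in e}\dout{v}=\sum_v d_v\dout{v}$ iterations with an $O(\rank)$ check each) and proves correctness by observing that for $e_2\supset e_1$ with $|e_1|\geq 2$, the vertex $v=\source{e_1}=\source{e_1\cap e_2}$ has $e_2\in\Nout{v}$, so the pair is found there and the source check ensures it is counted exactly once. Your write-up merely spells out these steps in more detail than the paper does.
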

\begin{proof} The proof of running time is same as that in~\Lem{deg}.
	For the correctness, we need to verify that the algorithm will check every parent $e_2$ of every hyperedge $e_1$. Note that we are assuming $|e_1|>1$. Let $v = \source{e_1}$, if $e_2 \supset e_1$ we will have $e_2 \in N^+(v)$, and therefore the algorithm will check the pair $e_1,e_2$.
\end{proof}

We can then compute the variants of hyperedge degree: the number of ancestors (hyperedges that contain $e$) $\Nparents{e}$, the number of descendants $\Nchildren{e}$, and the number of intersections $\Nintersect{e}$. Note that $\edeg{e} = \Nparents{e} + \Nchildren{e} + \Nintersect{e}$. For convenience, we state the following corollary.

\begin{corollary} \label{cor:hyp-deg} The values of $\edeg{e}, \Nparents{e}, \Nchildren{e}, \Nintersect{e}$ can be computed for all edges $e$ in $O(r\sum_v d_v d^+_v) = O(\inputsize\rank\edegen)$ time.
\end{corollary}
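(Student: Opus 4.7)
The plan is straightforward: combine the two lemmas we have just proved and observe that $\Nintersect{e}$ is available as a simple arithmetic combination, then sharpen the sum $\sum_v d_v \dout{v}$ using the bound on outdegrees from~\Lem{mb}.

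First I would invoke \Lem{deg} to compute the array $\edeg{\cdot}$ via~\Alg{degrees}, and then invoke \Lem{anc-des} to compute both $\Nparents{\cdot}$ and $\Nchildren{\cdot}$ via~\Alg{parents}. Both of these run in $O(\rank \sum_v d_v \dout{v})$ time, so two sequential calls remain within the same asymptotic bound. Next I would recall the identity stated right before the corollary, namely $\edeg{e} = \Nparents{e} + \Nchildren{e} + \Nintersect{e}$, and use it to obtain $\Nintersect{e} = \edeg{e} - \Nparents{e} - \Nchildren{e}$ by a single subtraction per hyperedge. Populating the fourth array therefore takes only $O(m)$ additional time, which is dominated by the cost of the previous steps.

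It remains to verify the second equality $O(\rank \sum_v d_v \dout{v}) = O(\inputsize \rank \edegen)$. Here I would use \Thm{dual} (or equivalently \Lem{mb}), which guarantees that the ordering $\pi$ produced by \hypermb{} satisfies $\dout{v} \leq \edegen$ for every vertex $v$. Assuming the algorithms of \Sec{degrees} are run on this particular ordering, I can bound $\sum_v d_v \dout{v} \leq \edegen \sum_v d_v$, and since each hyperedge $e$ contributes $|e|$ to $\sum_v d_v$, we get $\sum_v d_v = \sum_e |e| \leq \inputsize$. Chaining these gives $\sum_v d_v \dout{v} \leq \edegen \cdot \inputsize$, which is precisely the second bound in the corollary statement.

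There is no real obstacle here; the only subtlety to flag is that the bound $\dout{v} \leq \edegen$ holds only under the degeneracy ordering produced by \Alg{degeneracy}, so the statement implicitly assumes the orientation $\vec{G}$ passed to Algorithms~\ref{alg:degrees} and~\ref{alg:parents} is this optimal one. Under any other orientation, only the first (finer) bound $O(\rank \sum_v d_v \dout{v})$ is guaranteed.
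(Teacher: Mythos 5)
Your proposal is correct and matches the paper's (implicit) argument exactly: run \Alg{degrees} and \Alg{parents} via \Lem{deg} and \Lem{anc-des}, then recover $\Nintersect{e} = \edeg{e} - \Nparents{e} - \Nchildren{e}$ in $O(m)$ extra time, with the final bound following from $\dout{v} \leq \edegen$ under the degeneracy ordering and $\sum_v d_v \leq \inputsize$. Your closing remark that the $O(\inputsize\rank\edegen)$ form presupposes the ordering from \Alg{degeneracy} is a fair and accurate caveat, consistent with how the paper uses these bounds.
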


\section{Counting All Hypertriangles} \label{sec:hypertri}

With the degeneracy orientation and all the hyperedge degree information, we can describe the DITCH suite of algorithms
that compute all pattern counts. There are multiple procedures, so the analysis is split into subsection
corresponding to each procedure. We follow the presentation in \Sec{ideas} which gave an overview of the techniques.

\begin{asparaitem}
	\item {\it Triangle based patterns, \Sec{triangle}:} These are the patterns formed by three hyperedges such that there are three vertices that form a classic graph triangle like structure.
        This approach counts all patterns with $3$ blue regions and two blue regions with one red region (the third to fifth column in \Fig{types}).
	\item {\it Containment based patterns, \Sec{contain}:} These algorithms handle patterns 1, 2, 3, and 6, which involve an ancestor-descendant pair of hyperedges.
The degree computation done in \Sec{degrees} is central for these algorithms.
    \item {\it Star patterns, \Sec{star}:} This section is entirely devoted to counting patterns 9 and 10. As discussed in \Sec{ideas},
        we design algorithms that compute linear combinations of counts of various patterns.
    \item {\it Open patterns, \Sec{open}:} Finally, we give algorithms that count the open patterns in \Fig{types}. These
        are technically not hypertriangles, and more akin to hyperpaths. Still, the orientation ideas can be applied to these
        patterns as well.
\end{asparaitem}

\begin{theorem}
    There is an algorithm that computes the counts of all patterns with $3$ hyperedges in time $O(\rank^2 \sum_{v}d_v(\dout{v})^2) = O(\inputsize \rank^2 \edegen^2)$ and space $O(\inputsize)$.
\end{theorem}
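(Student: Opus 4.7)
The plan is to assemble the theorem from the four algorithmic modules outlined in \Sec{ideas}, showing that each can be run within the stated budget so that their sum remains $O(\rank^2 \sum_v d_v (\dout{v})^2)$. First, I would invoke \Thm{dual} to obtain, in $O(\inputsize)$ time, a \dah{} $\vec{G}=(G,\pi)$ with $\max_v \dout{v} = \edegen(G)$. Then I would run \Cor{hyp-deg} to precompute $\edeg{e}, \Nparents{e}, \Nchildren{e}, \Nintersect{e}$ for every hyperedge in $O(\inputsize \rank \edegen)$ time and $O(m)$ additional space. These arrays are cheaper than the claimed bound and drive the analyses of the remaining modules.

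Next, I would run the triangle-based enumeration of \Sec{triangle} for all closed patterns containing at least two blue regions (the bulk of the 20 closed types). The natural outwedge enumeration iterates, for each vertex $v$, over ordered pairs of outedges $(e_1, e_2)$ at $v$, then examines a third outedge incident at some vertex of $e_1\cup e_2$ to identify the hypertriangle's type; classifying a triple costs $O(\rank)$ by the paragraph following \Fig{types}, and locating the third edge adds another $O(\rank)$ factor from walking the intersection region. Summed globally, this gives $\sum_v d_v (\dout{v})^2 \cdot O(\rank^2)$, matching the target. For the containment-based module of \Sec{contain}, I would feed the ancestor/descendant structure from \Alg{parents} into direct count formulas for patterns $1,2,3,6$; the enumeration of ancestor-descendant pairs takes $O(\inputsize \rank \edegen)$ by \Lem{anc-des}, which is dominated by the triangle-based cost. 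Finally, the open patterns of \Sec{open} and the star module of \Sec{star} are invoked and, as sketched there, combined with previously computed counts to solve for $\ct{9}$ and $\ct{10}$ via independent linear combinations.

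The hardest module is the star computation of \Sec{star} for patterns 9 and 10, because simple combinatorial formulas like $\sum_v \binom{d_v}{3}$ overcount by the size of the red region and do not distinguish between pattern types. Following the approach in \Sec{ideas}, my plan is to produce two independent linear combinations of $\ct{1},\ldots,\ct{16}$: the first one (all patterns with a red region) obtained by enumerating outedge triples sharing a vertex in $\sum_v (\dout{v})^3 = O(\sum_v d_v (\dout{v})^2 \cdot \edegen)$ time and correcting the missed instances by a combinatorial formula on in-degrees of vertices inside the shared region; the second one obtained by weighting each configuration by the number of blue regions it induces, which requires inspecting, for every outwedge, the structure of its intersection with a third outedge. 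Since all of the other $\ct{i}$ involved in these combinations (patterns $1$--$8$ and $11$--$16$) have already been produced by the triangle-based and containment-based modules, the $2\times 2$ linear system in $\ct{9},\ct{10}$ can be solved in $O(1)$ time. Care must be taken to prove the system is non-degenerate; the weighting by number of blue regions is exactly what makes the two equations independent.

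For the overall bound, I would sum the costs: the orientation ($O(\inputsize)$), hyperedge-degree computation ($O(\inputsize \rank \edegen)$), ancestor/descendant enumeration ($O(\inputsize \rank \edegen)$), triangle-based enumeration ($O(\rank^2 \sum_v d_v (\dout{v})^2)$), star linear combinations ($O(\rank^2 \sum_v d_v (\dout{v})^2)$), and open patterns ($O(\rank^2 \sum_v d_v (\dout{v})^2)$). The triangle/star terms dominate, giving the claimed $O(\rank^2 \sum_v d_v (\dout{v})^2)$, which is at most $O(\inputsize \rank^2 \edegen^2)$ since $\dout{v} \leq \edegen$ and $\sum_v d_v = O(\inputsize)$. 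For space, all arrays are indexed by hyperedges or vertices, and the ancestor/descendant sets are reconstructed on-the-fly as noted in the remark following \Alg{parents}, so total space stays $O(\inputsize)$. The main obstacle I foresee is verifying the combinatorial identities in the star module, which requires a careful case analysis over the seven Venn regions to ensure every closed pattern is counted with exactly the right weight.
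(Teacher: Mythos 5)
\begin{quote}
Your proposal matches the paper's own argument: the same decomposition into an $O(\inputsize)$ degeneracy orientation, hyperedge-degree precomputation, triangle-based enumeration for the patterns with at least two blue regions, containment-based formulas for patterns $1$--$3$ and $6$, the star/weighted-star linear combinations to isolate $\ct{9}$ and $\ct{10}$, and the complement-count identities for the open patterns, with the identical per-module runtime bounds summing to $O(\rank^2 \sum_v d_v (\dout{v})^2)$. The only cosmetic difference is that the paper's two star equations form a triangular rather than a general $2\times 2$ system ($\ws$ involves $\ct{10}$ but not $\ct{9}$), so non-degeneracy is immediate.
\end{quote}
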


\subsection{Triangle based algorithm} \label{sec:triangle}

All patterns with either three blue regions (patterns 13-20) or two blue regions and a red region (patterns 4, 5, 7, 8, 11, 12) have the following key property.
For the hypertriangle formed by hyperedges $e_1, e_2, e_3$,
there exist $3$ distinct vertices $u,v,w$ with $u \in e_1 \cap e_2$, $v \in e_1 \cap e_3$ and $w \in e_2 \cap e_3$. 
Refer to \Fig{triangle}. 
The idea of the algorithm is to fix an starting vertex $u$, and look at pairs of hyperedges $e_1,e_2$ that have a source on $u$. We then select a vertex in $v$ in $e_1$ and $w$ in $e_2$ that come after $u$ in the degeneracy ordering. Finally, we look at hyperedges starting at $v$ (or $w$ if $w\prec v$), if they contain $w$ (or $v$), then we will have find one of the patterns.

To avoid counting the same triplet multiple times, we will identify each triplet of hyperedges forming one of these patterns to a specific triplet of vertices, called its witness. In the case of patterns $13-20$ the witness will be formed by the first vertex in each of the three blue regions. For the patterns with $1$ red and $2$ blue regions, the witness will be formed by the first vertex of each of those regions.
\begin{figure}[htbp]
    \centering
    \includegraphics[width=0.7\linewidth]{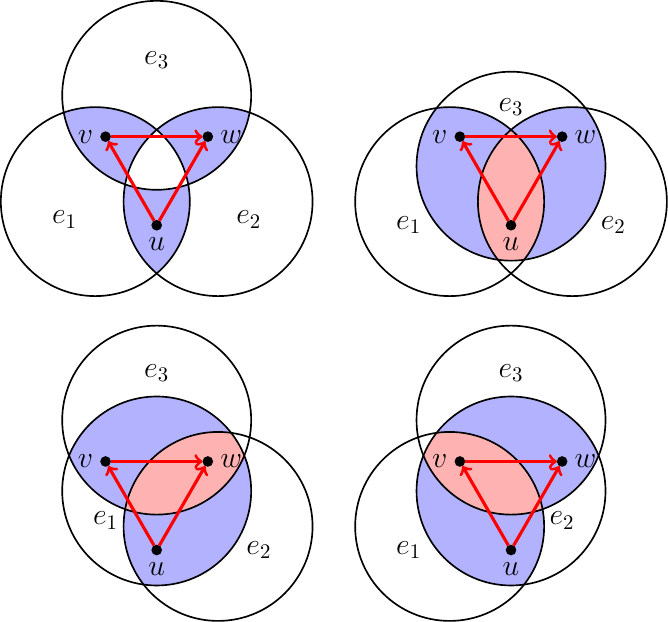}
    \caption{Examples of hypergraph patterns that are found using the Triangle based algorithm.}
    \label{fig:triangle}
\end{figure}

\begin{definition} [Witness of $\{e_1,e_2,e_3\}$]
    Given a triplet of hyperedges $\{e_1,e_2,e_3\}$, the witness $w(e_1,e_2,e_3)$ is a sorted triplet of vertices formed by:
    \begin{asparaitem}
        \item If $\type(e_1,e_2,e_3) \in [13,20]$: $\source{e_1\cap e_2\setminus e_3}$, $\source{e_1\cap e_3\setminus e_2}$ and $\source{e_2\cap e_3\setminus e_1}$.
        \item If $\type(e_1,e_2,e_3) \in \{4,5,7,8,11,12\}$: Assume without loss of generality that $e_1\cap e_2\setminus e_3=\emptyset$,  $w(e_1,e_2,e_3)$ will be formed by $\source{e_1\cap e_2\cap e_3}$, $\source{e_1\cap e_3\setminus e_2}$ and $\source{e_2\cap e_3\setminus e_1}$.
    \end{asparaitem}
\end{definition}

\begin{lemma}
    \Alg{triangles} determines the counts $\ct{4}, \ct{5}, \ct{7}, \ct{8}$, and each $\ct{i}$, $i \in [11,20]$ in time $O(\rank^2 \sum_{v} d_v(d^+_v)^2) = O(\inputsize\rank^2\edegen^2)$ 
\end{lemma}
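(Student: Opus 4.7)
The plan is to establish correctness via the witness framework, and then separately bound the tuples enumerated by the algorithm. The witness $w(e_1,e_2,e_3)$ is a sorted triple of region sources; by design the algorithm increments a counter only when the tuple currently being inspected matches the witness of the triple it has discovered. I would argue two things: (a) the witness is well-defined and injective on each pattern class, and (b) the algorithm visits each witness exactly once.

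For (a), I would case split by pattern. For patterns 13--20 the three blue regions $e_i \cap e_j \setminus e_k$ are all nonempty, so the sources defining the witness exist; for patterns $4,5,7,8,11,12$ exactly one pairwise intersection collapses into the red region and the witness uses $\source{e_1 \cap e_2 \cap e_3}$ in place of the missing blue source. In either case every witness vertex lies in a specific pair among $\{e_1,e_2,e_3\}$, so the triple of edges is recoverable from the witness. For (b), let $u \prec v \prec w$ be the witness of a qualifying triple and WLOG $u \in e_1 \cap e_2$. A short argument using minimality of $u$ in its own region, together with either emptiness of $e_1\cap e_2\cap e_3$ (patterns 13--20) or the identity $e_1\cap e_2 = e_1 \cap e_2 \cap e_3$ (the red-witness case), gives $u = \source{e_1 \cap e_2}$. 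Since $u$ is followed by $v,w$ inside $e_1$ and $e_2$, it is the sink of neither, so $e_1,e_2 \in \Nout{u}$, and the outer loop reaches $(u, e_1, e_2)$. The inner loops then select the next witness vertex $v \in e_1$ (after $u$) and an out-edge $e_3 \in \Nout{v}$; since $v \prec w$, the third hyperedge is indeed an out-edge of $v$ and contains $w \in e_2$. A witness-match check accepts the triple at this moment, and a call to $\type(e_1,e_2,e_3)$ in $O(\rank)$ time identifies which $\ct{i}$ to increment.

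For the runtime, I would charge every unit of work to a tuple of the form $(v, e, e_1, e_2)$ with $v \in e$ and $e_1, e_2 \in \Nout{v}$; the algorithm can be reorganized so that every inner iteration corresponds to such a tuple. The number of tuples is $\sum_v d_v (\dout{v})^2$, and each is processed in $O(\rank^2)$ time to compute the sources of the relevant intersections, verify the witness condition, and invoke $\type$. This yields the $O(\rank^2 \sum_v d_v (\dout{v})^2)$ bound, and the $O(\inputsize \rank^2 \edegen^2)$ form follows from $\dout{v} \leq \edegen$ (\Thm{dual}) together with $\sum_v d_v = \sum_e |e| \leq \inputsize$.

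The main obstacle is the witness step: the same triple can be reached from several starting positions in the enumeration, so correctness rests on the witness being a deterministic function of the unordered triple of hyperedges and on the algorithm rejecting all non-witness enumerations. The case split between three-blue patterns and two-blue/one-red patterns is where the verification requires the most care, because the definition of the witness differs between the two classes and the red-witness case must be shown to still uniquely identify the triple.
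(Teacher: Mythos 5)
Your overall strategy --- witness-based deduplication for correctness plus a charging argument for the runtime --- is the same as the paper's. Your existence argument (the first witness vertex $u$ precedes $v$ and $w$ inside the two edges containing it, so both are out-edges of $u$; the third edge contains $v$ and $w$ with $v \prec w$, so it is an out-edge of $v$) matches the paper's almost verbatim. The runtime accounting is also equivalent: the paper counts the inner iterations as $\sum_u \sum_{e_1 \in \Nout{u}} \sum_{v \in e_1} \dout{u}\dout{v}$, applies $\dout{u}\dout{v} \leq (\dout{u})^2 + (\dout{v})^2$ and $|e| \leq \rank$, and pays $O(\rank)$ per iteration for the witness and type checks; your repackaging into $\sum_v d_v(\dout{v})^2$ tuples at $O(\rank^2)$ each lands on the same bound.

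The gap is exactly where you flag it: the uniqueness step is asserted rather than proved, and the property you propose to establish in (a) --- that the witness is injective, so that ``the triple of edges is recoverable from the witness'' --- is both false in general (distinct triples of hyperedges can have identical witness vertex triples) and not what is needed. What is needed is that for a \emph{fixed} unordered triple, at most one labeled loop iteration $(u, e_1, v, e_2, e_3)$ passes the check $u = u'$, $v = v'$, $w' \in e_2 \cap e_3$. The paper closes this by a case analysis on relabelings: if the same triple were accepted under two different assignments of its edges to the loop roles, then --- checking each transposition and each $3$-cycle of $\{e_1, e_2, e_3\}$ --- at least two of the witness vertices would be forced into $e_1 \cap e_2 \cap e_3$, contradicting the witness definition, under which at most one witness vertex (the red-region source) lies in the triple intersection while the others lie in blue regions that exclude the third edge. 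You should replace step (a) with this relabeling analysis; the rest of your plan then goes through as in the paper.
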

\begin{proof}
    Lines $5$-$9$ can be computed in time $O(\rank)$. Therefore the runtime of the algorithm will be:
    \begin{align*}
        & \sum_{u} \sum_{e_1 \in \Nout{u}} \sum_{v \in e_1}  \dout{u}\dout{v}O(\rank) 
        = O\left(\rank \sum_e\sum_{u,v \in e}\dout{u}\dout{v}\right) 
        \\= & O\left(\rank \sum_{e} \sum_{u,v \in e}(\dout{u})^2 +(\dout{v})^2\right) 
         =  O\left( \rank \sum_e |e| \sum_{v \in e} (\dout{v})^2 \right)
         \\= & O\left( \rank \sum_v (\dout{v})^2 \sum_{e \ni u }|e|\right) = O\left( \rank^2 \sum_v d_v (\dout{v})^2\right)
     \end{align*}

     For the correctness we can see that the counts only increase where we are an in an iteration where $u,v$ are the first two elements of the witness and $w$ is in the intersection of $e_2,e_3$. First we show that each triplet $e_1,e_2,e_3$ will be counted only once.

     Assume otherwise, we have two (or more) instances of a triplet being counted. For that to happen the same hyperedges must appear in different loops, switching positions. Let $e_1,e_2,e_3$ be assigned in their corresponding loops in the first instance (and therefore $u \in e_1\cap e_2$, $v \in e_1\cap e_3$ and $w \in e_2 \cap e_3$) and consider the possible switches that can happen. We can see that it will always lead to at least two vertices being in the intersection of the three hyperedges, and therefore $u,v,w$ will not be a valid witness:
     \begin{asparaitem}
         \item $e_1 \leftrightarrow e_2$: In this case we will have $v \in e_2$ and $w \in e_1$.
         \item $e_1 \leftrightarrow e_3$: In this case we will have $u \in e_3$ and $w \in e_1$.
         \item $e_2 \leftrightarrow e_3$: In this case we will have $u \in e_3$ and $v \in e_2$.
         \item $e_1\rightarrow e_2 \rightarrow e_3 \rightarrow e_1$ or $e_1\leftarrow e_2 \leftarrow e_3 \leftarrow e_1$: In this case the three vertices will be in $e_1\cap e_2 \cap e_3$.
     \end{asparaitem}

    Only left is to show that the triplet will be counted once. Two of the hyperedges must be out-neighbors of $u$, as they contain $u$ and either $v$ or $w$ (or both), with at least one of them containing $v$. Moreover, the remaining hyperedges must be an out-neighbor of $v$, as it must contain both $v$ and $w$. Therefore the algorithm will find the witness the triplet $e_1,e_2,e_3$ when using $u$ and $v$ in the first and third loops.\qedhere
    
\end{proof}

\begin{algorithm}
\caption{\trianglebasedalg$(\vec{G})$\\
Input: DAH $\vec{G} = (G,\pi)$ \\
Output: The counts $\ct{4},\ct{5},\ct{7},\ct{8}$ and $\ct{i}$ for $i\in [11,20]$}\label{alg:triangles}
\begin{algorithmic}[1]
\For{$u \in V(G)$}
    \For{$e_1 \in \Nout{u}$}
        \For{$v \succ  u \in e_1$}
            \For{$e_2 \in \Nout{u}$, $e_3\in \Nout{v}$}
                \State $\type \gets \type(e_1,e_2,e_3)$
                \If{$\type \in[4,5]\cup[7,8]\cup[11,20]$}
                \State $(u',v',w') \gets w(e_1,e_2,e_3)$
                    \If{$u=u'$ \textbf{and} $v=v'$ \textbf{and} $w' \in e_2\cap e_3$}
                        \State $counts[\type] \gets counts[\type] + 1$
                    \EndIf
                \EndIf 
            \EndFor
        \EndFor
    \EndFor
\EndFor
\end{algorithmic}
\end{algorithm}

\subsection{Contained hyper-triangles} \label{sec:contain}

Recall $\Nchildren{e} = |\children{e}|$ and $\Nparents{e} = |\parents{e}|$. Note that~\Alg{parents} can compute these for all edges.
Given these amounts we can compute directly the number of instances of pattern $1$.

\begin{claim} \label{clm:type_one} $\ct{1} = \sum_{e \in E(G)} \Nchildren{e} \cdot \Nparents{e}$.
\end{claim}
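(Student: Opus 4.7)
The plan is to recognize pattern 1 in Figure \ref{fig:types} as the ``fully nested'' configuration: a triple of distinct hyperedges with $e_1 \subsetneq e_2 \subsetneq e_3$. Checking the Venn regions: $e_1$-only, $e_2$-only, $e_1\cap e_2\setminus e_3$, and $e_1\cap e_3\setminus e_2$ are forced empty by $e_1\subset e_2$ and $e_2\subset e_3$, while $e_3\setminus(e_1\cup e_2)$, $(e_2\cap e_3)\setminus e_1$, and $e_1\cap e_2\cap e_3$ are non-empty precisely because the containments are strict and the three hyperedges are distinct. So pattern 1 triples are in bijection with strictly-nested triples $e_1\subsetneq e_2\subsetneq e_3$ from $E(G)$.

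Next, I would set up a bijection between pattern 1 triples and the terms of the sum $\sum_{e} \Nparents{e}\cdot \Nchildren{e}$ by choosing the middle hyperedge as the distinguished element. Since $\subseteq$ is a partial order and the three hyperedges in a pattern 1 triple form a chain, there is a \emph{unique} middle hyperedge $e$ in each triple, i.e., the one satisfying both $e \subsetneq e'$ for some $e'$ in the triple and $e'' \subsetneq e$ for some $e''$ in the triple.

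For each fixed $e$, the number of pattern 1 triples in which $e$ is the middle element is exactly the number of ways to pick a strict ancestor of $e$ (an $e'\supsetneq e$, contributing a factor of $\Nparents{e}$) together with a strict descendant of $e$ (an $e''\subsetneq e$, contributing a factor of $\Nchildren{e}$); these two choices are independent because $e''\subsetneq e \subsetneq e'$ automatically gives three distinct hyperedges in chain order. Summing over all choices of middle element gives
\begin{equation*}
\ct{1} \;=\; \sum_{e\in E(G)} \Nparents{e}\cdot \Nchildren{e}.
\end{equation*}

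The only real obstacle is the uniqueness of the middle hyperedge, which is where the argument could double-count if not handled carefully; but this follows immediately from antisymmetry of $\subseteq$ on the set of distinct hyperedges in a chain. No other case analysis is needed since the containment relation is transitive and the three hyperedges are pairwise distinct.
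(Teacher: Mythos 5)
Your proof is correct and follows essentially the same route as the paper: identify Pattern 1 with strictly nested chains $e_1\subsetneq e_2\subsetneq e_3$, fix the (unique) middle hyperedge $e$, and count the independent choices of an ancestor and a descendant, giving $\Nparents{e}\cdot\Nchildren{e}$ per middle edge. Your added remarks on the uniqueness of the middle element and the forced non-emptiness of the Venn regions just make explicit what the paper leaves implicit.
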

\begin{proof}
    Pattern $1$ is formed by three hyperedges $e_1,e_2,e_3$ such that $e_1$ is a descendant of $e_2$ and $e_3$ is an ancestor of $e_2$. For a fix hyperedge $e$, we can get the number of instances where it act as the middle hyperedge of a type $1$ pattern by multiplying the number of descendants with the number of ancestors. Repeating over all hyperedges gives the total number of copies of Pattern $1$.
\end{proof}

\begin{claim} \label{clm:two_three} \Alg{double_parents} correctly counts $\ct{2}$ and $\ct{3}$ in $O(\rank\sum_{v} d_v(\dout{v})^2) = O(\inputsize\rank\edegen^2)$ time.
\end{claim}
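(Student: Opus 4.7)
The plan is to adapt the ancestor--descendant enumeration in \Alg{parents} to count triples forming patterns $2$ and $3$. Both patterns are characterized by having exactly one containment pair among their three hyperedges: some $e_1 \subset e_2$, together with a third hyperedge $e_3$ that intersects the pair but is neither an ancestor nor a descendant of either. Patterns $2$ and $3$ differ in which regions of the three-way intersection are non-empty --- intuitively, whether $e_3$ meets $e_1$ itself or only $e_2 \setminus e_1$. This classification is exactly what $\type(e_1, e_2, e_3)$ returns, in $O(\rank)$ time per triple, so a single unified enumeration can increment either $\ct{2}$ or $\ct{3}$ based on the type.

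I would structure the algorithm as a nested iteration closely resembling \Alg{parents}: for each vertex $v$, for pairs of outneighbor hyperedges of $v$ playing the roles of $e_1$ and $e_2$, perform the canonical containment tests $v = \source{e_1 \cap e_2}$ and $e_1 \subset e_2$ (so that the pair is identified exactly once), and then for each candidate third hyperedge $e_3$ drawn from the outneighborhood of an appropriately chosen vertex of $e_1$, compute $\type(e_1, e_2, e_3)$ and update the corresponding counter. The total work is dominated by triples sharing a common source, bounded by $\sum_v d_v (\dout{v})^2$ (with the $d_v$ factor arising from scanning vertices along the edges involved), and each triple contributes $O(\rank)$ work for the containment check and the $\type$ classification. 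This yields $O(\rank \sum_v d_v (\dout{v})^2)$, and applying $\dout{v} \le \edegen$ together with $\sum_v d_v \le \inputsize$ gives the stated $O(\inputsize \rank \edegen^2)$ bound.

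The main obstacle, as with \Alg{triangles} and \Alg{parents}, is proving exact-once counting. The natural canonical witness is the source of the contained hyperedge $e_1$: because patterns $2$ and $3$ admit exactly one containment pair among their edges, the identity of $e_1$ is intrinsic to the triangle (unlike pattern $1$, where two nested containments would make the choice ambiguous), so $\source{e_1}$ is unambiguously defined. A short case analysis mirroring the uniqueness argument in the proof of \Alg{triangles} then shows that no alternative assignment of the three edges to iteration roles witnesses the same triangle. Triples realizing patterns $1$ or $6$ (which also involve a containment) are harmlessly filtered out by the $\type$ test returning a value outside $\{2, 3\}$, and triples whose type matches but whose $v$ is not $\source{e_1}$ are rejected by the source check, so every pattern-$2$ and every pattern-$3$ triangle is counted exactly once.
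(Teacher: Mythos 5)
There is a genuine gap, and it starts with a mischaracterization of the patterns themselves. Patterns $2$ and $3$ are \emph{not} ``one containment pair plus a third hyperedge that is neither ancestor nor descendant of either'' --- that configuration belongs to patterns $6$--$8$, which the paper handles separately via the linear-combination identity $\ct{6}+\ct{7}+\ct{8}+2(\ct{4}+\ct{5})=\sum_e \Nparents{e}\cdot\Nintersect{e}$. Patterns $2$ and $3$ consist of a single hyperedge $e_1$ contained in \emph{both} of the other two, i.e.\ $e_1\subset e_2$ and $e_1\subset e_3$ with $e_2,e_3$ not nested in each other (they differ in whether $e_2$ and $e_3$ also meet outside $e_1$). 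Because of this, the premise underpinning your uniqueness argument --- ``patterns $2$ and $3$ admit exactly one containment pair among their edges'' --- is false. The doubly-contained edge $e_1$ is indeed intrinsic, but $e_2$ and $e_3$ play symmetric roles: in your asymmetric enumeration, where one of them is found as the ``containment partner'' of $e_1$ and the other as a ``third edge'' from some outneighborhood, the same triple is discovered again with those two roles swapped (both $e_2$ and $e_3$ are ancestors of $e_1$ and both lie in $\Nout{\source{e_1}}$), so each pattern-$2$/$3$ instance is counted twice unless you explicitly treat $\{e_2,e_3\}$ as an unordered pair.

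The paper's algorithm does exactly that and is simpler than what you propose: for each $e_1$ with $|e_1|>1$, iterate over unordered pairs $e_2,e_3\in\parents{e_1}$ (the ancestor sets being available from the machinery of \Alg{parents}), compute $\type(e_1,e_2,e_3)$ in $O(\rank)$ time, and increment the counter when the type is $2$ or $3$; correctness is then immediate since every pattern-$2$/$3$ triple has a unique doubly-contained edge and a unique unordered ancestor pair. Your runtime claim is also asserted rather than derived: the clean route is that every ancestor of $e_1$ contains $\source{e_1}$ and is not sunk there (as $|e_1|\geq 2$), so $\Nparents{e_1}\leq \dout{\source{e_1}}$, whence $\sum_{e}\Nparents{e}^2\leq \sum_e\sum_{v\in e}(\dout{v})^2=\sum_v d_v(\dout{v})^2$, giving $O(\rank\sum_v d_v(\dout{v})^2)$ after the per-triple $O(\rank)$ classification. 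Your idea of filtering by $\type$ is shared with the paper, but the enumeration and the exact-once argument need to be rebuilt around the correct structure of these two patterns.
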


\begin{proof}
    Patterns $2$ and $3$ are formed by a hyperedge $e_1$ that it is contained by two hyperedges $e_2$ and $e_3$. We can iterate over each pair of ancestors of each hyperedge to get the total count of each such patterns.
    
    The running time follows from the observation that if $f \supset e$ (and $e$ has arity\footnote{The pairs where the smaller hyperedge has arity $1$ can be handled separately, as there will be at most $O(m r)$ such pairs.} at least $2$) then at least one vertex in $e$ has both $e$ and $f$ as outedges. Hence, $\Nparents{e}^2 \leq \sum_{v \in e} (d^+_v)^2$.
    Using the same idea from \Lem{deg}, $\sum_e \Nparents{e}^2 \leq \sum_e \sum_{v \in e} (d^+_v)^2$ $\leq \sum_v d_v (d^+_v)^2$. Running Line 3 takes $O(r)$ time.
\end{proof}

\begin{algorithm}
\caption{\containedalg($\vec{G}$) \\Input: DAH $\vec{G} = (G,\pi)$ \\
Output: The counts $\ct{2}$ and $\ct{3}$}\label{alg:double_parents}
\begin{algorithmic}[1]
\For{$e_1 \in E(G) : |e_1| > 1$}
    \For{$e_2,e_3\in \parents{e_1}$}
        \State $\type \gets \type(e_1,e_2,e_3)$
        \If{$\type \in [2,3]$} 
            \State $counts[\type] \gets counts[\type] + 1$
        \EndIf
    \EndFor
\EndFor
\end{algorithmic}
\end{algorithm}

For obtaining the counts of pattern $6$, we observe the following expression. Recall that $\edeg{e}$ is the hyperedge degree of $e$, that is, the number of hyperedges that share a vertex with $e$:

\begin{claim} \label{clm:type_six}
    $\ct{6} + \ct{7} + \ct{8} + 2 (\ct{4}+\ct{5})= \sum_{e} \Nparents{e} \cdot \Nintersect{e}$.
\end{claim}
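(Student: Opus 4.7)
The plan is to give a combinatorial interpretation of the right-hand side. By definition,
\[
\sum_{e \in E(G)} \Nparents{e} \cdot \Nintersect{e}
\]
counts ordered triples $(e, f, g)$ of pairwise distinct hyperedges with $f$ a proper ancestor of $e$ (that is, $e \subsetneq f$) and $g$ properly intersecting $e$ (that is, $g \cap e \neq \emptyset$ but $g \not\subseteq e$ and $e \not\subseteq g$). Since $e \subsetneq f$ and $g \cap e \neq \emptyset$ imply $g \cap f \supseteq g \cap e \neq \emptyset$, every such triple consists of three pairwise intersecting hyperedges and so forms a closed hypertriangle. Consequently the right-hand side equals $\sum_{i=1}^{20} k_i \cdot \ct{i}$, where $k_i$ is the number of orderings $(e, f, g)$ of the three hyperedges in a pattern-$i$ copy satisfying the two conditions. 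The claim reduces to showing $k_4 = k_5 = 2$, $k_6 = k_7 = k_8 = 1$, and $k_i = 0$ for every other $i \in [1,20]$.

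To establish these counts I would do a case split based on the containment structure of each closed pattern. If a pattern has no pair of hyperedges in containment, then $k_i = 0$ since $f$ must be a proper ancestor of $e$; inspecting \Fig{types}, this handles all of patterns $9$ through $20$. Pattern $1$ is a chain $e_1 \subsetneq e_2 \subsetneq e_3$, so after any choice of $e$ the other two hyperedges are one ancestor and one descendant, never a proper intersector, giving $k_1 = 0$. Patterns $2$ and $3$ have the structure $e_1 \subsetneq e_2$ and $e_1 \subsetneq e_3$ with $e_2$ and $e_3$ incomparable; the only choice of $e$ with a proper ancestor is $e_1$, but then the third hyperedge is itself an ancestor, so $k_2 = k_3 = 0$. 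Patterns $4$ and $5$ have $e_1, e_2 \subsetneq e_3$ with $e_1$ and $e_2$ properly intersecting; each of $e \in \{e_1, e_2\}$ paired with $f = e_3$ and $g$ the remaining smaller hyperedge is valid, yielding $k_4 = k_5 = 2$. Patterns $6$, $7$, and $8$ each have exactly one containment $e_1 \subsetneq e_2$ with $e_3$ properly intersecting $e_1$, so only $(e, f, g) = (e_1, e_2, e_3)$ is valid and $k_6 = k_7 = k_8 = 1$. Summing these coefficients against $\ct{i}$ recovers the claimed identity.

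The principal obstacle is the case analysis itself: one must correctly read off the containment structure of each closed pattern from \Fig{types}, using the rule that $e_a \subseteq e_b$ if and only if every Venn region inside $e_a$ but outside $e_b$ is empty. Once the containment structure of each pattern is pinned down, counting valid orderings per pattern is routine bookkeeping.
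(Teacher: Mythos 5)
Your proposal is correct and follows essentially the same route as the paper: both interpret $\sum_e \Nparents{e}\cdot\Nintersect{e}$ as counting triples (edge, ancestor, proper intersector) and then determine the multiplicity with which each closed pattern is counted, arriving at coefficient $2$ for patterns $4,5$ and $1$ for patterns $6,7,8$. Your version is in fact slightly more complete, since you explicitly verify that the coefficient is $0$ for patterns $1$--$3$ and $9$--$20$, a converse direction the paper leaves implicit.
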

\begin{proof}
    The term on the right gives for every hyperedge the product of the number of parents and number of hyperedges intersecting it. Let $e_1$ be a hyperedge, let $e_2$ be a hyperedge intersecting it and $e_3$ a hyperedge containing $e_1$. We distinguish two cases:
    \begin{asparaitem}
        \item $e_3$ is a parent of $e_2$: In this case, we have that the triplet $e_1,e_2,e_3$ will be of type $4$ or $5$. Moreover we will count this triplet twice, as $e_2$ intersects with $e_1$ and has $e_3$ as a parent.
        \item $e_3$ is not a parent of $e_1$: In this case the triplet $e_1,e_2,e_3$ must of type $6$,$7$ or $8$.\qedhere
    \end{asparaitem}
\end{proof}
Altogether, we get the following lemma.
\begin{lemma} There is an algorithm that, given the counts $\ct{4}$, $\ct{5}$, $\ct{7}$, $\ct{8}$, computes $\ct{1}$, $\ct{2}$, $\ct{3}$, $\ct{6}$ in additional $O(\rank\sum_{v} d_v(\dout{v})^2)$ $=$ $O(\inputsize \rank \edegen^2)$ time.
\end{lemma}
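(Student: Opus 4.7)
The plan is to assemble the pieces already developed in this subsection into a single procedure. First, as preprocessing, I would invoke \Alg{degrees} and \Alg{parents}; by \Cor{hyp-deg} this produces, in $O(\inputsize\rank\edegen)$ time, the values $\edeg{e}$, $\Nparents{e}$, $\Nchildren{e}$, and $\Nintersect{e}=\edeg{e}-\Nparents{e}-\Nchildren{e}$ for every hyperedge $e$. This cost is comfortably absorbed into the target bound $O(\inputsize\rank\edegen^2)$.

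Second, I would compute $\ct{1}$ in $O(m)$ additional time by directly evaluating the formula from \Clm{type_one}, namely $\ct{1}=\sum_{e\in E(G)} \Nchildren{e}\cdot\Nparents{e}$, using the values just precomputed.

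Third, I would run \Alg{double_parents} to obtain $\ct{2}$ and $\ct{3}$ directly; by \Clm{two_three} this takes $O(\rank\sum_{v} d_v(\dout{v})^2)=O(\inputsize\rank\edegen^2)$ time, which is exactly the bound claimed by the lemma. Finally, I would use the identity from \Clm{type_six} together with the given inputs $\ct{4},\ct{5},\ct{7},\ct{8}$ to solve for $\ct{6}$ by a single rearrangement:
\[
\ct{6} \;=\; \sum_{e\in E(G)} \Nparents{e}\cdot\Nintersect{e} \;-\; \ct{7} \;-\; \ct{8} \;-\; 2\bigl(\ct{4}+\ct{5}\bigr).
\]
Since $\Nparents{e}$ and $\Nintersect{e}$ are already available, the right-hand sum and subtractions are evaluated in $O(m)$ additional time.

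The main obstacle is essentially cosmetic rather than mathematical: every ingredient has already been proved, so the remaining work is only to verify that the four steps fit together and that the dominant cost is the $O(\inputsize\rank\edegen^2)$ from \Alg{double_parents}, with all other steps (preprocessing, the linear-time evaluations for $\ct{1}$ and $\ct{6}$) being strictly lower order. Summing the four contributions then yields the stated runtime, completing the proof.
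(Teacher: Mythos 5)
Your proposal matches the paper's own proof: both invoke Algorithms~\ref{alg:degrees} and~\ref{alg:parents} for the degree quantities, evaluate $\ct{1}$ via \Clm{type_one} and $\ct{6}$ via the rearranged identity of \Clm{type_six} in $O(m)$ time, and obtain $\ct{2},\ct{3}$ from \Alg{double_parents}, with the same dominant cost. The argument is correct and essentially identical to the paper's.
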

\begin{proof} 
    The number of ancestors and descendants together with the hyperedge-degrees can be computed in time $O(r\sum_v d_v d^+_v)$ using Algorithms \ref{alg:degrees} and \ref{alg:parents}. We can then compute patterns of type $1$ and $6$ in $O(m)$ time using \Clm{type_one} and \Clm{type_six} respectively.
    Finally, we use Algorithm \ref{alg:double_parents} for computing the number of patterns $2$ \& $3$.
\end{proof}

\subsection{Star-based counting} \label{sec:star}

Any hypertriangle that has a red region (patterns 1 to 16) can be thought of as containing a ``star'',
since there is some vertex common to all hyperedges. By counting all such triples,
we get the sum of all the pattern counts.

\begin{figure}[htbp]
    \centering
    \includegraphics[width=0.65\linewidth]{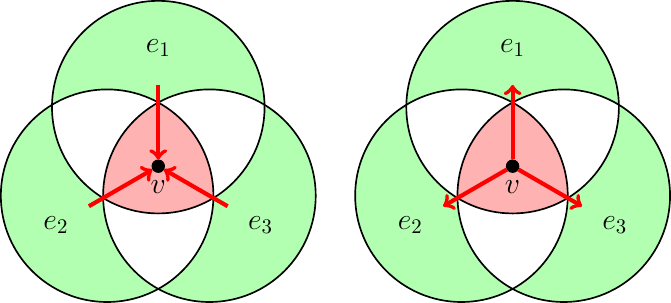}
    \caption{The two possible cases in a star-based hypertriangle.}
    \label{fig:stars}
\end{figure}

\begin{lemma}
    \Alg{stars} computes the total $\sum_{i=1}^{16} \ct{i}$ in time $O(\rank\sum_{v} (\dout{v})^3)$ $=O(\inputsize\rank\edegen^2)$.
\end{lemma}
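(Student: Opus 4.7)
The plan is to recognize that the patterns indexed $1$ through $16$ are precisely the hypertriangles with a non-empty red region, i.e., unordered triples of distinct hyperedges $\{e_1, e_2, e_3\}$ whose common intersection $I = e_1 \cap e_2 \cap e_3$ is non-empty. So $\sum_{i=1}^{16} \ct{i}$ equals the number of such triples, and the task is to enumerate each such triple exactly once. The natural witness assignment is $v^* := \source{I}$, the first vertex of $I$ according to $\pi$; this is uniquely determined and well-defined whenever $I \neq \emptyset$. The algorithm increments the count at a vertex $v$ precisely when $v = v^*$, guaranteeing no overcount.

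Next I would analyze when the witness $v^*$ is reachable via outedge enumeration at $v^*$, giving two cases that match \Fig{stars}. In Case~1 ($|I| \ge 2$), we have $v^* = \min(I) < \max(I) \le \sink{e_i}$ for every $i$ because $\max(I) \in e_i$; hence $v^*$ is not the sink of any $e_i$, so $e_1, e_2, e_3 \in \Nout{v^*}$ and the triple is discovered by iterating over all triples of outedges of $v^*$ and checking $v^* = \source{e_1 \cap e_2 \cap e_3}$. In Case~2 ($|I| = 1$), $v^*$ may or may not be a sink; if $v^* = \sink{e_i}$ for some $i$, then the outedge loop at $v^*$ misses the triple, and the algorithm catches these via a secondary sub-routine (for instance, by iterating outedge-pairs at $v^*$ together with edges having $v^*$ as sink, and checking that the three-way intersection collapses to $\{v^*\}$). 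Correctness follows by observing that the witness $v^*$ is unique and that together the two cases cover every triple with $I \neq \emptyset$ exactly once.

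For the runtime, the Case~1 enumeration visits $O(\dout{v}^3)$ outedge triples at each vertex $v$, and for each triple the source of the triple-intersection can be computed by a single linear sweep through the three hyperedges in $O(\rank)$ time, yielding $O(\rank \sum_v (\dout{v})^3)$. Case~2 is organized so that its work is dominated by outedge enumeration from $v^*$ (and thus folds into the same asymptotic bound). To get the stated $O(\inputsize \rank \edegen^2)$ form, I would use $\dout{v} \le \edegen$ by \Thm{dual}, so that
\[
\sum_v (\dout{v})^3 \;\le\; \edegen^2 \sum_v \dout{v} \;=\; \edegen^2 \sum_e (|e|-1) \;\le\; \edegen^2 \cdot \inputsize.
\]
The main obstacle I anticipate is Case~2: $\din{v}$ is not bounded by $\edegen$ in general, so a direct ``pair-of-outedges plus in-edge'' enumeration could blow up past $\sum_v (\dout{v})^3$ unless the bookkeeping charges each counted triple to a structure already paid for by outedge enumeration (for example, by fixing one of the outedges and showing that each relevant in-edge is charged at most once to some outedge pair, keeping the total work within the stated budget).
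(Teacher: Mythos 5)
There is a genuine gap, and it is exactly the one you flag at the end: your proposal has no working mechanism for the triples in which some $e_i$ has its sink in $I=e_1\cap e_2\cap e_3$, and your suggested sub-routine (``iterating outedge-pairs at $v^*$ together with edges having $v^*$ as sink, and checking that the three-way intersection collapses to $\{v^*\}$'') costs $\Theta(\din{v}(\dout{v})^2)$ per vertex --- and in the worst case $\Theta((\din{v})^3)$, since two or all three of the edges may end at $v^*$ --- which is not bounded by $\sum_v(\dout{v})^3$ because $\din{v}$ is not controlled by $\edegen$. The paper's resolution is not a charging argument but a counting-by-formula argument, and it is the entire point of \Alg{stars}: if you fix a vertex $v$ and pick any $a\geq 1$ edges with sink $v$ together with any $3-a$ edges in $\Nout{v}$, the resulting triple \emph{automatically} contains $v$ in its common intersection, so it is automatically one of patterns $1$--$16$; moreover all edges of a triple whose sinks lie in $I$ share the same sink, so each such triple is attributed to a unique vertex. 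Hence no per-triple check is needed at all, and these triples are counted in $O(1)$ time per vertex by the closed-form expression $\tfrac16(\din{v})^3+\tfrac12(\din{v})^2\dout{v}+\tfrac12\din{v}(\dout{v})^2$ on line~3. Without this observation the stated runtime cannot be achieved, so the missing step is not a technicality.

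A secondary inaccuracy: your case split ($|I|\geq 2$ versus $|I|=1$) does not match the algorithm's split, which is on whether some $\sink{e_i}$ lies in $I$ (see \Fig{stars}). A triple can have $|I|\geq 2$ and still have $\sink{e_i}\in I$ (e.g.\ when $e_i=I$); the guard on line~5 of \Alg{stars} deliberately rejects such triples from the outedge-triple loop because they are already paid for by the formula. Under your accounting these triples would be claimed by your Case~1 loop, so your ``counted exactly once'' argument, read against the actual algorithm, assigns some triples to the wrong case and would double-count them if your Case~2 routine were also run. Your identification of $\sum_{i=1}^{16}\ct{i}$ with the number of triples having $I\neq\emptyset$, your uniqueness argument for $\source{I}$, and your runtime reduction $\sum_v(\dout{v})^3\leq\edegen^2\inputsize$ are all correct and consistent with the paper.
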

\begin{proof}
	The running time follows since we iterate over all vertices and all triples of out-neighbors.
    The computation on line $3$ can be done in constant time (one can precompute the in-degrees in $O(m\rank)$ time) and the check on line $5$ can be done with a linear search in time $O(r)$.

    Now we prove the correctness. If a triplet $e_1,e_2,e_3$ do not share a vertex, then they will not contribute to the count $S$, as there is no vertex that has all of them as out-neighbors or in-neighbors. Now, let $e_1,e_2,e_3$ be a triplet with non-zero intersection. And let $v = \source{e_1\cap e_2 \cap e_3}$. We distinguish the two cases, as shown in \Fig{stars}.
    \begin{asparaenum}
    \item One of the hyperedges $e_1$,$e_2$,$e_3$ has its endpoint in $e_1 \cap e_2 \cap e_3$. Note that if more than one hyperedge have the endpoint in $e_1 \cap e_2 \cap e_3$, then they must have the same endpoint. Let $v$ be that vertex. We have three sub-cases: either $1$ hyperedge ends in $v$, two hyperedges end in $v$, or all the hyperedges end in $v$. The contribution of line $3$ over all vertices is: $\sum_{v\in V(G)} \frac{1}{6}(\din{v})^3 + \frac{1}{2}(\din{v})^2\dout{v} + \frac{1}{2}\din{v} (\dout{v})^2$.
    
    The first term is equal to the number of triplets that have their endpoint in $v$, the second term is the number of triplets where $2$ vertices have their endpoint in $v$ and the other is an out-neighbor of $v$ and the third term is the number of triplets where $1$ vertex has the endpoint in $v$ but the other $2$ does not.
    
    \item $e_1$,$e_2$,$e_3$ do not have their endpoint in $e_1 \cap e_2 \cap e_3$. Because $e_1 \cap e_2 \cap e_3 \neq \emptyset$ there will be at least a vertex $v$ that have all these hyperedges as out-neighbors. The algorithm loops over all triplets of out-neighbors of each vertex, so it will check all possible copies of this type.

    Note that also a triplet gets counted only when looking at the vertex $v$ which is first in the intersection. Therefore we will count each valid triplet only once.\qedhere
\end{asparaenum}
    
\end{proof}

\begin{algorithm}
    \caption{\starsalg$(\vec{G})$ \\Input: DAH $\vec{G} = (G,\pi)$ \\ Output: The number of stars $S$}\label{alg:stars}
\begin{algorithmic}[1]
\State $S \gets 0$
\For{$v \in V(G)$}
    \State $S \gets S + \frac{1}{6}(\din{v})^3 + \frac{1}{2}(\din{v})^2\dout{v} + \frac{1}{2}\din{v} (\dout{v})^2$ \Comment{Case $(1)$}
    \For{$e_1,e_2,e_3 \in \Nout{u}$} 
        \If{$v = \source{e_1 \cap e_2 \cap e_3}$ \textbf{and} $\sink{e_1},\sink{e_2},\sink{e_3} \not\in e_1 \cap e_2 \cap e_3$} 
            \State $S \gets S + 1$ \Comment{Case $(2)$}
        \EndIf
    \EndFor
\EndFor
\State Return $S$
\end{algorithmic}
\end{algorithm}

\subsubsection{Weighted star counts}

We extend the star counting to compute a more complex weighted sum.
Let an extended-star be a star, with a pair of hyperedges $e_1,e_2$ such that $e_1 \cap e_2 \setminus e_3 \neq \emptyset$. 
This corresponds to the patterns $1,3-8,10-16$. We will weight the counts by the number of pairs fulfilling this property.
The overall weighted sum of extended-stars is the quantity $\ws$. 

\begin{definition} [$\ws$]
    Given a graph $G$, the weighted sum of extended stars $\ws$ is defined as the number of tuples $(\{e_1,e_2\},e_3)$ such that $e_1 \cap e_2 \setminus e_3 \neq \emptyset$ and $e_1 \cap e_2 \cap e_3 \neq \emptyset$.
\end{definition}

We can show the following:

\begin{claim} \label{clm:ws}
\begin{align*}
    \ws = &(\ct{1} + \ct{3} + \ct{6} + \ct{10}) \\+ &2(\ct{4} + \ct{5} + \ct{7} + \ct{8} + \ct{11} + \ct{12}) \\+ &3(\ct{13} + \ct{14} + \ct{15} + \ct{16}) 
\end{align*}
\end{claim}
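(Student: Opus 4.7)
The plan is to compute $\ws$ by exchanging the order of summation: first sum over unordered triples of hyperedges $\{f_1,f_2,f_3\}$, and for each such triple count the number of distinguished edges $e_3 \in \{f_1,f_2,f_3\}$ such that, with $\{e_1,e_2\}$ being the remaining pair, both defining conditions of $\ws$ are satisfied. The first step will therefore be to swap the summation and identify, for each triple, its total contribution to $\ws$.

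Next, I will exploit that the center condition $e_1\cap e_2\cap e_3 \neq \emptyset$ is symmetric in the three edges --- it only depends on whether the triple's central ``red'' region is non-empty. Hence any triple with empty central region contributes zero, which, by the classification in \Fig{types}, removes all open patterns as well as all closed patterns $17$--$20$. For a triple of pattern type $i\in[1,16]$, any choice of $e_3$ satisfies the center condition; and the remaining condition $e_1\cap e_2\setminus e_3\neq\emptyset$ says exactly that the pair (``blue'') region opposite $e_3$ is non-empty. So the contribution of the triple to $\ws$ equals the number of non-empty blue regions of pattern $i$, since each non-empty blue region picks out one valid choice of $e_3$.

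To close the argument I will read off the number of blue regions for each pattern in $[1,16]$ directly from \Fig{types}: patterns $2$ and $9$ have zero blue regions, patterns $1,3,6,10$ have one, patterns $4,5,7,8,11,12$ have two, and patterns $13,14,15,16$ have all three. Summing $\ct{i}$ weighted by the number of blue regions then gives the identity in the claim statement (patterns $2$ and $9$ dropping out since their weight is $0$). The only non-routine part of the proof is this final tally: it is a direct case analysis over \Fig{types}, but it is where an error is most likely to creep in; everything else is a clean reindexing that follows from the symmetry of the two conditions defining $\ws$.
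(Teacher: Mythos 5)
Your proposal is correct and is essentially the paper's own argument: the paper also double-counts tuples $(\{e_1,e_2\},e_3)$ by pattern type, observing that each instance of a pattern contributes one tuple per non-empty pairwise ("blue") region, yielding weights $1$ for patterns $1,3,6,10$, weight $2$ for $4,5,7,8,11,12$, weight $3$ for $13$--$16$, and $0$ for everything else (no red region, or no blue region as in patterns $2$ and $9$). Your rephrasing via the symmetry of the red-region condition and the identification of $e_1\cap e_2\setminus e_3$ with the blue region opposite $e_3$ is just a slightly more explicit presentation of the same tally.
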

\begin{proof}
    Let $(\{e_1,e_2\},e_3)$ be a tuple such that $e_1\cap e_2 \setminus e_3 \neq \emptyset$ and $e_1\cap e_2 \cap e_3  \neq \emptyset$. Note that every instance of patterns $1,3,6,10$ contains one such tuple, patterns $4,5,7,8,11,12$ contains two, and patterns $13$ to $16$ contains three such tuples. Any other pattern either has $e_1\cap e_2 \cap e_3 = \emptyset$ or $e_1\cap e_2 \setminus e_3 = \emptyset$ for all their tuples $(\{e_1,e_2\},e_3)$.
\end{proof}

Let $e_1,e_2$ be the two edges such that $e_1 \cap e_2 \setminus e_3 \neq \emptyset$. We need to consider two main cases:
\begin{asparaenum}
    \item The edge $e_3$ has its endpoint in $e_1 \cap e_2$: In this case, we can start at a vertex $u$ and look at every pair of out-edges from $u$ such that: $u$ is the first vertex in the intersection of $e_1,e_2$ and the intersection has at least $2$ vertices. Then we just need to add the in-degree of each vertex $v$ in the intersection. We will need to subtract the number of edges that span the entire intersection and end in the last vertex of it.
    \item The edge $e_3$ does not have its endpoint in $e_1 \cap e_2$. We distinguish two sub-cases:
    \begin{asparaenum}
        \item $\{\sink{e_1}\} = \{\sink{e_2}\} = e_1\cap e_2\cap e_3$: We start at a vertex $u$ and look at pairs of out-neighbors like in the previous case. Then, if $e_1$ and $e_2$ end in the same vertex $v$ we look at the hyperedges starting from $v$.
        \item Otherwise, we can start at a vertex $u$ and look at triplets of out-edges from $u$. Then we manually check if the hyperedges intersect pairwise without the other hyperedge.
    \end{asparaenum}
\end{asparaenum}

\begin{figure}[htbp]
    \centering
    \includegraphics[width=0.65\linewidth]{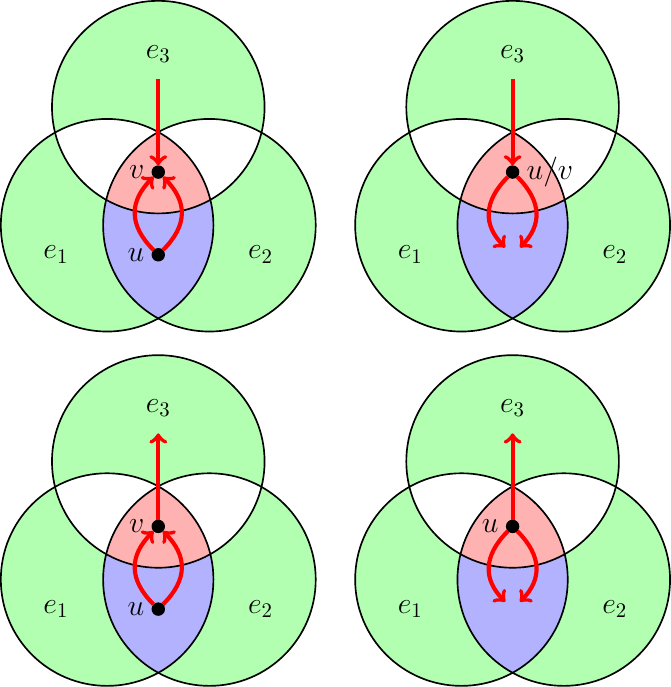}
    \caption{Different examples of how an extended star could appear in the directed hypergraph. 
    \\The two figures on top correspond to case $(1)$, on the right $u$ is the common source of $e_1$ and $e_2$, but also the endpoint of $e_3$. In the example on left $v$ is a different vertex than $u$.
    \\The bottom left is an example of case $(2.a)$, where $e_1$ and $e_2$ have a common endpoint $v$, which is a source of $e_3$. The bottom right shows an example of case $(2.b)$, where $u$ is the common source of all the hyperedges.
    }
    \label{fig:extendedstars}
\end{figure}

\begin{algorithm}
\caption{\extstarsalg$(\vec{G})$ \\Input: DAH $\vec{G} = (G,\pi)$ \\ Output: The number of extended stars $\ws$}\label{alg:extended-stars}
\begin{algorithmic}[1]
\State $\ws \gets 0$
\For{$u \in V(G)$}
    \For{$e_1,e_2 \in \Nout{u}$}
        \If{$u = \source{e_1 \cap e_2}$ \textbf{and} $|e_1 \cap e_2| > 1$}
            \For{$v \in e_1 \cap e_2$}
                \State $\ws \gets \ws + \din{v}$ \Comment{Case $(1)$}
            \EndFor
            \For{$e_3 \in \Nout{u}\setminus\{e_1,e_2\}$}
                \If{$e_3 \supseteq e_1 \cap e_2$ and $\sink{e_3} \in e_1 \cap e_2$}
                    \State $WS \gets WS - 1$ \Comment{Subtract if $e_3 \supseteq e_1 \cap e_2$}
                \EndIf
            \EndFor
            \If{$\sink{e_1} = \sink{e_2}$}
                \State $v \gets \sink{e_1}$
                \For{$e_3 \in \Nout{v}$}
                    \If{$e_3 \cap e_2\cap e_1 = \{v\}$}
                        \State $\ws \gets \ws + 1$ \Comment{Case $(2.a)$}
                    \EndIf
                \EndFor
            \EndIf
        \EndIf
        \For{$e_3 \in \Nout{u}\setminus\{e_1,e_2\}$} 
            \If{$u = \source{e_1 \cap e_2 \cap e_3}$ \textbf{and} $e_1 \cap e_2 \setminus e_3 \neq \emptyset$ \textbf{and} $\sink{e_3} \not\in e_1 \cap e_2$}
                    \State $\ws \gets \ws + 1$ \Comment{Case $(2.b)$}
            \EndIf
        \EndFor
    \EndFor
\EndFor
\State Return $\ws$
\end{algorithmic}
\end{algorithm}
\begin{lemma}
    Algorithm \ref{alg:extended-stars} computes $\ws$ in time 
    $O(\rank^2\sum_{v} d_v(\dout{v})^2)$ $=$ $O(\inputsize\rank^2 \edegen^2)$.
\end{lemma}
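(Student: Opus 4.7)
The approach is to charge every operation of \Alg{extended-stars} to an iteration of the outer loop over triples $(u, e_1, e_2)$ with $e_1, e_2 \in \Nout{u}$, and then bound the resulting sum using a witness-exchange followed by the AM-GM trick that already appeared in the analysis of \Alg{triangles}. The two outer loops enumerate $\sum_u (\dout{u})^2$ such triples, and every primitive on lines~4--20 (set intersection, source and sink lookup, containment and set-equality tests) can be performed in $O(\rank)$ time given precomputed sink and in-degree arrays and sorted hyperedge lists.

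Inside a fixed triple I would separate four inner contributions: (a) the in-degree summation on lines~5--6, costing $O(\rank)$; (b) the containment-subtract loop on lines~7--9, costing $O(\rank \cdot \dout{u})$; (c) the case~(2.a) loop on lines~11--14, costing $O(\rank \cdot \dout{v})$ with $v = \sink{e_1} = \sink{e_2}$, and only invoked when the two sinks coincide; and (d) the case~(2.b) loop on lines~17--19, costing $O(\rank \cdot \dout{u})$. The easy contributions (a), (b), (d) sum to $O(\rank \sum_u (\dout{u})^2 + \rank \sum_u (\dout{u})^3)$; using $\dout{u} \leq d_u$, both sums are bounded by $\sum_u d_u (\dout{u})^2 = \sum_v d_v (\dout{v})^2$, so these three parts already lie inside $O(\rank^2 \sum_v d_v (\dout{v})^2)$.

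The main obstacle is (c), because the innermost loop ranges over $\Nout{v}$ rather than $\Nout{u}$, so the cost is not controlled directly by $\dout{u}$. To handle it, I would first observe that for each $(u, e_1)$ with $e_1 \in \Nout{u}$, the number of valid $e_2 \in \Nout{u}$ with $\sink{e_2} = \sink{e_1}$ is at most $\dout{u}$, so the total number of innermost (c)-iterations is at most
\[
\sum_u \dout{u} \sum_{e_1 \in \Nout{u}} \dout{\sink{e_1}} \;\leq\; \sum_u \dout{u} \sum_{e_1 \in \Nout{u}} \sum_{v \in e_1} \dout{v},
\]
using $\sink{e_1} \in e_1$. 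Exchanging the outer summation over the hyperedge $e$ turns this into $\sum_e \sum_{u \in e,\, u \neq \sink{e}} \sum_{v \in e} \dout{u}\dout{v} \leq \sum_e \sum_{u,v \in e} \dout{u}\dout{v}$, and applying $\dout{u}\dout{v} \leq \tfrac{1}{2}((\dout{u})^2 + (\dout{v})^2)$ pointwise inside each edge yields $\sum_e |e| \sum_{v \in e} (\dout{v})^2 \leq \rank \sum_v d_v (\dout{v})^2$. Multiplying by the $O(\rank)$ per-iteration cost of (c) gives $O(\rank^2 \sum_v d_v (\dout{v})^2)$, which absorbs the other contributions and establishes the claimed bound; the second equality $O(\inputsize \rank^2 \edegen^2)$ follows from $\dout{v} \leq \edegen$ and $\sum_v d_v = \sum_e |e| \leq \inputsize$.
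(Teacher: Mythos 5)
Your running-time analysis is correct and essentially matches the paper's. The only technical point that needed care is case~(2.a), where the innermost loop ranges over $\Nout{\sink{e_1}}$ rather than $\Nout{u}$, and your handling of it works: you upper-bound $\dout{\sink{e_1}}$ by $\sum_{v\in e_1}\dout{v}$ and then reuse the exchange-plus-AM-GM argument from the triangle algorithm to get $\sum_e\sum_{u,v\in e}\dout{u}\dout{v}\leq \rank\sum_v d_v(\dout{v})^2$. The paper instead applies $\dout{v}\dout{\sink{e_1}}\leq (\dout{v})^2+(\dout{\sink{e_1}})^2$ directly and uses $\sum_{e}(\dout{\sink{e}})^2=\sum_v\din{v}(\dout{v})^2\leq\sum_v d_v(\dout{v})^2$; both routes land in $O(\rank^2\sum_v d_v(\dout{v})^2)$, and your bounds for the remaining parts (a), (b), (d) and the final translation to $O(\inputsize\rank^2\edegen^2)$ are also fine.

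The genuine gap is that the lemma claims the algorithm \emph{computes} $\ws$, and you have proved only the time bound, not correctness. Roughly half of the paper's proof is devoted to showing that the returned value equals the number of tuples $(\{e_1,e_2\},e_3)$ with $e_1\cap e_2\setminus e_3\neq\emptyset$ and $e_1\cap e_2\cap e_3\neq\emptyset$, and this is not routine: line~6 adds $\din{v}$ for every $v\in e_1\cap e_2$, which overcounts tuples where $e_3\supseteq e_1\cap e_2$ and $\sink{e_3}\in e_1\cap e_2$ (these violate $e_1\cap e_2\setminus e_3\neq\emptyset$), which is exactly why the subtraction loop on lines~7--10 exists; one must also verify that the three cases (sink of $e_3$ in $e_1\cap e_2$; common sink of $e_1,e_2$ equal to the triple intersection; the residual triple-loop case) are mutually exclusive and exhaustive so that each qualifying tuple is counted exactly once, and that non-qualifying tuples contribute nothing to any of the three branches. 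Without this case analysis the proof of the stated lemma is incomplete.
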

\begin{proof}

    For case $1$, the code loops over every pair of out-neighbors of every vertex and then every through every vertex in their intersection, so it will take $O(\rank \sum_{v} (\dout{v})^2)$. For cases $2.b$ and subtracting the extra counts the algorithm loops over triplets of out-neighbors and performs checks that can be computed in $O(\rank)$ time, therefore it will take $O(\rank \sum_{v} (\dout{v})^3)$. 
    
    Finally, for case $2.a$, the algorithm iterates over pairs of out-neighbors and then through out-neighbors of their sink (if it is the same). Thus we will have:
    \begin{align*}
        &O\left(\sum_{v} \sum_{e_1,e_2\in \Nout{v}} \dout{\sink{e_1}}\rank\right) 
        = O\left(\rank \sum_v \sum_{e_1\in \Nout{v}}\dout{v}\dout{\sink{e_1}}\right) 
        \\ = &O\left(\rank\sum_v\sum_{e_1 \in \Nout{v}} (\dout{v})^2 + (\dout{\sink{e_1}})^2\right)
        = O\left(\rank\sum_{v}(\dout{v})^3\right) + O\left(\rank^2\sum_{e_1} (\dout{\sink{e_1}})^2\right)
        \\ = & O\left(\rank\sum_{v}(\dout{v})^3\right) + O\left(\rank^2\sum_{v} \din{v}(\dout{v})^2\right) = O\left(\rank^2\sum_{v} d_v(\dout{v})^2\right)
    \end{align*}
    
    Now we prove the correctness. We wish to show that we will count every tuple $(\{e_1,e_2\},e_3)$ with $e_1\cap e_2 \setminus e_3 \neq \emptyset$ and $e_1\cap e_2 \cap e_3 \neq \emptyset$. First, we can check that if a triplet does not satisfy these conditions it will not affect the output of the algorithm.
    \begin{asparaenum}
        \item If $e_1\cap e_2 \cap e_3 = \emptyset$, then it is not possible that $e_3$ has its sink in $e_1\cap e_2$, so it will not contribute to line $6$. It also can not appear as an out-neighbor of a vertex in $e_1\cap e_2$ or of $u$, and therefore it can not appear as the hyperedge in loops in lines $7,12,15$.
        \item $e_1\cap e_2 \setminus e_3 = \emptyset$, then it can not be counted in line $17$ as the if of line $16$ explicitly checks this condition. If $e_3$ has its endpoint in $e_1\cap e_2$ then it will be counted in line $6$, however it will also be subtracted in line $9$ because we will have $e_3 \supseteq e_1\cap e_2$. Finally it is not possible to reach line $14$ in this case, because line $13$ can only be true if $|e_1 \cap e_2\cap e_1| = 1$ and line $4$ requires $|e_1\cap e_2|\geq 2$, both are not possible if $e_1\cap e_2 \setminus e_3 = \emptyset$.
    \end{asparaenum}
    Now, we show that if a tuple satisfies both conditions then it will be accounted for in the algorithm, note that $|e_1\cap e_2|>1$ in this case. We follow the same case analysis from before:

    \begin{asparaenum}
        \item The edge $e_3$ has its endpoint in $e_1 \cap e_2$: We can see that this will be counted in line $6$. Whenever the outer loop gets the value $u = \source{e_1\cap e_2}$ and the innermost loop gets $v = \sink{e_3}$.
        \item The edge $e_3$ does not have its endpoint in $e_1 \cap e_2$. 
            If $\{\sink{e_1}\} = \{\sink{e_2}\} = e_1\cap e_2\cap e_3$, this case is directly checked in line $14$.
            Otherwise, this case will be counted in line $17$.\qedhere
    \end{asparaenum}
    
\end{proof}

\subsection{Computing open patterns} \label{sec:open}

Finally, one can compute the number of open patterns. The number of this patterns can be extremely large,
so an efficient algorithm must count without enumerating.

\begin{claim} \label{clm:total_counts}
\begin{equation*} 
    \sum_{i=21}^{26} \ct{i}  +  3 \sum_{i=1}^{20} \ct{i} = \sum_{e\in E(G)} \binom{\edeg{e}}{2}
\end{equation*}
\end{claim}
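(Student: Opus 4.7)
The plan is to interpret the right-hand side combinatorially and then re-group its contributions by the unordered triple of hyperedges involved.

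First, observe that $\binom{\edeg{e}}{2}$ equals the number of unordered pairs $\{f_1,f_2\}$ of distinct hyperedges, both different from $e$, such that $f_1 \cap e \neq \emptyset$ and $f_2 \cap e \neq \emptyset$. Summing over $e$, the RHS counts ordered configurations of the form $(e,\{f_1,f_2\})$ where $e$ is a ``center'' hyperedge and both elements of the unordered pair $\{f_1,f_2\}$ intersect $e$ (with no assumption on whether $f_1$ and $f_2$ intersect each other).

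Next, I would re-index this sum by grouping according to the underlying unordered triple $T=\{e_1,e_2,e_3\}$ of distinct hyperedges. For each such $T$, the number of configurations $(e,\{f_1,f_2\})$ drawn from $T$ is exactly the number of indices $i\in\{1,2,3\}$ such that both other hyperedges intersect $e_i$; call this the number of ``valid centers'' in $T$. Triples with zero valid centers contribute nothing, so only triples with at least one valid center appear. I then split into cases based on the intersection pattern of $T$:
\begin{itemize}
\item If all three pairs of hyperedges in $T$ pairwise intersect, then every $e_i$ is a valid center, so $T$ contributes $3$. By definition, these are exactly the closed hypertriangles, i.e.\ the patterns counted by $\ct{1},\ldots,\ct{20}$.
\item If $T$ has exactly one non-intersecting pair, say $e_2\cap e_3=\emptyset$ while $e_1$ meets both, then $e_1$ is the unique valid center and $T$ contributes $1$. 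By the definition of the 6 open patterns (21--26) these are precisely such triples, each counted once.
\item If $T$ has two or more non-intersecting pairs, then at most one pair of hyperedges shares a vertex, so no $e_i$ intersects both of the other two, and $T$ contributes $0$. Such triples do not correspond to any pattern $1$--$26$ and thus do not appear on the LHS either.
\end{itemize}

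Combining the three cases gives $\sum_{e\in E(G)}\binom{\edeg{e}}{2}=3\sum_{i=1}^{20}\ct{i}+\sum_{i=21}^{26}\ct{i}$, which is the claim. The only subtle step is the middle case: verifying that the six ``open'' patterns in \Fig{types} really are all the triples of distinct hyperedges with exactly one non-intersecting pair (and that for each such pattern exactly one hyperedge is the valid center). This follows directly from the definitions in \Sec{prelims} and the illustration in \Fig{types}, since the open patterns are described precisely as ``a hyperedge intersecting with the other two, which do not intersect with each other.'' Everything else is bookkeeping, and no running-time analysis is needed because the statement is a combinatorial identity.
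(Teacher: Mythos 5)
Your proof is correct and follows essentially the same double-counting argument as the paper: interpret $\sum_e \binom{\edeg{e}}{2}$ as counting (center, pair) configurations, then observe that closed patterns are counted three times and open patterns once. Your version is merely more explicit about the degenerate case of triples with two or more non-intersecting pairs, which the paper leaves implicit.
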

\begin{proof}
    $\sum_{e\in E(G)} \binom{\edeg{e}}{2}$ gives the number of pairs of hyperedges intersecting with a hyperedge $e$. For each copy of an open pattern ($21$ to $26$) there will be one hyperedge that intersect with the other two. For each copy of a closed pattern ($1$ to $20$) each of the hyperedges intersect the other two, so it will be counted thrice.
\end{proof}

Similarly we can show the following:

\begin{claim} \label{clm:type_twofour}
\begin{equation*}
    \sum_{e \in E(G)} \binom{\Nchildren{e}}{2} = \ct{1} + \ct{4} + \ct{5} +\ct{21} +\ct{22}
\end{equation*}
\begin{equation*} 
    \sum_{e \in E(G)} \Nchildren{e}\cdot \Nintersect{e} = 2(\ct{2} + \ct{3}) + \ct{6} + \ct{7} + \ct{8} + \ct{23} +\ct{24}
\end{equation*}
\end{claim}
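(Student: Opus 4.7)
The plan is to prove both identities via a double-counting argument: for each possible triple of distinct hyperedges $\{A,B,C\}$, I would determine its pattern type and count exactly how many pairs/triples of hyperedges it contributes to the LHS of each identity.

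For the first identity, the LHS $\sum_e \binom{\Nchildren{e}}{2}$ counts triples $(e, \{e_1, e_2\})$ with $e_1, e_2$ distinct descendants of $e$. First I would fix such a triple and note that since $e_1, e_2 \subsetneq e$, the pairs $(e_1,e)$ and $(e_2,e)$ always intersect, so the only possibly empty pairwise intersection is $e_1 \cap e_2$. I would then split into cases: (i) $e_1, e_2$ comparable (say $e_1 \subset e_2$) gives the chain $e_1 \subsetneq e_2 \subsetneq e$, which is pattern 1; (ii) $e_1, e_2$ incomparable with $e_1 \cap e_2 \neq \emptyset$ produces a closed pattern with a red region $e_1 \cap e_2$, two blue regions $e_i \setminus e_{3-i}$, and possibly a green region $e \setminus (e_1 \cup e_2)$, which is exactly pattern 4 or 5; (iii) $e_1 \cap e_2 = \emptyset$ gives an open pattern with $e$ as the central edge containing both $e_1, e_2$, yielding pattern 21 or 22. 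Since in each case $e$ is the unique hyperedge having the other two as descendants, every instance is counted exactly once, which matches the RHS.

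For the second identity, $\sum_e \Nchildren{e} \cdot \Nintersect{e}$ counts ordered triples $(e, e', f)$ with $e' \subsetneq e$ and $f$ intersecting $e$ incomparably (neither ancestor nor descendant of $e$). I would classify each underlying triple by pattern type. For patterns 2 and 3, a common descendant sits inside two incomparable parents; either parent can play the role of $e$ with the other playing $f$, so each instance contributes exactly 2 (giving the factor of 2). For patterns 6, 7, and 8, there is a unique ancestor-descendant pair in the triple, and the third edge is incomparable with both and intersects them; the ancestor plays $e$, the descendant plays $e'$, and the third plays $f$, contributing 1. For patterns 23 and 24 (open, with exactly one non-central edge being a descendant of the central one), the central edge is the unique valid $e$, contributing 1. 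I would then verify all remaining patterns contribute 0.

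The hard part will be verifying the zero-contribution claim systematically. The cleanest organizing principle is that the incomparability constraint on $f$ is the discriminator between the two identities: in patterns 1, 4, 5, 21, 22, every candidate $f$ inside the triple is itself a descendant of the hyperedge playing $e$, which violates the incomparability requirement, so these patterns contribute to the first identity but not the second. Meanwhile, patterns with no ancestor-descendant pair at all (such as 9, 10, and 13--20, and 25, 26) have no valid $(e, e')$ choice and contribute 0. The case analysis is tedious but each case reduces to a one-line check of the $\subseteq$-structure within the triple.
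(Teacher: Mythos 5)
Your proposal is correct and follows essentially the same double-counting argument as the paper: interpret each left-hand sum as counting configurations (a hyperedge with a pair of children, respectively a child together with an incomparable intersecting edge), and classify them by the intersection structure of the remaining pair into the listed pattern types, with the factor of $2$ arising exactly as you describe for patterns $2$ and $3$. Your explicit verification of multiplicities and of the zero contribution from the other patterns is slightly more thorough than the paper's write-up, but it is the same proof.
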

\begin{proof}
    For the first equation, we are summing for every hyperedge $e$ the number of pairs that are children of $e$. Let $e_1,e_2 \in \children{e}$, we can distinguish three cases:
    \begin{asparaitem}
        \item $e_1\cap e_2 = \emptyset$: Then we either have a pattern of type $21$ or $22$.
        \item $e_1 \subset e_2$ (or vice-versa): then we have a type $1$ pattern.
        \item $e_1$ intersects $e_2$ (without containment): this leads to a type $4$ or $5$ pattern.
    \end{asparaitem}

    For the second equation, we are summing for every hyperedge $e$ the number of pairs formed by a children of $e$ and one of the hyperedges that intersect it. Let $e_1 \in \children{e}$ and $e_2$ be a hyperedge with $e_2\cap e \neq \emptyset$ that is not in $\parents{e}$ or $\children{e}$. Again we have three cases:
    
    \begin{asparaitem}
        \item $e_1\cap e_2 = \emptyset$: Then we either have a pattern of type $23$ or $24$.
        \item $e_1 \subset e_2$: This corresponds to either a pattern of type $2$ or $3$. Moreover, these patterns will be counted twice as $e_2$ intersects with $e$ and is an ancestor of $e_1$.
        \item $e_1$ intersects $e_2$ (without containment): This corresponds to patterns $6,7,8$.
    \end{asparaitem}
\end{proof}

We now show an algorithm that computes the counts of patterns $21$,$23$ and $25$. The key observation is that these patterns contain a hyperedge $e_1$ that is perfectly divided between two other hyperedges $e_2$ and $e_3$. That is, every vertex $v \in e_1$ belongs to either $e_2$ or $e_3$ (but not both). Therefore we need to count the number of such instances, differentiating the cases where $e_2$ and $e_3$ may be subsets of $e_1$. 

\begin{definition} [$\compCounts$]
    For a graph $G$ the vector of complement counts $\compCounts = (\compCounts[0],\compCounts[1],\compCounts[2])$ indicates the number of tuples $(e_1,\{e_2,e_3\})$ such that $e_1 \subseteq e_2\cup e_3$ and $e_1\cap e_2 \cap e_3 = \emptyset$. The index indicates how many out of $e_2,e_3$ are children of $e_1$.
\end{definition}

We can show how to obtain the counts of patterns $21,23$ and $25$ from the complement counts:
\begin{claim} \label{clm:cc}
\begin{equation*}
    \ct{21} = \compCounts[2]
\end{equation*}
\begin{equation*}
    \ct{23} = \compCounts[1] 
\end{equation*}
\begin{equation*}
    \ct{25} + 3\cdot \ct{17} + 2\cdot \ct{18} + \ct{19}= \compCounts[0]
\end{equation*}
\end{claim}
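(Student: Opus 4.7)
The plan is to classify every tuple $(e_1, \{e_2, e_3\})$ counted by $\compCounts[i]$ according to the Venn-diagram structure of $\{e_1, e_2, e_3\}$, and to match each class against the corresponding hypertriangle pattern. The two defining constraints are that $e_1 \subseteq e_2 \cup e_3$, so the ``only $e_1$'' region is empty, and $e_1 \cap e_2 \cap e_3 = \emptyset$, so the red region is empty; the index $i$ records how many of $e_2, e_3$ are descendants of $e_1$. I will treat the three equalities in order.

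For $\compCounts[2]$, both $e_2, e_3 \subseteq e_1$. Combined with the two constraints, this forces $e_1 = e_2 \cup e_3$ and $e_2 \cap e_3 = \emptyset$, which is exactly the structure of pattern $21$, and each instance of pattern $21$ yields a unique such tuple. For $\compCounts[1]$, exactly one of $e_2, e_3$ is contained in $e_1$; say $e_2 \subsetneq e_1$. Then $e_2 \cap e_3 = \emptyset$, $e_3$ covers the non-empty set $e_1 \setminus e_2$, and $e_3$ has vertices outside $e_1$. This is precisely pattern $23$, and once again the middle hyperedge is uniquely distinguished, giving a bijection.

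The identity for $\compCounts[0]$ needs a finer case analysis. Neither $e_2$ nor $e_3$ is contained in $e_1$, so ``only $e_2$'' and ``only $e_3$'' are both non-empty, while the constraints force ``only $e_1$'' and red to be empty. The only freedom left is the region $e_2 \cap e_3 \setminus e_1$. If this region is empty, then $e_2$ and $e_3$ are disjoint and $e_1$ is the unique middle hyperedge of an open pattern; this is pattern $25$, contributing one tuple per instance. If the region is non-empty, then all three pairwise-intersection (blue) regions are non-empty, the red region is empty, and the pattern lies in the closed family $\{17, 18, 19, 20\}$. For such a closed pattern, the number of tuples it contributes to $\compCounts[0]$ equals the number of hyperedges whose ``only'' region is empty, namely $3$ for pattern $17$ (all three ``only'' regions empty), $2$ for pattern $18$, $1$ for pattern $19$, and $0$ for pattern $20$. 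Summing these contributions yields $\ct{25} + 3 \ct{17} + 2 \ct{18} + \ct{19}$.

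The main obstacle is the third identity: one must (i) rule out any contribution from closed patterns $1$--$16$, all of which carry a non-empty red region and are excluded by $e_1 \cap e_2 \cap e_3 = \emptyset$, and (ii) correctly match the multiplicities $3, 2, 1$ to patterns $17, 18, 19$, which requires pinning down that these patterns are indexed by the number of non-empty green (``only'') regions. A careful enumeration of which of the seven Venn regions can or must be non-empty, applied consistently to both the open and closed sub-cases, is what makes the counting multiplicity line up exactly.
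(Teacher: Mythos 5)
Your overall decomposition is the same as the paper's: classify the tuples counted by each $\compCounts[i]$ by the emptiness pattern of the seven Venn regions, handle $\compCounts[2]$ and $\compCounts[1]$ by exhibiting the unique middle hyperedge, and for $\compCounts[0]$ split on whether $e_2\cap e_3$ is empty (pattern $25$) or not (patterns $17$--$20$, with per-instance multiplicity equal to the number of hyperedges whose ``only'' region is empty). The final coefficients $3,2,1,0$ for patterns $17,18,19,20$ match the paper's argument and are correct.

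There is, however, one false intermediate assertion in your $\compCounts[0]$ analysis: from ``neither $e_2$ nor $e_3$ is contained in $e_1$'' you conclude that the regions $e_2\setminus(e_1\cup e_3)$ and $e_3\setminus(e_1\cup e_2)$ are both non-empty, and that ``the only freedom left'' is $e_2\cap e_3\setminus e_1$. Non-containment only guarantees $e_2\setminus e_1\neq\emptyset$, and that set may lie entirely inside $e_3$, so only the union of the ``only $e_2$'' region with $e_2\cap e_3\setminus e_1$ is forced to be non-empty. If your assertion were literally true, the closed sub-case would always have exactly two non-empty green regions, hence only pattern $19$ could arise and $\compCounts[0]$ would equal $\ct{25}+\ct{19}$ --- contradicting the multiplicities $3$ and $2$ that you (correctly) attach to patterns $17$ and $18$ a few lines later. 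The fix is to delete that sentence: in the sub-case $e_2\cap e_3\setminus e_1\neq\emptyset$ all three blue regions are non-empty and the red region is empty, so the pattern is one of $17$--$20$ with between $0$ and $2$ green regions (the green region of $e_1$ being empty), and the per-instance tuple count is the number of hyperedges with empty green region, exactly as you then compute. In the sub-case $e_2\cap e_3=\emptyset$ the assertion does hold (there $e_2\setminus e_1$ is disjoint from $e_3$), so your pattern-$25$ count is unaffected. With that correction the proof coincides with the paper's.
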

\begin{proof}
    $\compCounts[2]$ corresponds with the number of patterns such that $(e_1,\{e_2,e_3\})$ such that $e_1 \subseteq e_2\cup e_3$ and $e_1\cap e_2 \cap e_3 = \emptyset$ and both $e_2,e_3 \in \children{e_1}$. This means that the only two regions that will not be empty are $e_1\cap e_2 \setminus e_3$ and $e_1\cap e_3 \setminus e_2$, which corresponds with pattern $21$.

    Similarly, $\compCounts[1]$ will have both those regions as non-empty, and either $e_2\setminus(e_1\cup e_3)$ or $e_3\setminus(e_1\cup e_2)$. Both cases corresponds with pattern $23$.

    Finally, for $\compCounts[0]$ we have that both $e_2$ and $e_3$ are not children of $e_1$. If $e_2 \cap e_3 = \emptyset$ then this necessarily leads to pattern $25$, otherwise the pattern must have the $3$ blue regions and either $0$,$1$ or $2$ of the green ones, corresponding to patterns $17,18$ and $19$ respectively. Moreover, each of the hyperedges in pattern $17$ can be the first element of the tuple, so it will be counted thrice, while in pattern $18$ there are two hyperedges that can be the first element of the tuple so it will be counted twice.
\end{proof}

We now show how to compute $\compCounts$. We can show that there are three different cases:

\begin{asparaenum}
    \item Both $e_2$ and $e_3$ are out-neighbors of a vertex in $e_1$: In this case we can start at a hyperedge $e_1$, iterate over every pair of vertices on it, and every hyperedge in the out-neighborhood of them. We can then verify if the triplet satisfy the conditions with a linear pass.
    \item Either $e_2$ or $e_3$ are not out-neighbors of a vertex in $e_1$: In this case we can look at out-neighbors of every vertex in $e_1$, and check if they contain all but one of the vertices of $e_1$ in which case we add the in-degree of that vertex. We will need to subtract the number of hyperedges starting at a vertex in $e_1$ that end in that vertex. We can precompute these amounts using Algorithm.
    \item Neither $e_2$ or $e_3$ are out-neighbors of a vertex in $e_1$: This can only happen if $|e_1| = 2$, we just need to multiple the in-degrees of each vertex, subtracting the number of hyperedges that start in one and end in the other.
\end{asparaenum}

\begin{figure}[htbp]
    \centering
    \includegraphics[width=0.8\linewidth]{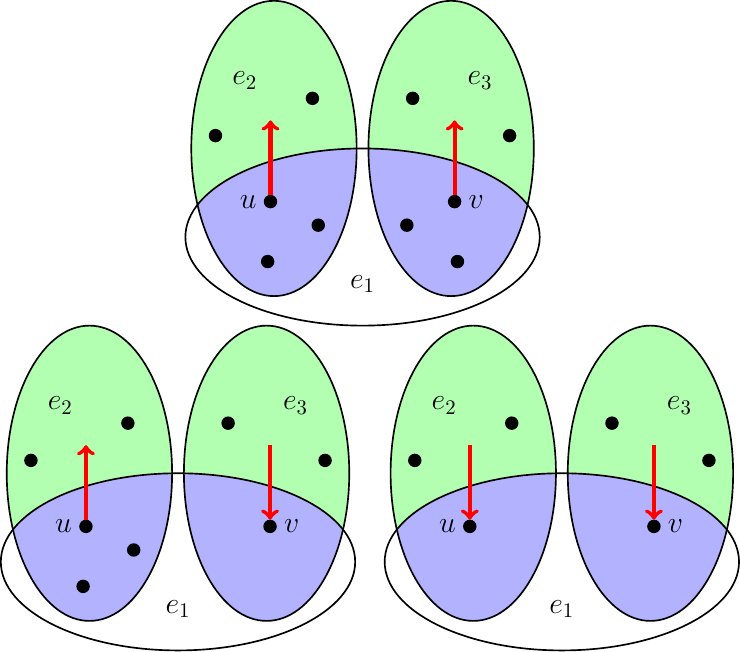}
    \caption{The three possible cases of pattern 25.
    }
    \label{fig:extendedstars}
\end{figure}

\begin{algorithm}
\caption{\openalg$(\vec{G})$ \\ Input: DAH $\vec{G} = (G,\pi)$\\Output: The vector of complement counts $\compCounts$ }\label{alg:paths}
\begin{algorithmic}[1]
\State $\compCounts \gets [0,0,0]$
\State $\localIndegrees \gets $ \computelid$(\vec{G})$
\For{$e_1 \in E(G)$}
    \For{$u,v \in e_1$}
        \For{$e_2 \in \Nout{u}$}
            \For{$e_3 \in \Nout{v}$}
                \If{$u = \source{e_1\cap e_2} $ \textbf{and} $v = \source{e_1\cap e_3} $ \textbf{and} $e_1\cap e_2\cap e_3 = \emptyset$ \textbf{and} $e_1\setminus(e_2\cup e_3) =\emptyset$}
                    \State $i \gets \mathbf{1}[e_1 \supset e_2] + \mathbf{1}[e_1 \supset e_3]$
                    \State $\compCounts[i] \gets \compCounts[i] + 1$ \Comment{Case $(1)$}
                \EndIf
            \EndFor
        \EndFor
    \EndFor
    \For{$u \in e_1$}
        \For{$e_2 \in \Nout{u}$}
            \If{$u = \source{e_1\cap e_2}$ and $|e_1\cap e_2| = |e_1| - 1$}
                \State $v \gets e_1\setminus e_2$
                \State $i \gets \mathbf{1}[\,e_1 \supset e_2\,]$
                    \State $\compCounts[i] \gets \compCounts[i] + \din{v}-\localIndegrees[e_1,v]$ \Comment{Case $(2)$}
            \EndIf
        \EndFor
    \EndFor
    \If{$|e_1|=2$}
        \State $(u,v) \gets e$
            \State $\compCounts[0] \gets \compCounts[0] + (\din{v}-\localIndegrees[e_1,v]) (\din{u}-\localIndegrees[e_1,u])$   \Comment{Case $(3)$}
    \EndIf
\EndFor
\State Return $\compCounts$
\end{algorithmic}
\end{algorithm}

\begin{algorithm}
    \caption{\computelid$(\vec{G})$\\ Input: DAH $\vec{G} = (G,\pi)$\\Output: The array of local in-degrees \localIndegrees}\label{alg:lid}
    \begin{algorithmic}[1]
        \State $\localIndegrees \gets \{\}$
        \For{$e_1 \in E(G)$}
            \For{$u \in e_1$}
                \For{$e_2 \in \Nout{u}$}
                    \If{$u = \source{e_1\cap e_2}$ \textbf{and} $\sink{e_2} \in e_1$}
                        \State $v \gets \sink{e_2}$
                        \State $\localIndegrees[e_1,v] \gets \localIndegrees[e_1,v] + 1$
                    \EndIf
                \EndFor
            \EndFor
        \EndFor
        \State Return $\localIndegrees$
    \end{algorithmic}
\end{algorithm}

Using Algorithm \ref{alg:paths} together with the previous equations we can obtain all the counts from $21$ to $26$:

\begin{lemma}
    There is an algorithm that, given the counts of patterns $1$ to $20$, computes the counts of patterns $21-26$ in $O(\rank^2\sum_v d_v (\dout{v})^2) = O(\inputsize \rank^2{\edegen}^2)$ time.
\end{lemma}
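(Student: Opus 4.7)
The plan is to combine the output of Algorithm \ref{alg:paths} with the three identities from Claims \ref{clm:total_counts}, \ref{clm:type_twofour}, and \ref{clm:cc} to back out the counts $\ct{21}, \ldots, \ct{26}$ via simple arithmetic, given that $\ct{1}, \ldots, \ct{20}$ are already available. The quantities $\edeg{\cdot}$, $\Nchildren{\cdot}$, $\Nparents{\cdot}$, $\Nintersect{\cdot}$ are available in the required time from \Cor{hyp-deg}.

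First, I would verify that Algorithm \ref{alg:lid} correctly fills $\localIndegrees[e_1, v]$ with the number of hyperedges $e_2$ whose sink is $v \in e_1$ and for which $\source{e_1 \cap e_2}$ lies in $e_1$ (the check on line $5$ ensures each such $e_2$ is counted once). Its runtime follows the same bookkeeping as in \Lem{deg}: the total work is $O(r \sum_v d_v \dout{v}) = O(\inputsize r \edegen)$. Next I would argue correctness of Algorithm \ref{alg:paths} by the three-case decomposition stated just above the algorithm. For each tuple $(e_1, \{e_2, e_3\})$ with $e_1 \subseteq e_2 \cup e_3$ and $e_1 \cap e_2 \cap e_3 = \emptyset$, exactly one of the following holds: (1) both $e_2, e_3$ have their source in $e_1$, handled by the triple-loop on lines $3$--$12$ via the canonical source check on $u, v$; (2) exactly one of $e_2, e_3$ has its source in $e_1$ while the other has its sink in $e_1$, handled by lines $13$--$20$, where summing $\din{v}$ overcounts the hyperedges whose source is also in $e_1$, and subtracting $\localIndegrees[e_1, v]$ removes precisely that overlap; (3) both have sinks in $e_1$, which forces $|e_1| = 2$ and is counted on lines $22$--$25$ using the product of corrected in-degrees.

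For the running time of Algorithm \ref{alg:paths}, Case $(1)$ dominates. The inner body runs in $O(r)$, so the total work is $O\!\left(r \sum_{e_1} \sum_{u,v \in e_1} \dout{u}\dout{v}\right)$. Using $\dout{u}\dout{v} \leq \tfrac{1}{2}((\dout{u})^2 + (\dout{v})^2)$ and $|e_1| \leq r$, this is bounded by $O\!\left(r^2 \sum_{e_1} \sum_{v \in e_1} (\dout{v})^2\right) = O\!\left(r^2 \sum_v d_v (\dout{v})^2\right)$, which matches the claimed bound. Cases $(2)$ and $(3)$ contribute only lower-order terms, as does Algorithm \ref{alg:lid}.

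Finally, I would extract the open pattern counts. From \Clm{cc}, $\ct{21} = \compCounts[2]$, $\ct{23} = \compCounts[1]$, and $\ct{25} = \compCounts[0] - 3\ct{17} - 2\ct{18} - \ct{19}$. From \Clm{type_twofour},
\begin{align*}
    \ct{22} &= \sum_{e} \binom{\Nchildren{e}}{2} - \ct{1} - \ct{4} - \ct{5} - \ct{21}, \\
    \ct{24} &= \sum_{e} \Nchildren{e}\,\Nintersect{e} - 2(\ct{2}+\ct{3}) - \ct{6} - \ct{7} - \ct{8} - \ct{23}.
\end{align*}
Finally, \Clm{total_counts} gives $\ct{26} = \sum_{e} \binom{\edeg{e}}{2} - 3\sum_{i=1}^{20}\ct{i} - \sum_{i=21}^{25}\ct{i}$. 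Each of these $\sum_e$ expressions is evaluable in $O(m)$ time once the hyperedge-degree quantities from \Cor{hyp-deg} are known. The main obstacle I anticipate is the case analysis for Algorithm \ref{alg:paths}: ensuring no double counting between Cases $(1)$ and $(2)$ (which is precisely what the $\localIndegrees$ correction handles) and making sure the small-arity edge cases, particularly $|e_1|=2$ in Case $(3)$, are counted exactly once and not already captured elsewhere.
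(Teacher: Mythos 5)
Your proposal is correct and follows essentially the same route as the paper: run Algorithm \ref{alg:lid} and Algorithm \ref{alg:paths} to get $\compCounts$ (with the same three-case decomposition and the same $\localIndegrees$ correction to avoid double counting), bound the runtime by the dominant Case $(1)$ triple loop via $\dout{u}\dout{v} \leq \tfrac{1}{2}((\dout{u})^2+(\dout{v})^2)$, and then invert Claims \ref{clm:cc}, \ref{clm:type_twofour}, and \ref{clm:total_counts} in exactly the order the paper uses ($21,23,25$, then $22,24$, then $26$). The only differences are cosmetic (line numbering and phrasing of the case conditions), so there is nothing substantive to add.
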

\begin{proof}
    The idea is to run Algorithm \ref{alg:paths} to obtain the complement counts $\compCounts$. From there, we can obtain the counts of patterns $21,23$ and $25$ using the equations of \Clm{cc}. Using the counts of pattern $21$ and $23$ we can obtain the counts for patterns $22$ and $24$ using the equations from  \Clm{type_twofour}. At this point we have the counts of all pattern $1$ to $25$, so we can use \Clm{total_counts} to obtain the counts of type $26$.

    For the runtime, we first need to analyze Algorithm \ref{alg:lid}. There are three loops, the first iterates over all hyperedges, the second over each vertex in a specific hyperedge and the third over all out-neighbors of that vertex, the check of line $5$ can be done in $O(\rank)$ time. Therefore the algorithm can be computed in time:
    \begin{align*}
        O\left(\sum_e \sum_{v\in e} \dout{v} \rank\right) = O\left(\sum_v \sum_{e \ni v} \dout{v} \rank\right) = O\left(\rank \sum_v d_v \dout{v}\right)
    \end{align*}
    Moreover we only need $O(\inputsize)$ memory for storing the dictionary.

    We just need to check Algorithm \ref{alg:paths} now. The outer loop again iterates over all hyperedges , then it splits in three cases:
    \begin{asparaitem}
        \item Case (1) (lines $4-9$): 
        The first inner loop iterates over all pairs of vertices , and the second and third over each out-neighbor of those vertices, the final check of line $7$ and the containments can be computed in time $O(\rank)$.
        \item Case (2) (lines $10-15$): Here we are only checking for one vertex and one out-neighbor. The checks can then be implemented in $O(r)$.
        \item Case (3) (lines $16-18$): This can be computed in $O(1)$.
    \end{asparaitem}
    The counts in all the equations can be computed in $O(m)$ time. Therefore the total runtime will be:
    \begin{align*}
        &\sum_e \left(O(1) + \sum_{u\in e} \dout{u} O(\rank) + \sum_{u,v \in e} \dout{u}\dout{v} O(\rank)\right) 
        \\= &O\left(\rank \sum_e \sum_{u,v \in e} \dout{u}\dout{v}\right)
        = O\left(\rank \sum_{e} \sum_{u,v \in e} (\dout{u})^2 + (\dout{v})^2\right) 
        \\= &O\left(\rank^2 \sum_e \sum_v (\dout{v})^2\right) 
        = O\left(r^2\sum_{v}\sum_{e\ni v}(\dout{v})^2\right) 
        = O\left(\rank^2 \sum_v d_v (\dout{v})^2\right)
    \end{align*}
    Only left is to prove correctness. For that we just need to show that Algorithm \ref{alg:paths} returns the correct counts for $\compCounts$. Let $(e_1,\{e_2,e_3\})$ be a tuple, we show that the algorithm will only increment the count for it if $e_1 \subseteq e_2\cup e_3$ and $e_1\cap e_2 \cap e_3 = \emptyset$. First we show that tuples not satisfying the conditions will not be counted:
    \begin{asparaitem}
        \item If $e_1 \not\subseteq e_2\cup e_3$: The last condition of the if in line $7$ prevents this case to be counted in line $9$. If $e_2$ (or $e_3$) have $|e_1 \cap e_2| = |e_1|-1$ then $e_3$ can not contain the remaining vertex, preventing case $(b)$. 
        
        Similarly, if $|e_1|=2$ is not possible for each of the hyperedges to have their endpoints in each of the vertices of $e_1$, preventing case $(c)$.
        \item If $e_1\cap e_2 \cap e_3 \neq \emptyset$: The second last condition in line $7$ prevents this case to be counted in line $9$. 
        
        If $e_2$ (or $e_3$) have $|e_1 \cap e_2| = |e_1|-1$ and $e_3$ has its endpoint in the vertex $v$ in $e_1\setminus e_2$, then it will increment $\din{v}$. However note that the vertex in  $e_1\cap e_2 \cap e_3$ must be a source in that case. And therefore this tuple will be account for in $\localIndegrees[e_1,v]$, which will cancel out in line $15$. 
        
        Similarly if $|e_1| = 2$, with $e_1=(u,v)$, if $e_2$ has its endpoint in $u$ and $e_3$ has its endpoint in $v$, then it must be the case that $e_1\cap e_2 \cap e_3 = \{u\}$ and $e_2$ will increase both $\din{v}$ and $\localIndegrees[e_1,v]$ which will cancel out on line $18$.
    \end{asparaitem}
    Only left is to verify that if a tuple follow the two properties it will be accounted for. We can look individually at the three possibles cases:
    \begin{asparaenum}
    \item Both $e_2$ and $e_3$ are out-neighbors of a vertex in $e_1$: This case will be account for in line $9$, the if in line $7$ ensures that each such tuple is only counted once. In line $8$ we check if $e_2$ or $e_3$ are children of $e_1$, in order to update the correct entry in $\compCounts$.
    
    \item Either $e_2$ or $e_3$ are not out-neighbors of a vertex in $e_1$: This can only occur if either $|e_1\cap e_3| = 1$ or $|e_1 \cap e_2| = 1$, otherwise both hyperedges will have at least one vertex in $e_1$ from which they are out-neighbors. Assume the first without loss of generality. This implies $|e_1 \setminus e_2| = 1$, and the vertex $v \in e_1 \setminus e_2$ must also be the endpoint of $e_3$, however note that this hyperedge does not contribute to $\localIndegrees[e_1,v]$ as it does not have a source in $e_1$. Therefore it will be counted on line $15$. Note that in this case only $e_2$ can be a child of $e_1$ as $|e_3|>2$ but $|e_1\cap e_3|=1$.
    
    \item Neither $e_2$ or $e_3$ are out-neighbors of a vertex in $e_1$: This can only occur when both $|e_1\cap e_2| = 1$ and $|e_1\cap e_3| = 1$, this implies $|e_1|=2$. Let $u \in e_1\cap e_2 $ and $v \in e_1\cap e_3$. Without loss of generality let $e_1=(u,v)$, note that neither $e_2$ nor $e_3$ can contribute to entries in $\localIndegrees[e_1,u]$ or $\localIndegrees[e_1,v]$. Therefore the tuple will be counted on line $18$. In this case neither of the hyperedges can be contained.
\end{asparaenum}
\end{proof}

\section{Experiments} \label{sec:experiments}

We first discuss our experimental setup. 

{\bf Implementation.} 
All algorithms are implemented in C++, and all experiments are conducted on a standard laptop with a Ryzen 9 5900HS processor and $16$GB memory.
The code for our algorithm is given in \cite{ditch}.
We used 9GB as maximum memory budget for all experiments.

{\bf Baselines.} 
In the experiments, we use \baseline~\cite{LeJiSh20} and \baselineadv~\cite{YiWaZh+24} as baselines to compare our package, \ditch. Recall that there are a total of 26 patterns that we want to count, out of which the first 20 are closed patterns and the last 6 are open patterns.  \baseline\ computes the count for both closed and open patterns while \baselineadv\ only computes the counts for open patterns.

{\bf Datasets.} In our experiments, we use hypergraph datasets obtained from \cite{datasets,BeAbSc+18}.
The dataset details are given in \Tab{summary-datasets}. We preprocess the dataset to remove timestamps and duplicates.
These datasets are of five different types, as indicated in the dataset name.

\begin{asparaitem}
    \item[A:] Co-author: Vertices are researchers, and hyperedges correspond to coauthors on papers. Includes coauth-DBLP (A-DBLP), coauth-MAG-Geology (A-geology) and coauth-MAG-History (A-History) datasets \cite{BeAbSc+18,SiShSo15}.
    \item[C:] Social contact: 
Vertices represent individuals, and hyperedges represent contact groups or interactions among them. Includes contact-high-school (C-hschool) and contact-primary-school (C-pschool) datasets \cite{BeAbSc+18, MaFoBa15, StVoBa11}.
    \item[E:] Email: 
Vertices correspond to email accounts, and hyperedges represent emails and contain senders and recipients. Includes email-Eu (E-eu) and email-Enron (E-enron) datasets \cite{BeAbSc+18, YiBeLe17, LeKlFa07}.
    \item[Th:] Threads: 
Vertices represent users, and hyperedges discussion threads in which users participate. Includes threads-math-sx (Th-math) and threads-ask-ubuntu (Th-ubuntu) datasets \cite{BeAbSc+18}.
    \item[Tg:] Tags:
Vertices represent tags, and hyperedges represent posts containing those tags. Includes tags-math-sx (Tg-math) and tags-ask-ubuntu (Tg-ubuntu) datasets \cite{BeAbSc+18}.
\end{asparaitem}

We note that largest instance A-dblp has more than two million hyperedges, and 
three datasets have ranks in the hundreds. The Tg-ubuntu and Tg-math datasets are extremely dense,
with a largest degree of 12K and 13k. A linear dependence on the rank is expected
for any algorithm, and large hyperedges lead to significant increases in runtime.
Overall, these numbers show why hypergraph motif counting
is significantly more challenging than the graph setting.

\begin{table}[ht]
\centering
\begin{tabular}{lrrrrrr}
\toprule

Dataset & |V| & |E| & $r$ & $|e|_{avg}$ & $\max_v d_{v}$ & $\edegen$ \\

\midrule
E-enron & 149 & 1,514 & 37 & 3.05 & 117 & 52 \\

C-hschool & 328 & 7,818 & 5 & 2.33 & 148 & 46 \\

C-pschool & 243 & 12,704  & 5 & 2.42 & 261 & 98 \\

E-eu & 1,006 & 25,148 & 40 & 3.55 & 917 & 306 \\

Tg-ubuntu & 3030 & 147,222	& 5	& 3.39 &	12,930 & 2617 \\

Th-ubuntu & 200,975 & 166,999 & 14 & 1.91 & 2,170 & 115 \\

Tg-math  & 1630	& 170,476	& 5	& 3.48	& 13,949	& 3384 \\

Th-math & 201,864 & 595,749 & 21 & 2.45 & 11,358 & 617 \\

A-history & 1,034,877 & 896,062 & 925 & 1.57 & 402 & 294 \\

A-geology & 1,261,130 & 1,204,704 & 284 & 3.17 & 716 & 174 \\

A-dblp & 1,930,379 & 2,467,389 & 280 & 3.14 & 846 & 221 \\
\bottomrule
\end{tabular}
\caption{Summary of 11 Datasets (sorted by $|E|$ ascending)}
\label{tab:summary-datasets}
\end{table}

{\bf Implementation Details.}
We use a double CSR to represent the hypergraph: one that for each vertex shows the hyperedges that contain it, and another one that for each hyperedge show the vertices contained on it.

The implementation follow a similar structure to the algorithms in Section $4$, but there are small differences in the implementation with the aim to improve the performance. We also store explicitly the entire set of ancestors and descendants of each hyperedge, as in practice they take limited space without affecting performance.

\subsection{Main Results}

\textbf{Runtime Comparison.} \Fig{runtime} compares the running time of \ditch\ with \baseline\ and \baselineadv. Algorithms that exceed the 9GB memory limit are not shown. As seen in \Fig{runtime}, \baselineadv\ is always faster than \baseline (the y-axis is in logscale).
\ditch\ is 10-100 times faster than the state-of-the-art \baselineadv.
For example, in the Th-ubuntu dataset, \ditch\ runs in less than 10 seconds,
while \baselineadv\ takes hours.
Note that \baselineadv\ only computes closed patterns, while both \baseline\ and \ditch\ compute both open and closed patterns. 

We observe that \ditch\ finishes in under 30 minutes for {9 out of the 10}.
The only exception 
is {Tg-ubuntu}, a graph for which no previous exact counting algorithm had 
ever finished. With \ditch, we are able to compute the exact counts in about 
{10 hours}, obtaining exact counts for the first time. We do not include the results for Tg-math, as the runtime went beyond 24 hours.

\textbf{Memory Comparison.}
\Fig{memory} compares the memory usage of the three algorithms. \baseline\ exceeds the 9GB limit on 6 out of 10 datasets, while \baselineadv\ exceeds the limit on 3 datasets. In contrast, \ditch\ stays within the memory limit for all datasets. We also see a clear trend: \baseline\ uses more memory than \baselineadv, and \baselineadv\ in turn uses more memory than \ditch. We observe that \ditch\ achieves at least a {10x} reduction in memory usage and, in some cases, over {1000x}.

For {8 out of 10 datasets}, \ditch\ stays {under 1GB} memory.
This is a striking improvement over prior exact 
methods, which required several gigabytes even on smaller graphs. Importantly, 
for {Tg-ubuntu}, a dataset on which no previous exact counting algorithm 
could run even under memory limits much higher than 9GB, \ditch\ completes 
successfully while using only about {4GB}. 

\textbf{Curtailing the Tail.} In \Fig{deg-dis}, we compare (in log-log plots) the degree distribution
with the outdegree distribution for four datasets. Specifically, the x-axis is the (out)degree,
and the y-axis is the number of vertices with that degree. We see the power of the orientation.
The frequency of high outdegree vertices is significantly smaller than that of high degree
vertices. Since neighborhood size is the main bottleneck for efficiency, this demonstrates
how the orientation effectively cuts these sizes down. In \Tab{summary-datasets},
we see a comparison of the maximum degree vs the T-degeneracy. While the difference is not
as dramatic as the tails, we consistently see a reduction by a factor of 2-5.
Since all hypertriangle algorithms pays at least quadratic in the neighborhood size,
these reductions significantly lower the running time.

\subsection{Runtime Analysis}

\begin{figure}
\centering
\includegraphics[width=0.95\linewidth]{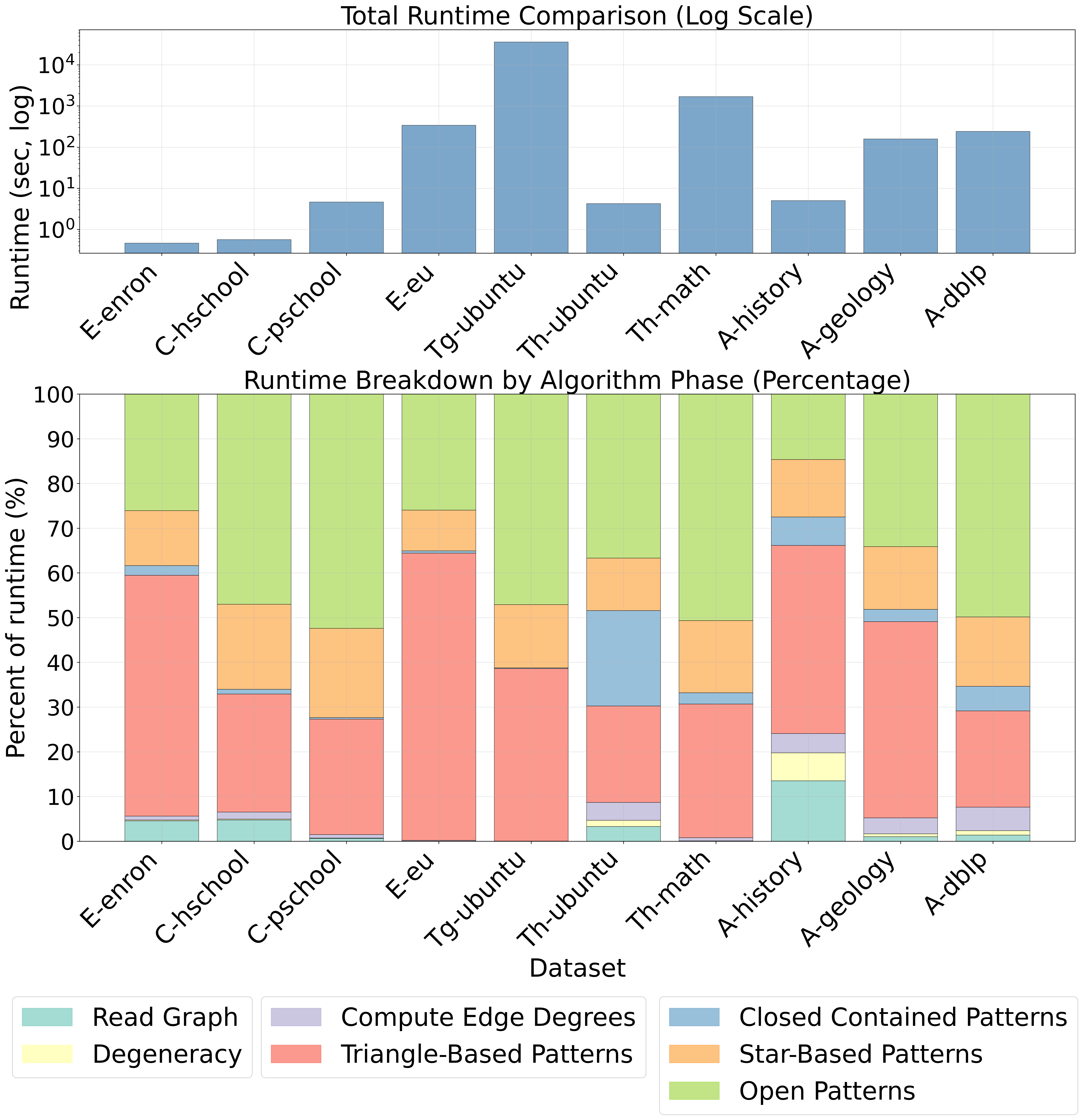}
\caption{Breakdown of runtime of \ditch}
\label{fig:runtime-breakdown}
\end{figure}

{\bf Runtime breakdown.}
\Fig{runtime-breakdown} summarizes the runtime of \ditch\ across all datasets. The top panel reports the total runtime, while the bottom gives the percentage contribution of each phase of \ditch. 

The degeneracy and compute edge degrees phases take substantially less time than the remaining phases. So the preprocessing computations of \Sec{degen} are negligible, consistent
with the linear time bounds obtained in Lemmas~\ref{lem:mb} and \ref{lem:deg} and Corollary~\ref{cor:hyp-deg}.

Typically, the {triangle-based patterns} take a large fraction of the time,
and the various algorithms (containment, star) of \Sec{contain} and \Sec{star} are quicker.
We note that open pattern counting is also expensive, and takes anywhere from
a third to half the time. Note that \ditch\ is compared with the best methods
that do \emph{not} compute open patterns, showcasing the significant benefits 
of the orientation techniques.

\begin{figure}
\centering
\includegraphics[width=0.95\linewidth]{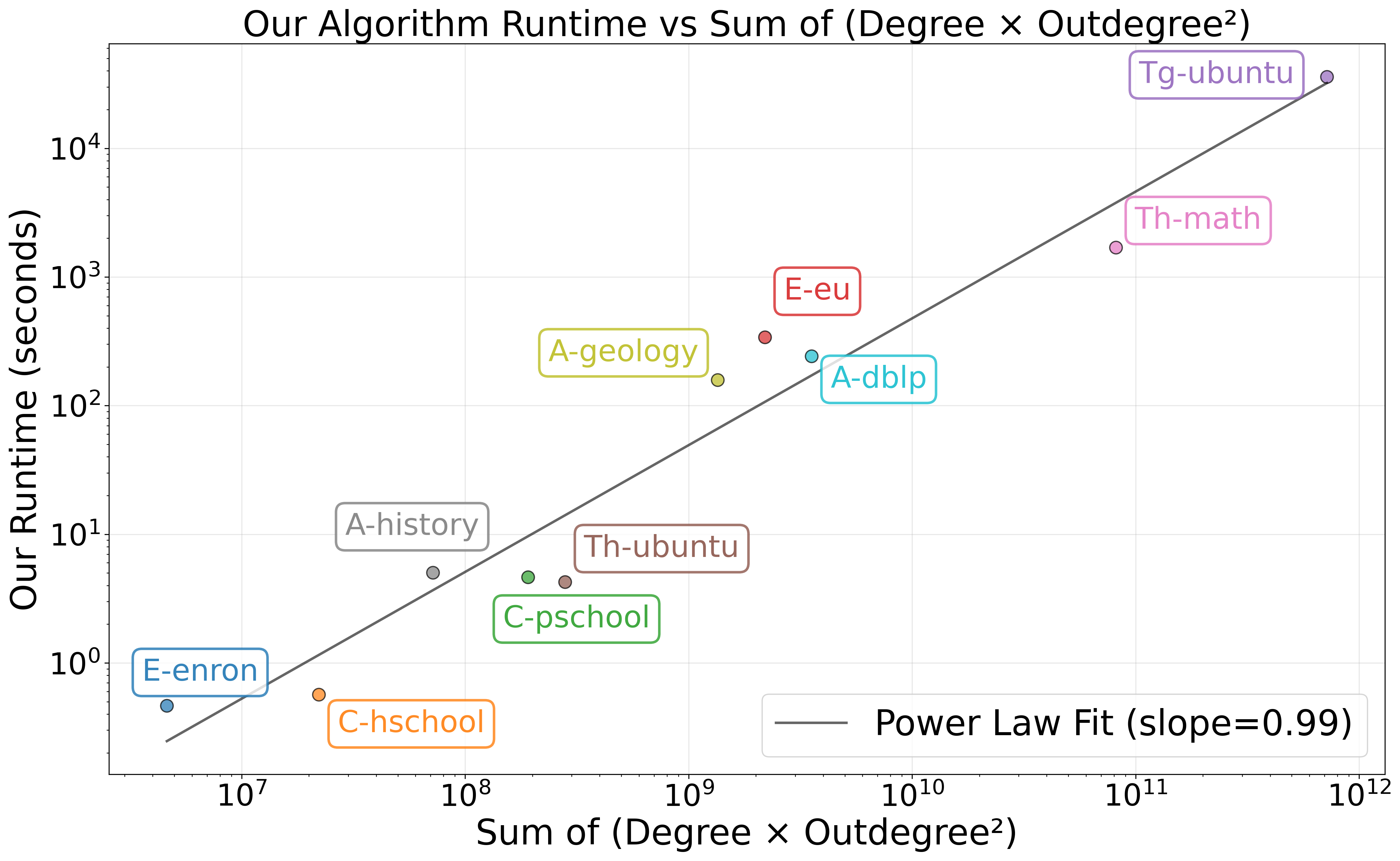}
\caption{Correlation of runtime and $O(\sum_v d_v (d^+_v)^2)$}
\label{fig:correlation}
\end{figure}

{\bf Validating the mathematical running time.} The total theoretical running time of \ditch\
is $O(\sum_v d_v (d^+_v)^2)$. In \Fig{correlation}, we plot this quantity versus the wall
clock running time.
We see a near perfect correlation, showing that our mathematical analysis accurately captures
the actual running time.
\subsection{Pattern Count and \ditch}
Table~\ref{tab:pattern-counts} reports the total number of closed and open patterns for each dataset, along with the percentage contribution of the most frequent closed pattern (P9) and the most frequent open pattern (P26). From Table~\ref{tab:pattern-counts}, we observe that the closed and open pattern groups are comparable in magnitude across all datasets, but in both groups the \emph{distribution is highly skewed}: a small number of patterns account for the vast majority of occurrences. P9 alone contributes over 30\% of all closed patterns in every dataset, and exceeds 70\% in 7 out of 10 datasets.
A similar trend appears in the open group, where P26 overwhelmingly dominates, with over 55\% contribution in all datasets and over 80\% in 7 out of 10 datasets.

This heavy concentration of counts in a small number of patterns aligns with our algorithmic design: as discussed earlier, our method counts many high-frequency patterns using formulas instead of enumerating them explicitly, resulting in substantial runtime gains.

A particularly striking example is the previously intractable Tg-ubuntu dataset. Its high density leads to an enormous total number of patterns; on the order of \emph{trillions}. This massive pattern count makes enumeration based approaches infeasible.

\begin{table}[ht]
\centering
\begin{tabular}{lrrrrr}
\toprule
Dataset & Closed Count & P9\% & Open Count & P26\% \\
\midrule
E-enron & 2,509,330 & 30.8 & 7,696,592 & 69.5 \\
C-hschool & 16,580,005 & 77.4 & 53,178,289 & 56.6 \\
C-pschool & 133,253,436 & 82.6 & 484,270,407 & 64.3 \\
E-eu & 1,434,783,682 & 44.8 & 6,409,149,896 & 86.5 \\
Tg-ubuntu & 1,288,391,437,610 & 78.2 & 3,051,640,275,979 & 98.0 \\
Th-ubuntu & 6,919,513,919 & 98.7 & 4,545,001,044 & 91.7 \\
Th-math & 1,069,446,477,335 & 98.4 & 1,152,606,416,054 & 97.4 \\
A-history & 50,000,497 & 81.8 & 39,815,923 & 83.5 \\
A-geology & 1,384,373,331 & 64.2 & 5,153,483,562 & 93.7 \\
A-dblp & 8,354,454,505 & 83.7 & 18,285,470,142 & 93.5 \\
\bottomrule
\end{tabular}
\caption{Counts of closed and open patterns by dataset, along with \% of most frequent closed (P9) and open (P26) pattern.}
\label{tab:pattern-counts}
\end{table}
\section{Conclusion}

We present \ditch, an efficient algorithm for counting hyper triangles that leverages a novel theoretical framework generalizing degeneracy and orientations to hypergraphs. Through experiments, we show that orientations with small out-degrees enable \ditch\ to achieve 10–100x speedups and use significantly lower memory compared to existing methods. We believe that degeneracy- and orientation-based techniques will play a fundamental role in advancing hypergraph motif counting just as they have in graph algorithms. A concrete next step would be to extend DITCH to count motifs with $4$ or more hyperedges, handle non-binary or count-based intersection conditions on the 7 regions (as in ~\cite{vldb/LeeYKKS24,bigdataconf/NiuAAS24}) and to support vertex-based hypertriangle motifs.

\begin{acks}
 Daniel Paul-Pena, Vaishali Surianarayanan and C. Seshadhri are supported by NSF DMS-2023495, CCF-1740850, 2402572. Deeparnab Chakrabarty is supported by NSF CAREER CCF-2041920 and CCF-2402571. Vaishali Surianarayanan is also supported by the UCSC Chancellor’s Postdoctoral Fellowship.
\end{acks}

\bibliographystyle{ACM-Reference-Format}
\bibliography{hypergraphs,subgraph_counting_doi,motifcount,hypertriangles}

\end{document}